\documentclass[11pt]{article}
\usepackage[utf8]{inputenc}
\usepackage[T1]{fontenc}
\usepackage{amsthm, amsmath}
\usepackage{amsfonts}
\usepackage[margin=1in]{geometry}
\usepackage[algo2e,ruled,noend,resetcount,linesnumbered]{algorithm2e}
\usepackage{graphicx}
\usepackage{comment}
\usepackage{nicematrix}
\usepackage{tikz}
\usepackage[shortlabels]{enumitem}
\usepackage{thmtools}
\usepackage{thm-restate}
\usepackage{algorithm, algorithmicx, algpseudocode} 
\usepackage{array}
\usepackage{stmaryrd}
\usepackage[dvipsnames]{xcolor}
\usepackage[colorlinks,citecolor=blue,linkcolor=blue,urlcolor=red,pagebackref]{hyperref}
\usepackage[capitalise]{cleveref}
\usepackage{xspace}
\usepackage{caption}
\usepackage{subcaption}

\usetikzlibrary{decorations.pathreplacing}
\usetikzlibrary{arrows}
\usetikzlibrary{math}

\theoremstyle{plain}
\newtheorem{theorem}{Theorem}[section]

\newtheorem{proposition}[theorem]{Proposition}
\newtheorem{lemma}[theorem]{Lemma}
\newtheorem{claim}[theorem]{Claim}

\newtheorem{hypothesis}[theorem]{Hypothesis}
\newtheorem{fact}[theorem]{Fact}

\theoremstyle{definition}
\newtheorem{definition}[theorem]{Definition}

\theoremstyle{remark}

\newcommand{\cC}{\mathcal{C}}

\newcommand{\R}{\mathbb{R}}

\newcommand{\kOV}{k\textup{-}\mathsf{OV}}
\newcommand{\kOVf}[1]{#1\textup{-}\mathsf{OV}}
\newcommand{\kSAT}{k\textup{-}\mathsf{SAT}}
\newcommand{\OV}{\textsf{OV}}

\newcommand{\SETH}{\textup{\textsf{SETH}}}

\newcommand{\eps}{\varepsilon}

\newcommand{\set}[1]{\left\{#1\right\}}
\newcommand{\otherwise}{\mathrm{ o/w }}
\newcommand{\given}{\mathrm{~ s.t. ~}}

\newcommand{\tO}[1]{\tilde{O}(#1)}
\newcommand{\poly}{\mathrm{ poly}}

\newcommand{\bigO}[1]{O\left(#1\right)}

\newcommand{\bigOm}[1]{\Omega\left(#1\right)}

\newcommand{\ptime}{\textbf{P}}
\newcommand{\np}{\textbf{NP}}

\newcommand{\ind}[1]{\mathbf{1}\left[#1\right]}

\newcommand{\mm}{\mathsf{MatMul}}

\newcommand{\transpose}{\top}
\newcommand{\T}{\transpose}

\newcommand{\diag}{\mathrm{diag}}

\newcommand{\norm}[1]{\left\lVert#1\right\rVert}

\newcommand{\pderiv}[2]{\frac{\partial #1}{\partial #2}}

\newcommand{\Att}{\mathrm{Att}}

\newcommand{\attn}[2]{\Att^{#2}_{#1}}

\newcommand{\din}{{d_{\textup{in}}}}
\newcommand{\dout}{{d_{\textup{out}}}}
\newcommand{\dhid}{{d_{\textup{hid}}}}

\newcommand{\evec}[1]{e_{#1}}

\newcommand{\sm}{\mathrm{softmax}}
\newcommand{\hm}{\mathrm{hardmax}}
\newcommand{\imax}{I_{\max}}
\newcommand{\hard}{\mathrm{hard}}

\newcommand{\eAC}{{\rm eAC}\xspace}
\newcommand{\AC}{{\rm AC}\xspace}

\def\colorful{1}

\ifnum\colorful=1
\newcommand{\todo}[1]{\textcolor{red}{[TODO: #1]}}
\newcommand{\barna}[1]{\textcolor{red}{[Barna: #1]}}
\newcommand{\chris}[1]{\textcolor{purple}{[Chris: #1]}}
\newcommand{\yinzhan}[1]{\textcolor{orange}{[Yinzhan: #1]}}
\newcommand{\hantao}[1]{\textcolor{cyan}{[Hantao: #1]}}

\else
\newcommand{\todo}[1]{}
\newcommand{\barna}[1]{}
\newcommand{\chris}[1]{}
\newcommand{\yinzhan}[1]{}
\newcommand{\hantao}[1]{}

\fi

\title{On the Computational Hardness of Transformers}

\author{
\begin{tabular}[t]{@{}c@{\hspace{1cm}}c@{}}
\parbox{6.5cm}{\centering
Barna Saha\thanks{Supported by NSF HDR TRIPODS Phase II grant 2217058 (EnCORE Institute).}  \\
University of California San Diego  \\
\texttt{bsaha@ucsd.edu}
}
&
\parbox{6.5cm}{\centering
Yinzhan Xu\footnotemark[1]  \\
University of California San Diego  \\
\texttt{xyzhan@ucsd.edu}
}
\\[1cm]
\parbox{6.5cm}{\centering
Christopher Ye\footnotemark[1]  \\
University of California San Diego  \\
\texttt{czye@ucsd.edu}
}
&
\parbox{6.5cm}{\centering
Hantao Yu\thanks{Supported by NSF grant CCF2008733, ONR grant N00014-22-1-2713, NSF Grant CCF-2238221 and a Columbia Center of Artificial Intelligence Technology grant.}  \\
Columbia University  \\
\texttt{hantao.yu@columbia.edu}
}
\end{tabular}
}
\date{}

\begin{document}

\maketitle

\begin{abstract}
    The transformer architecture has revolutionized modern AI across language, vision, and beyond. It consists of $L$ layers of multi-head attention, where each layer runs $H$ attention heads in parallel and feeds the combined output to the subsequent layer. In attention, each token within an input of length $N$ is represented by an embedding vector of dimension $m$. Computationally, an attention mechanism primarily involves multiplying three $N \times m$ matrices, while applying a softmax operation to the intermediate product of the first two matrices. A significant body of work has been devoted to analyzing the time complexity of attention, leading to several recent advances.

On the other hand, known algorithms for transformers compute each attention head independently. This raises a fundamental question that has recurred throughout theoretical computer science under the guise of ``direct sum'' problems: can multiple instances of the same problem be solved more efficiently than solving each instance separately? Many answers to this question, both positive and negative, have arisen in fields spanning communication complexity and algorithm design. Thus, a key challenge in understanding the computational hardness of transformers is to determine whether their computation can be performed more efficiently than $LH$ independent evaluations of attention.

In this paper, we resolve this question in the negative, and give the first non-trivial computational lower bounds for multi-head multi-layer transformers. In the small embedding regime ($m = N^{o(1)}$), computing $LH$ attention heads separately takes $LHN^{2 + o(1)}$ time. We establish that this is essentially optimal under the Strong Exponential Time Hypothesis (SETH). In the large embedding regime ($m = N$), one can compute $LH$ attention heads separately using $LHN^{\omega + o(1)}$ arithmetic operations (plus exponents), where $\omega$ is the matrix multiplication exponent. We establish that this is optimal, by showing that $LHN^{\omega - o(1)}$ arithmetic operations are necessary when $\omega > 2$. Our lower bound in the large embedding regime relies on a novel application of the Baur-Strassen theorem, a powerful algorithmic tool underpinning the famous backpropagation algorithm.

\end{abstract}

\newpage

\section{Introduction}
\label{sec: intro}

The Transformer architecture \cite{DBLP:conf/nips/VaswaniSPUJGKP17} is used for a wide range of tasks such as natural language processing~\cite{radford2019language} and computer vision \cite{dosovitskiy2020image, kirillov2023segment}, and has achieved state-of-the-art performance for numerous applications. The crucial building block of transformers is the attention head, which captures  pairwise relationships between input tokens and thus is desirable for processing language inputs. Concretely, given input $X \in \mathbb{R}^{N \times m}$, the attention head consists of query, key and value embedding maps $Q,K,V: \mathbb{R}^{N \times m} \rightarrow \mathbb{R}^{N \times m}$, where $m$ is known as the embedding dimension, and computes\[
\Att_{Q,K,V}(X) = \sm(Q(X)K(X)^{\top})V(X) \in \mathbb{R}^{N \times m},
\] where the softmax operator is defined as
\[
\sm(v) = \frac{1}{\sum_{i=1}^{N}\exp(v_i)}\cdot(\exp(v_1),\ldots,\exp(v_N))
\] and is applied row-wise on matrices. A transformer consists of $L$ attention layers, where each attention layer consists of $H$ parallel attention heads that takes the output $X$ of the previous layer as input, sends it to all attention heads, and aggregate\footnote{Attention outputs can either be aggregated by sum or by concatenation. In \Cref{app:head-aggregation}, we reduce summation aggregation to concatenation aggregation. Thus, we may consider summation aggregation without loss of generality.} the outputs of all attention heads and $X$ before finally applying a row-wise multi-layer perceptron (MLP) function.

While the transformer is undoubtedly powerful, it is also computationally expensive, with attention complexity scaling quadratically with respect to input length $N$. 
The trivial algorithm for attention requires $O(N^2m)$ time,\footnote{Following practical values, we always assume $m \leq O(N)$.} and the trivial algorithm for transformer requires $O(LHN^2m)$ time, even before accounting for the time required to compute embedding and MLP maps.\footnote{The time needed to compute MLPs eventually depends on its hidden dimension $\dhid$. For inputs in $\R^m$, the time for computing MLP in practice should roughly be $O(m\dhid)$. 
In this paper, we obtain lower bounds for transformers that do not have MLPs and whose embedding maps can be computed in linear time $O(Nm)$. 
In \Cref{app:mlp-examples}, we describe how to generalize our lower bounds to transformers with popular MLPs.
Note that this only strengthens the lower bound, as these are specific instances of transformers.}
It has been proved that $N^{2 - o(1)}$ time is necessary for computing attention when $m = \Omega(\log N)$ assuming the Strong Exponential Time Hypothesis (\SETH) \cite{keles2023computational, alman2023fastattentionboundedentries}, while the trivial algorithm of computing matrix products explicitly is optimal for all $m$ under a generalization of the $\OV$ Hypothesis \cite{gupta2026subquadratic}.
Due to the lack of fast attention algorithms, numerous works have tried to address this bottleneck by proposing subquadratic alternatives to attention mechanism \cite{DBLP:conf/iclr/KitaevKL20, DBLP:conf/iclr/ChoromanskiLDSG21, DBLP:conf/nips/ZaheerGDAAOPRWY20, liu2025fastattentionmechanismstale}. 
However, all these subquadratic alternatives suffer from accuracy drops as the model scales up, and it has been proved that no subquadratic alternative can capture all pairwise relationship up to some approximation~\cite{DBLP:conf/iclr/AlmanY25}. 
Consequently, our work is primarily concerned with the following question. 

\begin{center}
    {\bf Question 1:} {\it How efficiently can a standard transformer model be computed?}
\end{center}  

While \cite{keles2023computational, alman2023fastattentionboundedentries, gupta2026subquadratic} address this problem for a single attention head, we are instead concerned with transformers, which consist of many layers of multi-head attention.
Proving lower bounds against transformers  
has been a critical challenge.
Indeed, transformers appear to be significantly more complex  than attention heads. Even while considering expressivity (i.e., which functions can be computed), there are relatively plentiful techniques to prove lower bounds against attention or single layer transformers (see e.g. one-way communication \cite{DBLP:conf/nips/SanfordHT23} and split-VC dimension~\cite{kozachinskiy2025strassenattentionsplitvc}), but only a recent breakthrough established strong lower bounds against decoder-only transformers with many layers \cite{chen2024theoretical}.

Understanding the complexity of transformers is further motivated by known lower bounds on specific tasks that we hope transformers can handle.
There are many simple, fundamental problems that require large  transformers.
For example, it is conjectured that any transformer that is capable of computing 3-SUM requires polynomial size (i.e. $mHL = N^{\bigOm{1}}$)~\cite{DBLP:conf/nips/SanfordHT23, DBLP:conf/icml/Sanford0T24}.
Furthermore, shallow decoder transformers cannot compute function composition without polynomial width \cite{chen2024theoretical}.
If we nevertheless want to use transformers to compute such functions (e.g., 3-SUM naturally captures interactions between triples of input tokens), it is crucial that large transformers can be computed as efficiently as possible.

Beyond its obvious practical motivation, studying the complexity of transformers is also an interesting theoretical question on its own. Understanding the complexity of transformers is essentially understanding whether we can compute many attention heads faster than computing them one by one. 
One can ask such questions more generally:
\begin{center}
    {\bf Question 2:} {\it Can the cost of solving many copies of the same problem simultaneously be amortized more efficiently than solving each copy separately?}
\end{center}  
This is a recurring question in theoretical computer science, typically known as the ``direct sum'' problem, and there are numerous positive and negative examples to this question. 

On the positive side, a famous result in communication complexity gives a partial function with complexity $\Theta(\log n)$ but constant $O(1)$ amortized complexity when computing many instances of the function at the same time \cite{feder1995amortized}.
Recently, a breakthrough obtained a similar result for total functions \cite{mackenzie2025refuting}.
A degree $n$ polynomial can be evaluated on $n$ points in $\tO{n}$ time \cite{fiduccia1972polynomial} in contrast to the naive $\tO{n^2}$ algorithm. 
Similarly, the product of an $n \times n$ matrix with $n$ distinct vectors can be computed in $O(n^{2.372})$ time \cite{DBLP:conf/soda/AlmanDWXXZ25} in contrast to $O(n^3)$ which is required to compute them all separately.
Moreover, to compute a transformer we do not need to compute each attention head explicitly. 
For example, the output of a layer is the \emph{sum} of $H$ attention heads.
For matrix products, given $n$ pairs of $n \times n$ matrices $\set{A_{i}, B_{i}}$, the sum of products, $\sum_{i} A_{i} B_{i}$ can be computed in $O(n^{3.251})$ time, much faster than computing each product separately, which takes $O(n^{3.372})$ time \cite{DBLP:conf/soda/AlmanDWXXZ25}.

There is an equal abundance of results on the negative side.
For example, monotone functions with independent inputs cannot be computed more efficiently than computing each separately~\cite{galbiati1981complexity}.
Similarly, \cite{bshouty1989extended} give a direct sum theorem for certain classes of quadratic systems. 
\cite{blais2019optimal} also obtain a strong direct sum theorem for randomized query complexity.
\cite{molinaro2013beating} use strong direct sum theorems to obtain distributed and streaming lower bounds for various sketching and estimation tasks.
In contrast to the matrix-vector multiplication problem, it is widely hypothesized that the $(\min, +)$-product of an $n \times n$ matrix with $n$ vectors cannot be computed in strongly subcubic time \cite{williams2018subcubic}.
Question 2 falls naturally within this long line of research: asking whether transformers admit a direct sum theorem.

In this work, we essentially resolve Question 1 (or Question 2 in the case of attention and transformer) in the negative: we give the first non-trivial lower bounds for multi-head multi-layer transformer  computation and show that the naive algorithm of computing each attention head separately is essentially optimal.

\subsection{Our Contributions}

We give stronger lower bounds for both the small embedding dimension ($m = N^{o(1)}$) case and the large embedding dimension ($m = N$) case.
In the small embedding dimension case, the naive algorithm runs in $LHN^{2 + o(1)}$ time while the previously known conditional lower bounds on attention~\cite{alman2023fastattentionboundedentries, gupta2026subquadratic} rule out algorithms in $LHN^{1 - o(1)} + N^{2 - o(1)}$ time.
We show a significantly improved conditional lower bound, matching the naive algorithm. 

\begin{theorem}[Informal \Cref{thm:small-embedding-formal}]
    \label{thm:small-embedding}
    Under the $\kOVf{3}$ Hypothesis or \SETH, any algorithm computing $L$ layer, $H$ head transformers with summation aggregation and embedding dimension $m = \bigOm{\log N}$ requires
    \begin{equation*}
        LHN^{2 - o(1)} 
    \end{equation*}
    time.
    The lower bound holds even if the transformer has no MLPs.
\end{theorem}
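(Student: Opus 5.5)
We reduce from $\kOVf{3}$: given three sets $A,B,C\subseteq\{0,1\}^d$ with $d=c\log n$, decide whether some triple $(a,b,c)$ has $\sum_{\ell} a_\ell b_\ell c_\ell=0$. Under the $\kOVf{3}$ Hypothesis (implied by \SETH), this needs $n^{3-o(1)}$ time. The idea is to choose sizes so that a single transformer evaluation encodes the whole instance. We take $|A|=|B|=N$ and $|C|=LH$; we may assume $LH\le N^{O(1)}$, and we handle unbalanced set sizes through the standard unbalanced version of the hypothesis. The unbalanced version says that $\kOVf{3}$ with sizes $N,N,LH$ needs $LHN^{2-o(1)}$ time. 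This follows from \SETH\ by splitting the variables of a CNF formula into three groups of proportional sizes. The reduction then assigns one attention head to each $c\in C$. Within that head, the query and key maps have access to $a$ and $b$, and the head detects a pair $(a,b)$ orthogonal with respect to the coordinates selected by $c$.

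The per-head gadget follows the single-head \SETH\ lower bound \cite{keles2023computational, alman2023fastattentionboundedentries}. The input $X$ has $2N$ rows, one token per vector of $A$ and one per vector of $B$, each carrying its vector in $O(d)$ coordinates. For the head indexed by $c$, we let $Q(X)$ map an $a$-row to $\lambda\cdot(a\circ c)$ and $K(X)$ map a $b$-row to $-b$, with a separate flag coordinate that makes $a$--$a$ and $b$--$b$ scores very negative. Both maps are linear and entrywise, so they are computable in $O(Nm)$ time with $m=O(\log N)$. The score $\lambda(-\langle a\circ c,b\rangle)$ is then $0$ for orthogonal triples and at most $-\lambda$ otherwise. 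We take $\lambda=\Theta(\log N)$, larger than the scale of everything else. The softmax row of $a$ then puts at least roughly $1/N$ of its mass on orthogonal $b$'s, and if none exist the total mass outside the normalizing coordinate is at most $N^{-\Omega(1)}$. To make this readable, we append a constant dummy key (part of the $b$-block or a flag) with score $0$, so each row's denominator is at least $1$. The value $V$ is $1$ on real $b$-tokens and $0$ on the dummy. Head $c$ then outputs, on row $a$, a quantity that is $\ge 1/(N+1)$ if an orthogonal $b$ exists and $\le N\cdot e^{-\lambda}\ll 1/N$ otherwise.

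Next we aggregate the heads. We use $H$ heads per layer and $L$ layers. With summation aggregation, the heads of a layer add into one output coordinate. The residual connection carries $X$ forward unchanged, and it also carries a running sum in a reserved coordinate. The value map of every layer writes only into that coordinate, and later layers' $Q,K$ ignore it, reading only the original vector coordinates. After $L$ layers, row $a$ holds $\sum_{c} \mathrm{head}_c(a)$. This is $\ge 1/(N+1)$ if some triple involving $a$ is orthogonal and $\le LHNe^{-\lambda}$ otherwise, so choosing $\lambda=\Theta(\log(LHN))$ separates the two cases. Scanning the $N$ rows decides the instance. The reduction uses $O(LHNm)$ time to write down the embedding maps, so a transformer algorithm running in $LHN^{2-\eps}$ time would refute the hypothesis.

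I expect the main obstacle to be numerical, and precision-free correctness more broadly. We must ensure that the entries of the weights stay polynomially bounded, so that $O(\log N)$-bit arithmetic suffices. The layers must also not interfere: the accumulated sum must not change later attention scores, which is why $Q,K$ must not read the reserved coordinate. The summation aggregation together with the residual must realize exactly the additive accumulation described above, and this needs checking against the paper's formal definition of the transformer. My first approach would be to fix disjoint coordinate blocks, so that each layer's $Q,K$ read only the input block and each $V$ writes only to the output block. This makes the layers commute and turns the analysis into a direct sum of independent single-head gadgets.
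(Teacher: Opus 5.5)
Your proposal is correct and is essentially the paper's reduction: unbalanced $\kOVf{3}$ with $|C| = LH$, one head per $c \in C$ with $c$ built into the query/key weights, a dummy token of score $0$ in every row, and the residual stream accumulating head outputs in a reserved column that no later $Q,K,V$ map reads. The differences are cosmetic: the paper places $a_i$ and $b_i$ in the same token, so no type flags are needed and the input length is $N+1$. It analyzes hardmax heads with value $1$ on real tokens and $2$ on the dummy, then invokes the softmax-simulates-hardmax lemma. It decides by summing the whole output column rather than scanning rows. Your yes-case bound of $1/(N+1)$ is loose (the orthogonal mass is at least about $1/2$) but harmless.
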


The $\kOVf{3}$ Hypothesis states that finding a triplet of orthogonal vectors among a set of $n$ vectors from $\{0, 1\}^{\Theta(\log n)}$ requires $n^{3 - o(1)}$ time.
The $\kOVf{3}$ Hypothesis is a special case of the $\kOV$ Hypothesis, which has been used in many lower bounds e.g.~\cite{backurs2018towards, li2021settling, bonnet20224, alman_et_al:LIPIcs.ITCS.2024.4, dalirrooyfard2025hardness}.
Furthermore, it is a consequence of the Strong Exponential Time Hypothesis (\SETH)~\cite{williams2004ovc}, a strengthening of $\ptime \neq \np$ and a popular conjecture in fine-grained complexity that states that there is no $2^{(1 - \eps)n}$ algorithm for satisfiability on $n$ variables \cite{impagliazzo2001seth} for any constant $\eps > 0$.
\SETH~implies essentially optimal lower bounds for problems including edit distance \cite{backurs2015edit}, graph diameter \cite{roditty2013fast}, dynamic programming problems \cite{alman_et_al:LIPIcs.ITCS.2024.4}, approximate closest pair problems \cite{abboud2017distributed}, subset sum~\cite{DBLP:conf/soda/AbboudBHS19} and many other problems.

Even though many previous theoretical works model transformer embedding dimension to be roughly $O(\log N)$, recent work also considers the scenario where embedding dimension is close to $O(n)$. 
\cite{yehudai2025depthwidthtradeoffsalgorithmicreasoning} finds that for many practical graph datasets, the number of graphs (input length) could be well smaller than the embedding dimension of the transformers that are used in practice. For larger embedding dimension $m$ the bound in \Cref{thm:small-embedding} does not get better. 
For example, when the embedding dimension $m$ is the same as the input length (i.e. $m = N$), the best running time becomes worse: the naive upper bound is $O(LHN^{2}m) = O(LHN^3)$, and with fast matrix multiplication, the upper bound can be improved to $LHN^{\omega + o(1)}$, where $\omega$ denotes the matrix multiplication exponent.
An alternative lower bound can be obtained by using the existing conditional lower bound of $N^{\omega - o(1)}$ for a single attention computation \cite{gupta2026subquadratic}. 
Nevertheless, there remains a large gap between the best lower bound $\max\{LH N^{2-o(1)}, N^{\omega - o(1)}\}$ and the best upper bound $LH N^{\omega + o(1)}$, and the time complexity of transformer in this setting remains poorly understood.

In this work, we show that computing transformers is as hard as  computing $\Theta(LH)$ instances of matrix multiplication, thus obtaining a matching $LHN^{\omega - o(1)}$ lower bound without any additional hypotheses, as long as $\omega > 2$.

Our result is shown in an extended version of the arithmetic circuit model, which is a generalization of the well-studied arithmetic circuit model (see e.g., \cite{burgisser2013algebraic} for a book on arithmetic circuits). In the standard arithmetic circuit model, an algorithm only uses standard arithmetic operations $\set{+, -, \times, /}$.

Since attention computation involves taking exponents, it is not clear how to compute attention at all using a standard arithmetic circuit, i.e. a standard arithmetic circuit cannot perform attention computation as simulating exponential function could already be hard enough. 
Thus, it is natural to also allow exponential gates and logarithmic gates in the arithmetic circuits we consider, and we call them \emph{extended arithmetic circuits} (eACs). We remark that lower bounds against extended arithmetic circuits are stronger than lower bounds against arithmetic circuits, as any arithmetic circuit is an extended arithmetic circuit.
Furthermore, all known algorithms for transformer computation exist in the extended arithmetic circuit model.

With fast matrix multiplication, it is easy to obtain an extended arithmetic circuit of size $LHN^{\omega+o(1)}$ that simulates $H$-head $L$-layer transformer inference with embedding dimension $N$.
We show that this is essentially optimal.
In particular, any extended arithmetic circuit computing transformers requires size $LHN^{\omega - o(1)}$.

\begin{restatable}[Informal \Cref{thm:linear-embedding-formal}]{theorem}{LinearEmbeddingFormal}
    \label{thm:linear-embedding}
    Any extended arithmetic circuit computing $L$-layer $H$-head transformers with summation aggregation and embedding dimension $m = \Omega(N)$ has size at least
    \begin{equation*}
        L H N^{\omega - o(1)}
    \end{equation*} 
    when $\omega > 2$. 
    The lower bound holds even if the transformer has no MLPs.
\end{restatable}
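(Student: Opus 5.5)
The plan is to reduce the computation of many independent matrix products to a single transformer evaluation, and then use the Baur--Strassen theorem to turn a small circuit for the transformer into a small circuit for all those products.

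\textbf{Step 1: the target family of products.} We aim for $\Theta(LH)$ independent $N\times N$ products. The ultimate source of the lower bound is a direct-sum-type fact for arithmetic circuits. Computing $k$ independent $n\times n$ matrix products needs size $k\,n^{\omega-o(1)}$ when $\omega>2$. This should follow from the definition of $\omega$ as an infimum over exponents. If $k$ products could be computed with size $k n^{\omega-\eps}$, then a block-recursive argument would give an $n^{\omega-\eps'}$ circuit for a single product, a contradiction. Concretely, a large $n\times n$ product is split into $k$-tuples of smaller products, as in Sch\"onhage-style amortization; this is where $\omega>2$ enters. It is safer to phrase this via the sum-of-products form: computing $\sum_{i}A_iB_i$ for $k$ pairs is as hard as one product of size $n\times kn$ times $kn\times n$. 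The reduction should target exactly such a scalar functional. Its output, $\mathrm{tr}\!\left(\sum_i A_iB_iC_i\right)$, is a single scalar.

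\textbf{Step 2: embedding the products in attention.} We construct a transformer whose attention scores, in the small-entry regime, behave linearly. Scale $Q(X)$ by a tiny parameter $\delta$, so that $\exp(\delta QK^\top)=J+\delta QK^\top+O(\delta^2)$. Each head then outputs, to first order, $\frac{1}{N}(J + \delta(QK^\top - \text{row means}))V$. This involves the product $QK^\top V$, a trilinear form in matrices that the embedding maps can read off from disjoint blocks of the input $X$ (dimension $m=\Theta(N)$). Each head $h$ in each layer $\ell$ gets its own triple $(A_{\ell h},B_{\ell h},C_{\ell h})$, routed through linear embedding maps. Summation aggregation adds the heads, and the residual connection carries the raw input forward so that later layers can still access fresh blocks. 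The final scalar output is some linear readout, which equals $\delta\sum_{\ell,h}\mathrm{tr}(A_{\ell h}B_{\ell h}C_{\ell h})$ plus lower-order terms. Taking the derivative with respect to $\delta$ at $0$ (equivalently, working symbolically) isolates the trilinear sum. The inner layers' perturbations enter only at order $\delta^2$, so the first-order term is exactly the sum across all $LH$ heads.

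\textbf{Step 3: Baur--Strassen.} An eAC of size $s$ computing the transformer yields an eAC of size $O(s)$ computing the scalar $F=\sum\mathrm{tr}(A B C)$. Differentiating in $\delta$ costs only constant-factor overhead by forward-mode differentiation. Baur--Strassen, extended to $\exp$/$\log$ gates since these have cheap derivatives, then gives all partial derivatives $\partial F/\partial C_{\ell h}=(A_{\ell h}B_{\ell h})^\top$ in size $O(s)$. This computes $LH$ independent $N\times N$ products, so Step 1 forces $s\ge LHN^{\omega-o(1)}$. A final technicality is that the eAC model's matrix-multiplication lower bound must tolerate $\exp/\log$ gates. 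I would handle this by evaluating at a generic point and replacing these gates by truncated power series, or by noting that $\omega$ defined over eACs coincides with the usual one via a standard homogenization/Taylor argument.

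\textbf{Main obstacle.} I expect the main difficulty to be Step 2's isolation of the first-order term without blowup across $L$ layers. Cross-layer interactions must genuinely vanish at first order, and the residual/aggregation structure must keep each layer's inputs independent. A second difficulty is justifying the direct-sum lower bound for matrix multiplication in the eAC model. There I would first try the sum-of-products reformulation, which reduces everything to a single rectangular product whose complexity is governed by $\omega$.
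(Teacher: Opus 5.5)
Your overall pipeline matches the paper's. You encode the heads so that derivatives of one scalar expose $LH$ independent products, apply the extended Baur--Strassen theorem, strip $\exp/\ln$ gates by truncated power series, and invoke a direct-sum bound. But Step~1 contains a real error. The sum-of-products reformulation, which you call safer and would try first, cannot give the bound. $\sum_i A_iB_i$ is a single $\langle n,kn,n\rangle$ product, which can be far cheaper than $k$ separate products; for $k=n$ the paper quotes $O(n^{3.251})$ versus $O(n^{3.372})$. This collapse under summation is precisely what the construction must defeat. The scalar $\sum_i \mathrm{tr}(A_iB_iC_i)$ with \emph{independent} $C_i$ is the trilinear form of the direct sum $k\odot\langle n,n,n\rangle$, not of a sum of products, and that is what your Step~3 actually delivers. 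The lemma you need is the asymptotic sum inequality for $k\odot\langle n,n,n\rangle$. Your block recursion is its equal-copies case, but it only runs on bilinear algorithms, so you first need Strassen's bound that rank is at most twice circuit size, as in Lemma~\ref{lem:independent-product-lb}. Also, $\omega>2$ plays no role in that lemma. It is needed only because the reduction itself costs an additive $O(LHN^2)$, so what you get is $s\ge LHN^{\omega-o(1)}-O(LHN^2)$.

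Step~2 is a genuinely different and workable route. The paper injects fresh variables directly into the scores via $\tilde A_k=(A_k\ C_k)$, $\tilde B_k=(B_k\ I)$ and reads $\exp((A_kB_k^{\T})_{ij})/(1+S_{ki})^2$ off one gradient. It recovers $1/(1+S_{ki})$ from a second gradient using the $D_k$ entries placed in the extra token's row of $V$, then takes $\ln$. You linearize at $\delta=0$, where softmax is uniform, so there is no normalizer to undo and no $\ln$. The price is a mixed derivative: forward mode in $\delta$, then reverse mode in $C$, still $O(s)$. Three details must be repaired.

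(i) The $3LHN^2$ matrix entries cannot occupy disjoint blocks of $X$, which has only $O(N^2)$ entries. They must be carried by the embedding weights, which the circuit receives as inputs (Definition~\ref{def:circuit-compute-tran}).

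(ii) The row-mean term is first order, not lower order. Take scores $\delta M_k$ with $M_k=A_kB_k^{\T}$, $V=C_k$, and a trace readout. Then the $C_k$-gradient is $\frac{1}{N}(M_k^{\T}-\frac{1}{N}1_{N\times 1}(M_k 1_{N\times 1})^{\T})$. You must correct it using $M_k 1_{N\times 1}=A_k(B_k^{\T} 1_{N\times 1})$, at cost $O(N^2)$ per head.

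(iii) Your claim that earlier layers enter only at order $\delta^2$ is false if later embeddings read columns that heads write to. The zeroth-order outputs $\frac{1}{N}1_{N\times N}V$ then shift later $V$'s and leak into the first-order term. As in the paper, let every embedding read only a fixed block of $X$ (e.g.\ an identity block, so $XW_Q=\delta A_{\ell h}$) and let $V$ write elsewhere. The residual keeps that block constant, heads never interact at any order, and your ``main obstacle'' disappears.
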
 

We remark that when $\omega = 2$ (i.e. transformers with $m = O(N)$ can be computed in $LHN^{2 + o(1)}$ time), we can instead apply \Cref{thm:small-embedding} to obtain a matching $LHN^{2 - o(1)}$ lower bound. For extension of our results to concatenation aggregation, see \Cref{app:head-aggregation}. 

Our lower bound on transformers shows that when embedding dimension is large, no significant improvement is possible outside fast matrix multiplication for computing transformers in the extended arithmetic circuit model. 
We provide a new way of proving lower bounds for multilayer transformer, which has been a challenging direction in this literature.

\subsection{Technical Overview}

In this section, we give an overview of our techniques.

\paragraph{Small Embedding Dimension.}

We begin with our lower bound in the small embedding dimension regime $m = N^{o(1)}$.
Consider a $\kOVf{3}$ instance $A = \{a_1, \ldots, a_N\}, B = \{b_1, \ldots, b_N\}, C = \{c_1, \ldots, c_{LH}\} \subset \set{0, 1}^{d}$ for some $d = \Theta(\log n)$. 
Note that the three sets have different sizes, and we call it an unbalanced $\kOVf{3}$ instance. 
It is known that under the standard $\kOVf{3}$ Hypothesis, such instances require $LHN^{2-o(1)}$ time. 
We construct a transformer $T_{C}$ with $L$ layers, $H$ heads per layer, and embedding dimension $O(d)$ such that $T_{C}(A, B)$ (with some simple postprocessing) determines whether $(A, B, C)$ contains an orthogonal triple.
First, we specify the input to the transformer:
\begin{equation*}
    X = \begin{pmatrix}
        a_{1}^{\T} & b_{1}^{\T} & 1 \\
        \vdots & \vdots & \vdots \\
        a_{N}^{\T} & b_{N}^{\T} & 1 \\
        0_{1 \times d} & 0_{1 \times d} & 2
    \end{pmatrix} \text{.}
\end{equation*}
Except the last token, each token consists of one vector from $A$, one vector from $B$, and the constant $1$. 
We now describe $T_{C}$ which has input dimension $2d + 1$.
$T_{C}$ will consist of $L$ layers and $H$ heads per layer and embedding dimension $m = 2d + 2$.
For every head, we set $V$ to be $0$ in the first $2d + 1$ columns, so that the output of the $\ell$-th layer, denoted as $X^{(\ell)}$, contains $X = X^{(0)}$ in the first $2d + 1$ columns.
In particular, we may assume that the input to every attention head contains $X$ in the first $2d + 1$ columns.

We now describe the embedding map of each head.
Since $|C| = LH$, we can index $C = \set{c_{h, \ell}}$ and identify each vector in $C$ with one head in the transformer.
Consider the $h$-th head in the $\ell$-th layer.
We define embedding maps $q_{h, \ell}, k_{h, \ell}, v_{h, \ell}: \R^{m} \rightarrow \R^{m}$ as follows.
Note that since our maps do not depend on the $m$-th coordinate, the following maps remain well defined for the first layer which is a map $\R^{m - 1} \rightarrow \R^{m}$.
\begin{align*}
    q_{h, \ell}&: u \mapsto - (u_{1}, \dots, u_{d}, 0, \dots, 0), \\
    k_{h, \ell}&: u \mapsto (u_{d + 1} \cdot c_{h, \ell}[1], \dots, u_{2d} \cdot c_{h, \ell}[d], 0, \dots, 0), \\
    v_{h, \ell}&: u \mapsto (0, \dots, 0, u_{2d + 1}) \text{.}
\end{align*}
Note that we can implement the above maps with appropriate weight matrices $W_{Q, h, \ell}, W_{K, h, \ell}, W_{V, h, \ell} \in \R^{m \times m}$ for all $h, \ell$.
Applying the above map to all $N$ tokens, we obtain
\begin{equation*}
    Q_{h, \ell}(X) = \begin{pmatrix}
        - a_{1}^{\T} & 0_{1 \times (d + 2)} \\
        \vdots & \vdots \\
        - a_{N}^{\T} & 0_{1 \times (d + 2)} \\
        0_{1 \times d} & 0_{1 \times (d + 2)}
    \end{pmatrix}, \quad
    K_{h, \ell}(X) = \begin{pmatrix}
        b_{1}^{\T} \odot c_{h, \ell} & 0_{1 \times (d + 2)} \\
        \vdots & \vdots \\
        b_{N}^{\T} \odot c_{h, \ell} & 0_{1 \times (d + 2)} \\
        0_{1 \times d} & 0_{1 \times (d + 2)}
    \end{pmatrix}, \quad 
    V_{h, \ell}(X) = \begin{pmatrix}
        0_{1 \times (2d + 1)} & 1 \\
        \vdots & \vdots \\
        0_{1 \times (2d + 1)} & 1 \\
        0_{1 \times (2d + 1)} & 2
    \end{pmatrix}
\end{equation*}
where $\odot$ denotes the entry-wise product of two vectors and $0_{a \times b}$ (resp. $1_{a \times b}$) denotes an $a \times b$ submatrix filled with constant $0$ (resp. $1$).
Thus, the output of the $h$-th head in the $\ell$-th layer is $\sm(Q_{h, \ell}(X) K_{h, \ell}(X)^{\T})V_{h, \ell}(X)$.
To simplify the exposition slightly, we instead consider hardmax attention.
The $\hm$ function maps a vector $v$ to a uniform distribution over its maximum elements.
That is, for a vector $v$ with $\imax(v) = \set{i \given v_{i} = \max_j v_{j}}$, $\hm(v)_{i} = \frac{1}{|\imax(v)|}$ if $i \in \imax(v)$ and $\hm(v)_{i} = 0$ otherwise.
A hardmax attention head then outputs $Z_{h, \ell} := \hm(Q_{h, \ell}(X) K_{h, \ell}(X)^{\T})V_{h, \ell}(X)$ where $\hm$ is applied row-wise.
It is known that hardmax attention can be approximated with softmax attention \cite{DBLP:conf/icml/Sanford0T24}, so it is essentially without loss of generality to consider hardmax attention. 
In fact, by scaling the entries of $Q_{h, \ell}$ and $K_{h, \ell}$ by an appropriate factor on the order of $O(\log NHL)$, we can ensure that each attention head outputs $Y_{h, \ell}$ such that
\begin{equation}
    \label{eq:softmax-error-overview}
    \norm{Y_{h, \ell} - \hm(Q_{h, \ell}(X) K_{h, \ell}(X)^{\T})V_{h, \ell}(X)}_{\infty} \ll \frac{1}{\poly(NHL)} \text{.}
\end{equation}

Then, denoting $Z_{h, \ell}$ as the last column of $\hm(Q_{h, \ell}(X) K_{h, \ell}(X)^{\T})V_{h, \ell}(X)$, we claim $Z_{h, \ell}$ is a vector whose $i$-th entry is $2$ if $a_{i}$ and $c_{h, \ell}$ are not in any orthogonal triple (i.e. there are no $b_{j}$ such that $a_{i}, b_{j}, c_{h, \ell}$ are orthogonal) and at most $\frac{3}{2}$ if $a_{i}, c_{h, \ell}$ participate in an orthogonal triple.
This follows as in the former case, the set of maximum indices in the $i$-th row is only $N + 1$ where $V_{h, \ell}(X)$ contains a $2$; in the latter case, the set of maximum indices contains at least one other entry, and thus $\hm$ averages over at least one row where $V$ contains $1$.
Summing over all $h, \ell, i$, we have $\sum_{h, \ell} \sum_{i} Z_{h, \ell}[i] = 2 NHL$ if there is no orthogonal triple and at most $2 NHL - \frac{1}{2}$ if there is.
From \eqref{eq:softmax-error-overview}, we can use $\sum_{h, \ell} \sum_{i} Y_{h, \ell}[i, m]$ to determine if $A, B, C$ contains an orthogonal triple.

\paragraph{Large Embedding Dimension.}

We now describe our lower bound for extended arithmetic circuits that compute transformers with large embedding dimension ($m = N$).
Recall that computing a single attention head is in $N^{\omega+o(1)}$ time, as we require two matrix products, one to compute $QK^{\T}$ and one to compute the product of $\sm(QK^{\T})$ and $V$.
It is then easy to see that we can use $O(LH)$ instances of matrix multiplication to compute a transformer with $L$ layers, $H$ heads, and embedding dimension $N$.

The matching lower bound is much more challenging.
While it is not too difficult to show that we can reduce matrix multiplication to a single attention head with embedding dimension $N$ (one can, for example, set $Q \gets A$, $K \gets B^{\T}$ and $V \gets I$ so that the attention head outputs $\sm(QK^{\T})$. We can use a simple denormalization trick to obtain obtain $\exp(QK^{\T})$ instead (see \Cref{app:omitted-proofs}), from which we can compute $AB^T$, 
it is not clear how to use a single transformer to simultaneously compute many independent instances of matrix multiplication.
Consider as an example a transformer with a single layer and $H$ heads.
The output of this transformer is the \emph{sum} of all attention heads in this layer.
As discussed earlier, the sum of matrix products can be computed more efficiently than computing each matrix product separately (e.g. the sum of $N$ matrix products can be computed in $\bigO{N^{3.251}}$ time while computing each product separately requires $\bigO{N^{3.372}}$ time.
Thus, if we directly repeat the reduction of a single matrix product to a single attention head for all $LH$ heads in our transformer, the output of the transformer only computes the sum of $LH$ attention outputs, which has dimension $\Theta(N^2)$, in contrast to the $\Theta(LHN^2)$ dimension required to express $LH$ distinct matrix products.

The key tool we require is the Baur-Strassen theorem \cite{DBLP:journals/tcs/BaurS83}, which was originally designed to show lower bounds in the arithmetic circuit model. 
Perhaps the most famous application of the Baur-Strassen theorem is the backpropagation algorithm, a fundamental building block of efficient training of neural networks \cite{rumelhart1986learning}. 
The Baur-Strassen theorem has also found applications in algorithm design \cite{cygan2015algorithmic,fischer2024deterministic}. We give a new application of the Baur-Strassen theorem: to obtain lower bounds for transformers.
The Baur-Strassen theorem states that given an arithmetic circuit of size $s$ computing a function $f: \R^{n} \rightarrow \R$, there is an arithmetic circuit of size $O(s)$ computing all the partial derivatives of $f$ with respect to its $n$ inputs.
In particular, Baur-Strassen gives us a powerful tool to extract more information from the partial computations within our algorithm, which provides an avenue to obtain the required $\Omega(LHN^2)$ dimensional output.

In standard settings, an arithmetic circuit (\Cref{def:arithmetic-circuit}) consists of $+, -, \times, /$ gates, and the size of a circuit is the total number of gates.
As discussed previously, since the key non-linearity in attention is the softmax function which requires exponentiation, we equip our arithmetic circuits with $\exp$ and $\ln$ gates, and call them \emph{extended arithmetic circuits}.
Our first step is to give an extension of the Baur-Strassen theorem: if there is an extended arithmetic circuit of size $s$ computing a function $f: \R^{n} \rightarrow \R$, then there is an extended arithmetic circuit of size $O(s)$ computing all the partial derivatives of $f$.

Equipped with this tool, we now describe our lower bound construction.
For simplicity, we describe our construction with what we call denormalized attention heads, which replaces the row-wise $\sm$ with entry-wise $\exp$.
In particular, given $Q(X), K(X), V(X)$, a denormalized attention head outputs $\exp(Q(X)K(X)^{\T})V(X)$ (see \Cref{app:omitted-proofs} for a more detailed discussion on denormalized attention heads).
In the proof of \Cref{thm:linear-embedding} (see \Cref{thm:linear-embedding-formal}), we modify the construction given below to ensure that the matrix product can be extracted from a standard attention head (with normalization).
Equipped with denormalized attention heads, we may construct a transformer $T$ that outputs a vector $T(X) \in \R^{N}$ with entries $T(X)_{i} = \sum_{k = 1}^{LH} \sum_{j = 1}^{N} \exp((A_{k} B_{k}^{\T})_{ij})$ where $A_{k}, B_{k} \in \R^{N \times m}$ are arbitrary $N \times m$ matrices for $k \in [LH]$.
Our transformer simply consists of $H$ denormalized attention heads in each layer, where the $h$-th head in the $\ell$-th layer sets $Q_{h, \ell}(X) \gets A_{(\ell - 1)H + h}$, $K_{h, \ell}(X) \gets B_{(\ell - 1)H + h}$ and $V_{h, \ell}(X) \gets 1_{N \times 1}$ is the vector of all $1$. 
(Technically $V_{h, \ell}(X)$ should be an $N \times m$ matrix, but here we do not need rest of the columns and can simply pad them with $0$'s.)
Then, $\exp(Q_{h, \ell}(X) K_{h, \ell}(X)^{\T}) V_{h, \ell}(X) = \exp(A_{(\ell - 1)H + h} B_{(\ell - 1)H + h}^{\T}) \cdot 1_{N \times 1}$ is the vector consisting of row sums of $\exp(A_{(\ell - 1)H + h} B_{(\ell - 1)H + h}^{\T})$.
We then conclude by summing up the outputs of all heads in each layer, and proceed inductively across all $L$ layers.

Suppose there is an extended arithmetic circuit (\eAC) of size $s$ that computes the transformer $T$. 
We note that while our transformer $T$ has fixed embedding and MLP maps, and thus can be computed by a fixed circuit, the embedding and MLP maps may be parameterized by up to $O(m)$ weights which are additionally provided as input to the circuit. 
See \cref{ssec:transformers} for more detailed discussions on our models.
Given the above transformer, our goal is now to apply the extended Baur-Strassen theorem to extract useful derivatives.
We begin by modifying the input slightly.
Introducing auxiliary input variables $C_{kij}$ for $k \in [LH]$ and $i, j \in [N]$, we define $C_{k}$ to be the $N \times N$ matrix with entries $C_{k}[i, j] = C_{kij}$.
Then, we define for all $k \in [LH]$ matrices $\tilde{A}_{k}, \tilde{B}_{k} \in \R^{N \times 2N}$ where
\begin{equation*}
    \tilde{A}_{k} = \begin{pmatrix}
        A_{k} & C_{k}
    \end{pmatrix},\quad 
    \tilde{B}_{k} = \begin{pmatrix}
        B & I
    \end{pmatrix} \text{.}
\end{equation*}
First, we observe that $\tilde{A}_{k}, \tilde{B}_{k}$ can be computed efficiently in $\bigO{N^2}$ time.

We now describe how to use the transformer $T$ to compute the matrix product $A_{k}B_{k}^{\T}$ for all $k \in [LH]$.
Running the transformer $T$ and summing all entries of the output vector we obtain
\begin{equation*}
    F \gets \sum_{k = 1}^{LH} \sum_{i, j = 1}^{N} \exp\left(\left(\tilde{A}_{k} \tilde{B}_{k}^{\T}\right)_{ij} \right) = \sum_{k = 1}^{LH} \sum_{i, j = 1}^{N} \exp\left(\left(A_{k} B_{k}^{\T} \right)_{ij} + C_{kij} \right) \text{.}
\end{equation*}
Now, consider the derivative of $F$ with respect to $C_{k i j}$.
\begin{align*}
    \pderiv{F}{C_{k i j}} &= \pderiv{}{C_{k i j}} \exp\left(\left(A_{k} B_{k}^{\T} \right)_{i j} + C_{k i j} \right) \\
    &= \exp\left(\left(A_{k} B_{k}^{\T} \right)_{i j} + C_{k i j} \right) \text{.}
\end{align*}
Now, if we set $C_{kij} \gets 0$ for all $k \in [LH]$ and $i, j \in [N]$,
\begin{align*}
    \pderiv{F}{C_{k i j}} &= \exp\left(\left(A_{k} B_{k}^{\T} \right)_{i j} \right)\text{.}
\end{align*}
Thus, by applying the extended Baur-Strassen theorem, we obtain the entries of $\exp\left(A_{k} B_{k}^{\T}\right)$.
From, here, applying $\bigO{LHN^2}$ $\ln$ gates, we obtain the desired $LH$ matrix products.
In particular, we obtain an \eAC of size $\bigO{s + LHN^2}$ that computes $LH$ independent matrix products.

The final ingredient we require is a lower bound on the size of any \eAC that computes $LH$ independent matrix products.
While it is well-known  that a standard arithmetic circuit requires $LHN^{\omega- o(1)}$ size to compute $LH$ independent matrix products (see \Cref{lem:independent-product-lb}), the lower bound for extended arithmetic circuits is a bit more subtle.
At first glance, an extended arithmetic circuit may be significantly more powerful than a standard arithmetic circuit, and may be able to compute matrix products using fewer than $N^{\omega}$ gates.
We show that when restricted to low degree functions, any extended arithmetic circuit of size $s$ can be simulated by a standard arithmetic circuit (without $\exp$ and $\ln$ gates) of size $O(s)$.
That is, extended arithmetic circuits exhibit no advantage when computing low degree functions.
Our result extends that of Strassen \cite{strassen1973vermeidung} (see \cite{DBLP:journals/toc/Blaser13} for an exposition), which states that division gates are not useful for computing low degree functions, and may be of independent interest.

\subsection{Further Related Work}

We review two lines of work in the theoretical study of transformers: (1) the expressivity of transformers, i.e. what functions can transformers compute and learn; (2) complexity of transformers i.e. how efficiently can transformers (or approximations of transformers) be computed?

\paragraph{The Expressivity of Transformers.}

There is a long line of work studying the expressivity of transformers.
Several works, initiated by \cite{hewitt2020rnns, yao2021self}, establish the advantage of transformers over recurrent architectures in tasks such as parsing hierarchical structure \cite{yao2021self}, sparse averaging \cite{DBLP:conf/nips/SanfordHT23}, in context learning \cite{wen2024rnns}, copying \cite{jelassi2024repeat}, and document similarity \cite{DBLP:conf/iclr/AlmanY25}.

Despite their expressive power, lower bounds against transformers show that many functions cannot be computed with small transformers.
Lower bounds against transformers typically rely on communication complexity \cite{DBLP:conf/nips/SanfordHT23, peng2024limitations, DBLP:conf/icml/Sanford0T24} for depth one lower bounds, or reductions to other computational models \cite{merrill2023parallelism, peng2024limitations, DBLP:conf/icml/Sanford0T24}. 
Notably, \cite{DBLP:conf/icml/Sanford0T24} connects the transformer model with the Massively Parallel Computation model by proving an equivalence between transformer and MPC protocol under certain parameters, and thus provide lower bounds on transformers using problems that are hard in MPC. 
In addition, a hard function for transformers with depth $\tO{1}$ was exhibited \cite{chen2024theoretical}.
\cite{hahn2020theoretical} establish that hardmax transformers cannot compute parity or dyck languages.

\paragraph{Algorithms and Hardness for Transformer Computation.}

The exact computation of attention is fairly well understood, with quadratic lower bounds established in the small embedding regime \cite{keles2023computational} and extended to more generic settings \cite{alman2023fastattentionboundedentries, DBLP:conf/iclr/AlmanY25, gupta2026subquadratic}.
In light of these lower bounds, subquadratic algorithms have been developed in the special settings of small entries \cite{alman2023fastattentionboundedentries} and constant embedding dimension \cite{gupta2026subquadratic}.

Given the hardness of exact attention computation, a long line of work has studied efficient approximations of attention, including \cite{DBLP:conf/nips/BrownMRSKDNSSAA20}, Reformer \cite{DBLP:conf/iclr/KitaevKL20}, \cite{DBLP:conf/icml/KatharopoulosV020}, Big Bird \cite{zaheer2020bigbird}, Scatterbrain \cite{DBLP:conf/nips/ChenDWSRR21}, Performer \cite{DBLP:conf/iclr/ChoromanskiLDSG21}, KDEFormer \cite{zandieh2023kdeformer}, HyperAttention \cite{hanhyperattention}, and Polysketchformer \cite{kachampolysketchformer}. 
More recently, \cite{DBLP:journals/corr/abs-2506-12220} provides a new direction of computing large input length transformers with smaller input length transformers.

Practical attempts at improving transformer performance typically turn towards hardware optimization, including techniques such as FlashAttention which optimizes I/O complexity \cite{dao2022flashattention, fu2023simple, sy:24} and speculative decoding which leverages parallelism \cite{leviathan2023fast}.

\section{Preliminaries}

\textbf{Notations.} Let $[n]$ denote the set $\set{1, \dotsc, n}$.
Let $\R_{\geq 0}$ denote the set of non-negative reals and $\R_{> 0}$ denote the set of positive reals.
$\R_{\leq 0}, \R_{< 0}$ are defined analogously.
Our logarithmic function $\log$ is the base $2$ $\log$ unless otherwise specified.

For a ring $R$, let $R[x_{1}, \dotsc ,x_{n}]$ denote the set of polynomials over $n$ variables with coefficients in $R$,
and let $R[[z]] = \{\sum_{n=0}^{\infty}a_nz^n|a_n \in R\}$ denote the set of formal power series over $z$ with coefficients in $R$.

Let $\evec{i}$ denote the $i$-th basis vector with $0$ in every coordinate but $1$ in the $i$-th coordinate.
Let $I_{n}$ denote the identity matrix of dimension $n$.
For any matrix $A$, denote the $(i, j)$-th entry with $A_{ij}$ or $A[i, j]$; we typically use the latter if the matrix name itself has a subscript.
Let $A[i; ]$ denote the $i$-th row of $A$ and $A[; j]$ the $j$-th column of $A$.
For a vector $v \in \R^{n}$, let $\diag(v)$ denote the $n \times n$ matrix with entries $\diag(v)_{ii} = v_{i}$.
For a matrix $A$, let $A^{\T}$ denote its transpose and $A^{-1}$ its inverse.
For any function $f$, let $f(A)$ denote the matrix obtained by applying $f$ to each entry of $A$.
Similarly, let $A \leq B$ denote that $A_{ij} \leq B_{ij}$ for every entry of $A, B$.
Let $A \odot B$ denote the entry-wise product between two matrices.
Let $0_{a \times b}, 1_{a \times b}$ denote the $a \times b$ matrix consisting of all zeros and all ones respectively.
Let $I_{n}$ denote the $n \times n$ identity matrix.

The softmax operator is defined as 
\[
\sm(v) = \frac{1}{\sum_{j=1}^N \exp(v_j)}\cdot (\exp(v_1), \dots, \exp(v_N)) \in \mathbb{R}^N
\] for $v \in \R^N$. We also define the hardmax operator as 
\begin{equation*}
    \hm(v)[i] = \frac{\ind{i \in \imax(v)}}{|\imax(v)|} 
\end{equation*} 
where $\imax(v) = \set{j: v_{j} = \max v}$. Given a matrix $A$, we abuse notation and use $\sm(A)$ and $\hm(A)$ to denote the matrix after we apply $\sm$ and $\hm(A)$, respectively, on each row of $A$.

\subsection{Arithmetic Circuits}

\begin{definition}
    \label{def:arithmetic-circuit}
    An \emph{arithmetic circuit} (\AC) is a sequence of operations $\set{g_1, \dotsc, g_{s}}$ such that each $g_{j} \gets y \circ z$ for $\circ \in \set{+, -, \times, /}$ and $y, z \in \set{g_{i}}_{i < j} \cup \set{x_{i}} \cup \R$ with a subset of gates $\set{g_{i_k}}$ denoted the output gates.
    Given input $x$, denote $\cC(x)$ as the values on the output gates. An arithmetic circuit $\cC$ computes $F: \R^{n} \rightarrow \R^{m}$ if $\cC(x) = F(x)$ for all $x \in \R^{n}$.
\end{definition}

The matrix multiplication exponent $\omega$ is defined as the smallest constant such that the product of two $n$ by $n$ matrices can be computed by an arithmetic circuit of size $O(n^{\omega+\eps})$ for any $\eps > 0$. 

We will also consider a more general class of circuits, which include exponential and logarithmic gates.

\begin{definition}
    \label{def:extended-arithmetic-circuit}
    An \emph{extended arithmetic circuit} (\eAC) computing $F: \R^{n} \rightarrow \R$ is a sequence of operations $\set{g_1, \dotsc, g_{s}}$ such that each $g_{j}$ is one of the following forms: 
    \begin{enumerate}
        \item $g_{j} \gets y \circ z$ for $\circ \in \set{+, -, \times, /}$ and $y, z \in \set{g_{i}}_{i < j} \cup \set{x_{i}} \cup \R$.
        \item $g_{j} \gets \exp(y)$ for $y \in \set{g_{i}}_{i < j} \cup \set{x_{i}} \cup \R$.
        \item $g_{j} \gets \ln(y)$ for $y \in \set{g_{i}}_{i < j} \cup \set{x_{i}} \cup \R_{> 0}$.
    \end{enumerate}
\end{definition}

\subsection{Transformers}
\label{ssec:transformers}

We first define a self-attention head, the core primitive of a transformer. Let $N$ denote the context (input) length, $m$ the embedding (hidden) dimension, and $\din$ the input dimension.
Recall that following practical values, we assume $m = O(N)$.
For an input $X \in \R^{N \times \din}$, we refer to each row of $X$ as a token.

\begin{definition}[Attention]
    \label{def:attn}
    A \emph{self-attention head} is a mapping $f_{Q, K, V}: \R^{N \times \din} \to \R^{N \times m}$ defined by
    \[f_{Q, K, V}(X)= \sm(Q(X) K(X)^\T) V(X) \]
    and parameterized by row-wise
    \emph{query}, \emph{key}, and \emph{value embeddings} $Q, K, V \colon \R^{N\times \din} \to \R^{N \times m}$ (e.g., $Q(X) = (Q(X_1), \dots, Q(X_N))$.
    Let $\attn{m, \din}{N}$ denote the set of all self-attention heads with input dimension $\din$, embedding dimension $m$ and input length $N$.
    When $m = \din$, we denote this as $\attn{m}{N}$. 
\end{definition}

\Cref{fig:attention} illustrates an attention head. 
In practice, and in our constructions, embedding functions typically take the form $Q(X) = X W_{Q}$ (analogously $K(X) = X W_{K}, V(X) = X W_{V}$) for some weight matrices $W_{Q}, W_{K}, W_{V} \in \R^{\din \times m}$.

\begin{figure}
    \centering
    \definecolor{grayq}{gray}{0.85}
\definecolor{grayk}{gray}{0.7}
\definecolor{grayqk}{gray}{0.55}
\definecolor{grayv}{gray}{0.4}

\begin{tikzpicture}[scale=4]

  \draw[line width=1pt] (0,0) rectangle (0.4, 1);
  \node[scale=0.4, label={center:$X$}] (gc4) at (0.2, 0.5) {};
  \draw[decorate,decoration={brace,amplitude=5pt,mirror,raise=1ex}]
  (0,0) -- (0.4,0) node[midway,yshift=-2em]{$\din$};
  \draw[decorate,decoration={brace,amplitude=5pt,raise=1ex}]
  (0,0) -- (0,1) node[midway,xshift=-1.6em]{$N$};

  \draw[line width=1pt, fill=grayq] (0.7, 0) rectangle (1, 1);
  \node[scale=0.4, label={center:$Q(X)$}] (gc4) at (0.85, 0.5) {};
  \draw[line width=1pt, arrows={-latex}] (0.4, 0.5) -- (0.7, 0.5);
  \draw[decorate,decoration={brace,amplitude=5pt,mirror,raise=1ex}]
  (0.7,0) -- (1,0) node[midway,yshift=-2em]{$m$};

  \draw[line width=1pt, fill=grayk] (1.1, 1.1) rectangle (2.1, 1.4);
  \node[scale=0.4, label={center:$K(X)$}] (gc4) at (1.6, 1.25) {};
  \draw[line width=1pt, arrows={-latex}] (0.27, 1) -- (0.27, 1.25) -- (1.1, 1.25);

  \draw[line width=1pt, fill=grayqk] (1.1, 0) rectangle (2.1, 1);
  \node[scale=0.4, label={center:$Q(X)K(X)^{\T}$}] (gc4) at (1.6, 0.5) {};
  \draw[decorate,decoration={brace,amplitude=5pt,mirror,raise=1ex}]
  (1.1,0) -- (2.1,0) node[midway,yshift=-2em]{$N$};

  \draw[line width=1pt, fill=grayv] (2.2, 0) rectangle (2.5, 1);
  \node[scale=0.4, label={center:$V(X)$}] (gc4) at (2.35, 0.5) {};
  \draw[line width=1pt, arrows={-latex}] (0.13, 1) -- (0.13, 1.5) -- (2.35, 1.5) -- (2.35, 1);
  \draw[decorate,decoration={brace,amplitude=5pt,mirror,raise=1ex}]
  (2.2,0) -- (2.5,0) node[midway,yshift=-2em]{$m$};
\end{tikzpicture}
    \caption{An attention head with input $X \in \R^{N \times \din}$, input length $N$ and embedding dimension $m$. $Q(X), K(X), V(X)$ are obtained from $X$ with row-wise embedding maps. Then, the output is computed by $\sm(Q(X)K(X)^{\T})V(X)$.}
    \label{fig:attention}
\end{figure}
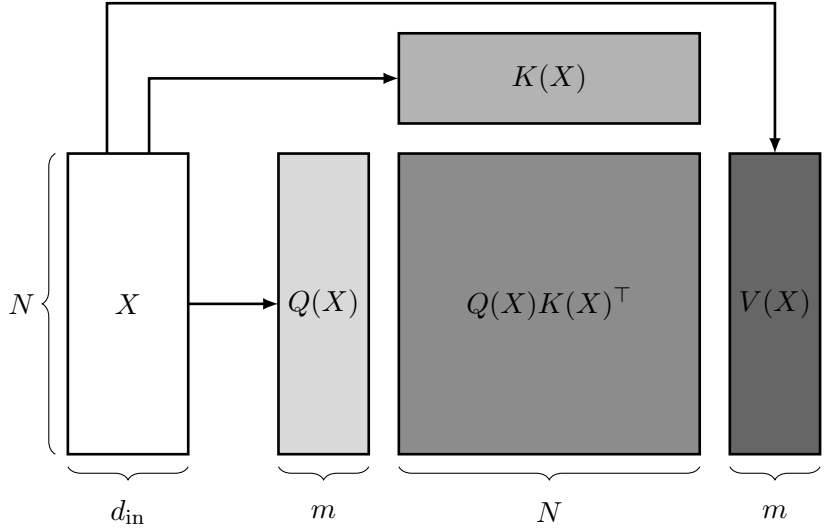

Hardmax attention has also been widely studied in the literature \cite{10.1162/tacl_a_00493, DBLP:conf/icml/Sanford0T24, jerad2025uniquehardattentiontale}, especially in theoretical work. For each query, hardmax attention only allows it to attend to the keys that achieve the maximum inner product (highest correlation).

\begin{definition}[Hardmax Attention]
    \label{def:hardmax-attn}
    A \emph{hardmax-attention head} is a mapping $f^{\hard}_{Q, K, V}: \R^{N \times \din} \to \R^{N \times m}$ defined by
    \[f^{\hard}_{Q, K, V}(X)= \hm(Q(X) K(X)^\T) V(X) \text{.} \]
\end{definition}

In general, we use (self-)attention to mean softmax attention (\Cref{def:attn}) unless otherwise specified. Finally, we define the \emph{multi-layer perceptron} (MLP), which is a shallow neural network. By the universal approximation theorem which states that any continuous function can be approximated by a MLP (with enough neurons), we model MLP as continuous functions on vectors. 

\begin{definition}
    \label{def:mlp}
    An MLP is a continuous function $\Psi: \R^{m_1} \rightarrow \R^{m_2}$ for $m_1, m_2 > 0$.
\end{definition} 

Given a matrix $A$, we abuse notation and denote $\Psi(A)$ as the result of applying $\Psi$ row-wise on matrix $A$. 
We now formally define transformers.

\begin{definition}[Transformer]\label{def:tran}
    A \emph{transformer} is a mapping $T: \R^{N \times \din} \to \R^{N \times \dout}$ specified by attention heads $\set{f_{\ell, h}}_{\ell, h = 1}^{L, H}$ and
    element-wise MLP layers $\Psi_{\ell}: \R^{m} \rightarrow \R^{m}$ for $\ell \in [L]$.
    Upon input $X \in \R^{N \times \din}$, the transformer computes intermediate embeddings $X^{(1)}, \dots, X^{(L)}$ with
    \begin{equation*}
        X^{(\ell)} = \Psi_{\ell}\left(X^{(\ell - 1)} + \sum_{h=1}^{H} f_{\ell,h}\left(X^{(\ell-1)}\right)\right)~~~\forall \ell \in [L],
    \end{equation*}
    and returns $T(X) = \Psi_O(X^{(L)})$ for an output MLP $\Psi_O: \R^m \rightarrow \R^\dout$ (let $X^{(0)} := X$ for notational convenience).  

    A \emph{hardmax transformer} 
    is defined analogously, replacing each attention head with a hardmax-attention head.
\end{definition} 

Note that in this definition, we sum up the result of each attention head, which is considered in many theoretical works (e.g., \cite{DBLP:conf/nips/SanfordHT23, DBLP:conf/icml/Sanford0T24,DBLP:journals/corr/abs-2408-14332}). In the more standard version used in practice, the results of the attention heads are concatenated.
In addition, in sum aggregation, the attention head dimension equals the embedding dimension $m$, while in concatenation aggregation, the attention head dimension equals $m / H$.
As we can easily obtain the sum from the concatenation using a linear MLP, all our lower bounds can be adapted to the concatenation version. See \Cref{app:head-aggregation} for details.

In this paper, we focus on attention mechanisms when proving hardness results, and show lower bounds for transformers without MLPs.
In \Cref{app:mlp-examples}, we show that our results hold also for transformers with standard MLPs found in practice, even when all MLPs take linear time to compute, i.e. given input $x \in \mathbb{R}^m$, computing $\Psi(x)$ takes $O(m)$ time for MLP $\Psi$. 
\Cref{fig:transformer} illustrates an example of a transformer with $3$ heads and $3$ layers.

\begin{figure}
    \centering
    \definecolor{grayAtt}{gray}{0.8}
\definecolor{grayMLP}{gray}{0.6}

\begin{tikzpicture}[scale=3.5]

  \draw[line width=1pt] (0,0) rectangle (0.3, 1);
  \node[scale=0.4, label={center:$X^{(0)}$}] (gc4) at (0.15, 0.5) {};
  \draw[decorate,decoration={brace,amplitude=5pt,mirror,raise=1ex}]
  (0,0) -- (0.3,0) node[midway,yshift=-2em]{$\din$};
  \draw[decorate,decoration={brace,amplitude=5pt,raise=1ex}]
  (0,0) -- (0,1) node[midway,xshift=-1.6em]{$N$};

  \foreach \i in {1,...,3} {
    \tikzmath{\x1 = 0.3 + (\i - 1)*1.1; \y1 = \x1 + 0.3; \z1 = \y1 + 0.2; \c1 = \y1 - 0.1; \d1 = \z1 + 0.1; \e1 = \d1 + 0.2; \f1 = \e1 + 0.3; \g1 = \e1 + 0.15;}
    
    \draw[line width=1pt, fill=grayAtt] (\y1, -0.3) rectangle (\z1, 0.2);
    \draw[line width=1pt, arrows={-latex}] (\x1, 0.4) -- (\c1, 0.4) -- (\c1, -0.05) -- (\y1, -0.05);
    \draw[line width=1pt, arrows={-latex}] (\z1, -0.05) -- (\d1, -0.05) -- (\d1, 0.4) -- (\e1, 0.4);
    \draw[decorate,decoration={brace,amplitude=5pt,mirror,raise=1ex}]
  (\y1,-0.3) -- (\z1,-0.3) node[midway,yshift=-2em]{$m$};
  
    \draw[line width=1pt, fill=grayAtt] (\y1, 0.25) rectangle (\z1, 0.75);
    \draw[line width=1pt, arrows={-latex}] (\x1, 0.5) -- (\y1, 0.5);
    \draw[line width=1pt, arrows={-latex}] (\z1, 0.5) -- (\e1, 0.5);

    \draw[line width=1pt, fill=grayAtt] (\y1, 0.8) rectangle (\z1, 1.3);
    \draw[line width=1pt, arrows={-latex}] (\x1, 0.6) -- (\c1, 0.6) -- (\c1, 1.05) -- (\y1, 1.05);
    \draw[line width=1pt, arrows={-latex}] (\z1, 1.05) -- (\d1, 1.05) -- (\d1, 0.6) -- (\e1, 0.6);

    \draw[line width=1pt] (\e1,0) rectangle (\f1, 1);
    \node[scale=0.4, label={center:$\Psi_{\i}$}] (gc4) at (\g1, 0.5) {};
    \draw[line width=1pt, dotted, arrows={-latex}] (\x1, 0.8) -- (\x1 + 0.1, 0.8) -- (\x1 + 0.1, 1.4) -- (\e1 - 0.1, 1.4) -- (\e1 - 0.1, 0.8) -- (\e1, 0.8);
  }

  \draw[line width=1pt, arrows={-latex}] (3.6, 0.5) -- (3.8, 0.5);
  \draw[line width=1pt] (3.8,0) rectangle (4.1, 1);
  \node[scale=0.4, label={center:$\Psi_{O}$}] (gc4) at (3.95, 0.5) {};
  \draw[decorate,decoration={brace,amplitude=5pt,mirror,raise=1ex}]
  (3.8,0) -- (4.1,0) node[midway,yshift=-2em]{$\dout$};
\end{tikzpicture}
    \caption{A transformer with $H = 3$ and $L = 3$.
    Attention heads are denoted with light gray and MLPs are denoted by $\Psi$.
    Residual connections, denoted with dotted lines, additionally add $X^{(\ell - 1)}$ to $X^{(\ell)}$.
    }
    \label{fig:transformer}
\end{figure}
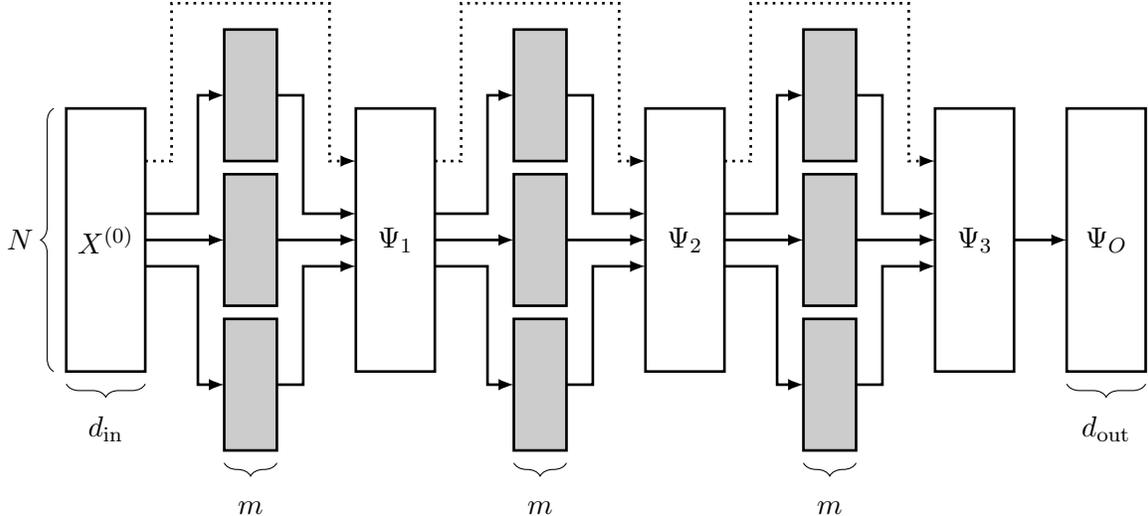

Finally, we are ready to define what it means for an algorithm to compute transformers.

\begin{definition}
    \label{def:alg-compute-tran}
    An algorithm computes transformers with context length $N$, embedding dimension $m$, $L$ layers, $H$ heads in each layer, input dimension $\din$ and output dimension $\dout$ if given any such transformer $T$ and any input $X \in \R^{N \times \din}$, the algorithm outputs $T(X)$.
\end{definition}

For circuits, the definition is a bit more subtle. If we were to attempt to say a circuit $\cC$ computes a transformer, then the circuit must be able to compute all possible embedding and MLP maps, then the circuit must encode many functions at once, which is too strong of a condition to require.
Instead, we aim to give lower bounds against circuits meeting a much weaker requirement: we show lower bounds against circuits $\cC$ that compute fixed transformers.

Although we consider fixed transformers, we remark that the embedding map may be parameterized.
This corresponds in practice to the learned weights of a model.
Formally, consider an attention head $f_{\ell, h}$ with embedding maps $Q_{h, \ell}, K_{h, \ell}, V_{h, \ell}$.
Then, we require that $(Q_{h, \ell})_{i} = g(w_{Q_{h, \ell}}, X_{i})$ where $g$ is an function computable by an \eAC, $w_{Q_{h, \ell}}$ denotes the weights associated to this map, and $X_{i}$ denotes the $i$-th row of the input to the attention head. 
In our constructions, we require that $g$ is an \eAC of size $O(m)$.
Let $W$ denote the concatenation of all the weights of the transformer, and for any \eAC transformer $T$, let $T_{W}(X)$ be the computation of $T$ with weights $W$ on input $X$.
Finally, we define what it means for a circuit to compute a transformer.

\begin{definition}
    \label{def:circuit-compute-tran}
    Let $T$ be a transformer with context length $N$, embedding dimension $m$, $L$ layers, $H$ heads in each layer, input dimension $\din$ and output dimension $\dout$.
    Suppose $T$ has weights $W$ and input $X$.
    A circuit $\cC$ computes $T$ if for any input $X \in \R^{N \times \din}$ and any weights $W$, $\cC(X, W) = T_{W}(X)$.
\end{definition}

To obtain meaningful lower bounds against small \eAC computing transformers, we show that transformers can compute functions that are hard for extended arithmetic circuits.
Of course, to have a meaningful lower bound, we note that the transformer itself must also be computable with an \eAC.

\paragraph{Softmax Simulates Hardmax.}

We introduce a useful primitive for Transformer computation and show that softmax attention can simulate hardmax attention when there is a sufficient gap between the maximum value and the second largest value.
This is similar to a result of \cite{DBLP:conf/icml/Sanford0T24}.

\begin{restatable}{lemma}{SoftmaxSimHardmax}
    \label{lem:softmax-sim-hardmax}
    Let $f \in \attn{m}N$ be a self-attention unit with embedding functions $Q, K, V$ such that
    \begin{enumerate}
        \item $V(X) \in \set{0, 1, 2}^{n \times m}$, and
        \item for every $X \in \R^{N \times m}$ and $i \in [N]$:
        \[
        A(X)_{i, i'} \leq \max_{i''} A(X)_{i, i''} - 1, \ \forall i' \not\in \imax(A(X)_i),
        \] where $A(X) = Q(X) K(X)^\T$.
    \end{enumerate} Then there exists an attention unit $f' \in \attn{m}N$ satisfying 
    \begin{equation*}
        \norm{f'(X) - \hm(A(X)) V(X)}_{\infty} \leq \frac{1}{\poly(N)}
    \end{equation*}
    for an arbitrarily small inverse polynomial.
\end{restatable}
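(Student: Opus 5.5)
The plan is to keep $K$ and $V$ unchanged and replace $Q$ by the scaled query map $Q'(X) = \beta\, Q(X)$, with $\beta = c\ln N$ for a constant $c$ fixed later. Then the new score matrix is $A'(X) = \beta A(X)$, and $f' = f_{Q',K,V}$ is again a self-attention unit in $\attn{m}{N}$. If $Q$ is linear, $Q(X) = XW_Q$, this just means using the weight matrix $\beta W_Q$; in general we compose $Q$ with a scalar multiplication. The whole proof then reduces to a row-by-row comparison of $\sm(\beta A_i)$ with $\hm(A_i)$.

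Fix $X$ and a row $i$, and write $a = A(X)_{i}$, $M = \max_j a_j$, $S = \imax(a)$ and $k = |S| \ge 1$. By hypothesis 2, every $j \notin S$ has $a_j \le M - 1$. Multiply the numerator and denominator of the softmax by $e^{-\beta M}$. The denominator becomes $Z = k + \sum_{j\notin S} e^{\beta(a_j - M)}$, so $Z \in [k, k+\delta]$ with $\delta := \sum_{j \notin S} e^{\beta(a_j-M)} \le N e^{-\beta} = N^{1-c}$. From this we get the two estimates:
\begin{itemize}
\item for $j \in S$, the weight is $1/Z$, and $|1/Z - 1/k| \le \delta/k^2 \le \delta$;
\item for $j\notin S$, the weight is at most $e^{-\beta}/k \le N^{-c}$.
\end{itemize}
Summing over $j$, we obtain $\norm{\sm(\beta a) - \hm(a)}_1 \le k\cdot\delta + N\cdot N^{-c} \le 2N^{2-c}$.

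Next we pass from rows of weights to the output. Each entry of $(\sm(\beta A) - \hm(A))V$ is the inner product of a row difference with a column of $V$. Since hypothesis 1 gives $V \in \{0,1,2\}$, Hölder's inequality bounds each entry by $2\norm{\sm(\beta a)-\hm(a)}_1 \le 4N^{2-c}$. Choosing $c = p + 3$ gives an error of at most $N^{-p}$, for any desired polynomial $p$, uniformly in $X$ and $i$. This yields the claimed $\infty$-norm bound.

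None of these steps is hard. The one point that needs care is \emph{uniformity}: the gap in hypothesis 2 is an absolute constant, and it holds for every $X$ and every row. Because of this, a single scaling $\beta = O(\log N)$, chosen independently of $X$, works for all inputs simultaneously. No bound on the magnitude of $M$ is needed, since we normalize by $e^{-\beta M}$.
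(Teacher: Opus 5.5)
Your proof is correct and takes the same approach as the paper. Like the paper, you scale the query embedding by a $\Theta(\log N)$ factor and bound the non-maximal softmax mass by $N e^{-\beta}$ relative to the maximal weight. You then compare the weights on $\imax$ and off $\imax$ with the hardmax weights, and finish with H\"older's inequality using $\norm{V(X)}_{\infty} \le 2$. Your normalization by $e^{-\beta M}$ also makes the bookkeeping cleaner and the constants explicit: the paper's version momentarily writes the non-maximal mass as an additive $1/\poly(N)$ rather than one relative to $\exp(cA(X)_{i1})$.
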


The idea behind the simulation is simple: by scaling each query vector in $Q$ by a sufficiently large constant $c = \Theta(\log N)$, we ensure that any probability assigned to non-maximum indices by the softmax operation is exponentially small with respect to $c$, i.e. less than $\frac{1}{\poly(N)}$ for any polynomial. We leave the proof to \Cref{app:omitted-proofs}.

\subsection{Fine-Grained Complexity}

We state our major hypotheses in fine-grained complexity theory in this section. See \Cref{sec: intro} for more discussions on their applications.

\begin{hypothesis}[Strong Exponential Time Hypothesis ($\SETH$)]
For any $\varepsilon>0$, there exists a positive integer $k$ such that $\kSAT$ requires $\Omega(2^{(1-\varepsilon)n})$ time, where $n$ is the number of variables in the CNF.
\end{hypothesis}

We say that $k$ vectors $v_1, \dots, v_{k} \in \set{0, 1}^{d}$ are orthogonal if for all $j \in [d]$, $v_{\ell}[j] = 0$ for some $\ell \in [k]$. In other words, their $k$-wise inner product is $0$. The $\kOV$ problem is then defined as follows: given $A_{1}, \dots, A_{k} \subset \set{0, 1}^{d}$ each of size $n$, determine if there exists $v_{1} \in A_{1}, \dots, v_{k} \in A_{k}$ such that $v_{1}, \dots, v_{k}$ are orthogonal.
The trivial algorithm of checking all $k$-tuples of vectors requires $O(n^{k} d)$ time.
The $\kOV$ Hypothesis states that for $d = \Theta(\log n)$, this is essentially optimal: i.e. any algorithm requires $n^{k - o(1)}$ time.

\begin{hypothesis}[$\kOV$ Hypothesis]
    \label{hyp:k-ov}
    Computing $\kOV$ on sets of $n$ vectors with dimension $\Theta(\log n)$ requires $n^{k - o(1)}$ time. 
\end{hypothesis}

The $\kOV$ Hypothesis follows from the popular Strong Exponential Time Hypothesis (\SETH)~\cite{impagliazzo2001seth} which states that satisfiability over $n$ variables requires $2^{(1 - o(1))n}$ time \cite{williams2004ovc}.
Our lower bound is based on the $\kOVf{3}$ Hypothesis, a special case of $\kOV$ for $k = 3$.

The unbalanced $\kOV$ problem is defined as follows: given $k$ sets of vectors $A_{1}, \dots, A_{k}$ of size $S_{1}, \dots, S_{k}$, determine if there exist $v_{1} \in A_{1}, \dots, v_{k} \in A_{k}$ such that $v_{1}, \dots, v_{k}$ is orthogonal.
The trivial algorithm of checking all $k$-tuples requires $O(S_1S_2\dots S_{k}d)$ time.
Again, the Unbalanced $\kOV$ Hypothesis states that this is essentially optimal and it is well-known that the Unbalanced $\kOV$ Hypothesis is equivalent to the $\kOV$ Hypothesis. 

\begin{hypothesis}[Unbalanced $\kOV$ Hypothesis]
    \label{hyp:unbalanced-k-ov}
    $\kOV$ on sets $S_{1}, \dots, S_{k}$ of size $n^{s_{1}}, \dots, n^{s_{k}}$ with dimension $d = \Theta(\log n)$ requires $n^{s_{1} + \dots + s_{k} - o(1)}$ time. 
\end{hypothesis}

It is clear that the Unbalanced $\kOV$ Hypothesis implies the $\kOV$ Hypothesis, by setting $s_{1} = \dots = s_{k} = 1$.
The following lemma shows the converse direction, and we include a proof in \Cref{app:omitted-proofs} for completeness. 

\begin{restatable}{lemma}{UnbalancedKOVEquiv}
    \label{lemma:unbalanced-k-OV}
    Let $0 < s_{2} \dotsc, s_{k} \leq 1$. Let $A_1, \dots A_{k} \subset \set{0, 1}^{d}$ for $d = \Theta(\log n)$ with $|A_{1}| = n$ and $|A_{i}| = n^{s_{i}}$ for all $i > 1$. Then, any algorithm computing $\kOV$ on $A_1, \dotsc, A_{k}$ requires time $n^{s_1 + \dotsc + s_{k} - o(1)}$ under the $\kOV$ Hypothesis. In particular, the Unbalanced $\kOV$ Hypothesis holds under the $\kOV$ Hypothesis.
\end{restatable}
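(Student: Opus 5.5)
We argue by contrapositive and reduce balanced $\kOV$ to the unbalanced instance through a standard splitting argument. Suppose some algorithm $\mathcal{A}$ solves $\kOV$ on instances with $|A_1| = n$ and $|A_i| = n^{s_i}$ for $i>1$, in time $O(n^{s_1 + \dots + s_k - \eps})$ for some constant $\eps > 0$, where $s_1 = 1$. From $\mathcal{A}$ we build an algorithm for balanced $\kOV$ on $k$ sets of size $n$ that runs in time $n^{k - \eps'}$, contradicting the $\kOV$ Hypothesis.

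Take a balanced instance $B_1, \dots, B_k \subset \set{0,1}^d$ with $|B_i| = n$ and $d = \Theta(\log n)$. For each $i > 1$, partition $B_i$ into $n^{1 - s_i}$ groups of size $n^{s_i}$, rounding up and padding with duplicates or with all-ones vectors; neither changes the answer. For every tuple of groups $(G_2, \dots, G_k)$, run $\mathcal{A}$ on $(B_1, G_2, \dots, G_k)$. An orthogonal tuple in the original instance has $v_i$ lying in some group for each $i$, so it appears in exactly one subinstance. Hence the answer is the OR of the subinstance answers.

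There are $\prod_{i>1} n^{1-s_i} = n^{(k-1) - (s_2 + \dots + s_k)}$ subinstances. Each has $|B_1| = n$ and $i$-th set of size $n^{s_i}$, and the dimension is still $d = \Theta(\log n)$ with respect to the parameter $n$, exactly as the lemma requires. So each call costs $n^{1 + s_2 + \dots + s_k - \eps}$, and the total time is $n^{k - \eps}$ plus $O(n^{k-1}\poly(d))$ for partitioning and bookkeeping. Since each $s_i > 0$ is constant, the number of calls is at most $n^{k-1-\Omega(1)}$, so the overhead is negligible.

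The only subtle point is rounding. When $n^{s_i}$ is not an integer, or does not divide $n$, we use $\lceil n^{1-s_i} \rceil$ groups. This costs at most a constant factor per coordinate, and a constant factor overall since $k$ is constant. The resulting $n^{k-\eps}$ algorithm contradicts the $\kOV$ Hypothesis. The hypothesis is stated with an $o(1)$ exponent, so formally we must rule out every $n^{s_1+\dots+s_k - \eps}$ algorithm for every fixed $\eps > 0$, which is exactly what the argument does. The final sentence of the lemma, that the Unbalanced $\kOV$ Hypothesis holds, follows by reordering the sets so that the largest one plays the role of $A_1$ and rescaling exponents by the largest $s_i$.
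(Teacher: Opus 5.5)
Your proof is correct and follows the paper's argument. Like the paper, you split each $B_i$ with $i>1$ into roughly $n^{1-s_i}$ blocks of size $n^{s_i}$, run the unbalanced algorithm on $B_1$ together with every combination of blocks, and total $n^{k-\varepsilon}$ time. Your treatment of rounding, padding, bookkeeping overhead and the rescaling for the ``in particular'' clause adds detail the paper omits; with duplicate padding a tuple may land in several subinstances, but ``at least one'' is all you need.
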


\section{Lower Bounds for Small Embedding Dimension}
We begin with lower bounds for the small embedding dimension setting (i.e. $m = N^{o(1)}$).
Our lower bound is based on the $\kOV$ Hypothesis (or $\SETH$).

First, we show that under the $\kOV$ Hypothesis, when the embedding dimension is small (e.g. $m = \Theta(\log n)$), there is no algorithm that computes transformers  significantly more efficiently than the naive algorithm that computes each head separately.

\begin{restatable}[Formal \Cref{thm:small-embedding}]{theorem}{SmallEmbeddingLBFormal}
    \label{thm:small-embedding-formal}
    Let $m = \Theta(\log N)$ and $L, H = \poly(N)$.
    Any algorithm computing transformers with input length $N$, embedding dimension $m$, $L$ layers, $H$ heads in each layer, and input and output dimension $m$ up to entry-wise additive error $\frac{1}{10N}$ requires $LHN^{2 - o(1)}$ time under the $\kOVf{3}$ Hypothesis.
    The lower bound holds even if the transformer has no MLPs.
\end{restatable}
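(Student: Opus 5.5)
We reduce from the unbalanced $\kOVf{3}$ problem, following the construction sketched in the technical overview. Given an algorithm computing $L$-layer, $H$-head transformers in time $LHN^{2-\eps}$, take an instance $A=\set{a_1,\dots,a_{N'}}$, $B=\set{b_1,\dots,b_{N'}}$, $C=\set{c_{h,\ell}}$ with $|C|=LH$ and $N'=N-1$, over $\set{0,1}^d$ with $d=\Theta(\log N)$. Since $L,H=\poly(N)$, we have $LH=N^{s}$ for some constant $s>0$. By \Cref{lemma:unbalanced-k-OV}, after rebalancing the roles so that the largest set plays the role of $A_1$, deciding this instance requires $(LH)N^{2-o(1)}$ time under the $\kOVf{3}$ Hypothesis. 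If $s>1$, we may instead reindex so that $C$ is the "size $n$" set and $A,B$ have size $n^{1/s}$; the lemma's hypothesis $s_i\le 1$ then still applies.

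We build the input $X\in\R^{N\times m}$ with $m=2d+2$. Row $i\le N'$ is $(a_i^\T, b_i^\T, 1, 0)$ and the last row is $(0,0,2,0)$. Head $(h,\ell)$ uses the linear maps $q,k,v$ from the overview: $Q$ gives $-a_i$, $K$ gives $b_j\odot c_{h,\ell}$, and $V$ puts $u_{2d+1}\in\set{1,2}$ into the last coordinate and $0$ elsewhere. The residual connection keeps the first $2d+1$ columns equal to $X$ in every layer, because each $V$ vanishes there. Hence every head sees the same $A,B$ data regardless of what earlier layers added to column $m$, since $q,k,v$ ignore column $m$. The score entries are $A_{ij}=-\sum_t a_i[t]b_j[t]c_{h,\ell}[t]\in\set{-d,\dots,0}$, and $A_{i,N}=0$. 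These are integers, so any non-maximal entry is at least $1$ below the maximum, and $V\in\set{0,1,2}$. Thus \Cref{lem:softmax-sim-hardmax} applies, and each softmax head is within $1/\poly(N)$ of its hardmax counterpart.

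The hardmax output in column $m$, row $i$, is $2$ if no $j$ has $a_i,b_j,c_{h,\ell}$ orthogonal. Otherwise some $j<N$ ties with the last token at value $0$, so the output is an average containing a $1$, hence at most $3/2$. The last row (query $0$) always gives the uniform average, which is a fixed known value. Column $m$ of $T(X)=X^{(L)}$ therefore equals $\sum_{\ell,h}$ of these head outputs. Summing it over $i\le N'$ gives exactly $2N'LH$ in the no-triple case and at most $2N'LH-\tfrac12$ otherwise, up to the accumulated error.

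The main obstacle is controlling this error. There are $LH$ heads, each with softmax error $1/\poly(N)$. Choosing the polynomial larger than $LH\cdot N$ (fine since $LH=\poly(N)$) bounds the column-$m$ error per entry by $1/(100N)$. The algorithm's own additive error $\frac1{10N}$ per entry, summed over $N$ rows, is at most $1/10$. The total is well below the gap of $\tfrac12$, so thresholding at $2N'LH-\tfrac14$ decides the instance. Building $X$ and the weights takes $O(NLHd)$ time, so the reduction solves unbalanced $\kOVf{3}$ in $LHN^{2-\eps}+\tilde O(NLH)$ time, contradicting the hypothesis.
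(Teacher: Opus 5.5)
Your proposal is correct and follows the paper's proof essentially step for step. It uses the same input matrix, the same $Q$, $K$, $V$ embeddings, residual preservation of the first $2d+1$ columns, the hardmax gap of $2$ versus at most $\frac{3}{2}$ via \Cref{lem:softmax-sim-hardmax}, and the same error budgeting against the $\frac{1}{2}$ gap, plus one extra step the paper leaves implicit: re-indexing the sets when $LH > N$ so that \Cref{lemma:unbalanced-k-OV} applies with all exponents at most $1$.
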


In the above statement, we require $L, H = \poly(N)$ since in $\kOVf{3}$ the sizes of the three sets must be polynomially related.
We remark that this is the interesting regime of the problem since when $L, H = N^{o(1)}$, we obtain optimal lower bounds directly from the hardness of a single attention head.

\begin{lemma}
    \label{lemma:transformer-compute-k-OV}
    Suppose there exists an algorithm $\mathcal{A}$ that computes any transformer with input length $N+1$, embedding dimension $2d+2$, $L$ layers, $H$ heads in each layer, input and output dimension $2d+2$, up to entry-wise additive error $\frac{1}{10N}$ in time $B$, then
    there exists an algorithm that solves unbalanced $\kOVf{3}$ with dimension $d$ and size $N,N,LH$ in time $\tO{LHd+B}$.   
\end{lemma}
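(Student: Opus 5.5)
We follow the construction sketched in the technical overview. Given an unbalanced $\kOVf{3}$ instance $A=\set{a_1,\dots,a_N}$, $B=\set{b_1,\dots,b_N}$, $C=\set{c_{h,\ell}}_{h\in[H],\ell\in[L]}$ in $\set{0,1}^d$, we form the input $X\in\R^{(N+1)\times(2d+2)}$ whose $i$-th row is $(a_i^\T, b_i^\T, 1, 0)$ for $i\le N$ and whose last row is $(0,0,2,0)$. The transformer $T_C$ has no MLPs (identity $\Psi$). Each head $(h,\ell)$ uses the linear maps $q_{h,\ell}(u)=-c'\cdot(u_1,\dots,u_d,0,\dots,0)$, $k_{h,\ell}(u)=(u_{d+1}c_{h,\ell}[1],\dots,u_{2d}c_{h,\ell}[d],0,\dots,0)$ and $v_{h,\ell}(u)=(0,\dots,0,u_{2d+1})$. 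Here $c'=\Theta(\log(NHL))$ is the scaling from \Cref{lem:softmax-sim-hardmax}. The weights have size $O(d)$ per head, so writing down $T_C$ and $X$ costs $O(LHd+Nd)$.

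First we check the invariants. Every $V$ vanishes on the first $2d+1$ coordinates, and the residual connection copies $X^{(\ell-1)}$. So by induction, $X^{(\ell)}$ agrees with $X$ in the first $2d+1$ columns. All heads in every layer therefore see the same $Q,K,V$ as in the overview. The last column of $X^{(\ell)}$ is never read by any map, so it only accumulates head outputs. Hence the last column of $T_C(X)$ equals $\sum_{h,\ell}$ of the last columns of the head outputs.

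Next we analyze one head in the hardmax setting. The score in row $i\le N$, column $j\le N$ is $-\sum_t a_i[t]b_j[t]c_{h,\ell}[t]\in\set{0,-1,\dots,-d}$, and the score in column $N+1$ is $0$. The maximum is therefore $0$, and any non-maximal entry is at most $-1$, so the gap condition of \Cref{lem:softmax-sim-hardmax} holds; the $V$ entries also lie in $\set{0,1,2}$. This gives $Z_{h,\ell}[i]=2$ if no $b_j$ makes $(a_i,b_j,c_{h,\ell})$ orthogonal, and $Z_{h,\ell}[i]\le 3/2$ otherwise. Row $N+1$ has all scores $0$, so its output is some fixed average $w=(N+2)/(N+1)$ that does not depend on the instance.

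Finally we apply \Cref{lem:softmax-sim-hardmax} with error $1/\poly(NHL)$ per head. Since $L,H=\poly(N)$, this error is $1/\poly(N)$. Summed over the $LH$ heads, the entry $(i,2d+2)$ of $T_C(X)$ is within $LH/\poly(N)\ll 1/(10N)$ of $\sum_{h,\ell}Z_{h,\ell}[i]$. The algorithm $\mathcal{A}$ returns these entries up to an additional $1/(10N)$ error. We then sum the $N$ entries for rows $i\le N$. With no orthogonal triple the sum is $2NHL$, and otherwise it is at most $2NHL-1/2$. The accumulated error is at most $N\cdot(2/(10N))<1/4$, so comparing the sum against $2NHL-1/4$ decides the instance. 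The total time is $\tO{LHd+Nd+B}$, and $Nd\le B$ because $\mathcal{A}$ must read its input.

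The main obstacle is error bookkeeping. The per-entry error $1/(10N)$ is summed over $N$ rows, which is why we aggregate only over $i$ rather than over $i$ and the heads separately; the head summation happens inside the transformer, where the softmax error is polynomially small. We also have to ensure that the scaling needed by \Cref{lem:softmax-sim-hardmax} stays within $1/\poly$ relative to $LH$, which holds because $L,H=\poly(N)$.
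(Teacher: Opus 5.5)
Your proposal is correct and matches the paper's proof essentially step by step. It uses the same input and embedding construction, the same preservation of the first $2d+1$ columns, the same hardmax value of $2$ versus at most $\tfrac{3}{2}$ per head, and the same error accounting in which summing the last column over rows $i\le N$ keeps the total error at about $0.2$. Your only additions are small refinements: you fold the scaling from \Cref{lem:softmax-sim-hardmax} directly into $W_Q$, check its integer-gap and $V\in\{0,1,2\}$ hypotheses explicitly, and account for the $O(Nd)$ cost of writing down $X$.
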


First, we observe that \Cref{thm:small-embedding-formal} follows from \Cref{lemma:transformer-compute-k-OV}.

\begin{proof}[Proof of \Cref{thm:small-embedding-formal} assuming \Cref{lemma:transformer-compute-k-OV}]
    Suppose for contradiction that there is an algorithm computing any transformer with input length $N+1$, embedding dimension $m = 2d+2$, $L$ layers, $H$ heads in each layer, input and output dimension $m$ up to entry-wise additive error $\frac{1}{10N}$ in time $\bigO{LHN^{2 - \eps}}$ for some $\eps > 0$.
    Then, consider an unbalanced $\kOVf{3}$ instance with dimension $d = \Theta(\log N)$ and size $N, N, LH$.
    \Cref{lemma:transformer-compute-k-OV} implies that there is a $\tO{LHd + LHN^{2-\varepsilon}}$ algorithm for this problem, which contradicts the $\kOVf{3}$ Hypothesis.
\end{proof}

We now prove \Cref{lemma:transformer-compute-k-OV} to complete the proof of \Cref{thm:small-embedding-formal}.

\begin{proof}[Proof of \Cref{lemma:transformer-compute-k-OV}]
    Consider a $\kOVf{3}$ instance $(A, B, C)$ with $|A| = |B| = N$ and $|C| = LH$.
    We construct the following transformer $T$ with $m = \din = \dout = 2d + 2$.
    Suppose we have an algorithm computing $T$ up to entry-wise additive error $\frac{1}{10N}$.
    Let the input $X \in \R^{(N + 1) \times \din}$ be
    \begin{align*}
        X &= \begin{pmatrix}
            a_1^{\T} & b_1^{\T} & 1 & 0 \\
            a_2^{\T} & b_2^{\T} & 1 & 0  \\
            \vdots & \vdots & \vdots & \vdots \\
            a_{N}^{\T} & b_{N}^{\T} & 1 & 0 \\
            0^{\T} & 0^{\T} & 2 & 0
        \end{pmatrix},
    \end{align*}
    where $a_i, b_i$ are vectors in $A, B$. Additionally, we index $C = \set{c_{h, \ell}: 1 \leq h \leq H, 1 \leq \ell \leq L}$ and define
    \begin{align*}
        Q_{h, \ell}(X) = - \begin{pmatrix}
            a_1^{\T} & 0_{1 \times (d + 2)} \\
            a_2^{\T} & 0_{1 \times (d + 2)} \\
            \vdots & \vdots \\
            a_{N}^{\T} & 0_{1 \times (d + 2)} \\
            0_{1 \times d} & 0_{1 \times (d + 2)}
        \end{pmatrix},K_{h, \ell}(X) = \begin{pmatrix}
            b_1^{\T} \odot c_{h, \ell}^{\T} & 0_{1 \times (d + 2)} \\
            b_2^{\T} \odot c_{h, \ell}^{\T} & 0_{1 \times (d + 2)} \\
            \vdots & \vdots \\
            b_{N}^{\T} \odot c_{h, \ell}^{\T} & 0_{1 \times (d + 2)} \\
            0_{1 \times d} & 0_{1 \times (d + 2)}
        \end{pmatrix}, V_{h, \ell}(X) = \begin{pmatrix}
            0_{1 \times (2d + 1)} & 1 \\
            \vdots  & \vdots\\
            0_{1 \times (2d + 1)} & 1 \\
            0_{1 \times (2d + 1)} & 2 
        \end{pmatrix}
    \end{align*}
    such that $Q_{h,\ell}(X) \in \mathbb{R}^{(N+1) \times m}, K_{h,\ell}(X) \in \mathbb{R}^{(N+1) \times m}, V_{h,\ell} \in \mathbb{R}^{(N+1)\times m}$ where $b \odot c \in \R^{d}$ is the element-wise product of two vectors $a, b$.
    
    We claim that creating the transformer requires $O(d H L)$ time since each attention head can be created in time $O(d)$ time (i.e. the embedding maps can be specified in $O(d)$ time).
    In addition, we can encode $Q_{h, \ell}(X), K_{h, \ell}(X)$ as desired. 
    In particular, $Q_{h, \ell}, K_{h, \ell}: \R^{m} \rightarrow \R^{m}$ apply the following maps to each row:
    \begin{align*}
        Q_{h, \ell}&: v \mapsto - (v_{1}, \dots, v_{d}, 0, \dots, 0)\text{,} 
        \\ K_{h, \ell}&: v \mapsto (v_{d + 1} \cdot c_{h, \ell}[1],\dots, v_{2d} \cdot c_{h, \ell}[d], 0, \dots, 0) \text{,}
        \\ V_{h, \ell}&: v \mapsto (0, \dots, 0, 0, v_{2d + 1}) \text{.}
    \end{align*}
    Observe that the above maps can be expressed using weight matrices in $\R^{m \times m}$:
    \begin{align*}
        W_{Q, h, \ell} &= \begin{pmatrix}
            - I_{d \times d} & 0_{d \times (d + 2)} \\
            0_{(d + 2) \times d} & 0_{(d + 2) \times (d + 2)}
        \end{pmatrix}\text{,} \\
        W_{K, h, \ell} &= \begin{pmatrix}
            0_{d \times d} & 0_{d \times (d + 2)} \\
            \diag(c_{h, \ell}) & 0_{d \times (d + 2)} \\
            0_{2 \times d} & 0_{2 \times (d + 2)}
        \end{pmatrix}\text{,}\\
        W_{V, h, \ell} &= \begin{pmatrix}
            0_{2d \times (2d + 1)} & 0_{2d \times 1} \\
            0_{1 \times (2d + 1)} & 1 \\
            0_{1 \times (2d + 1)} & 0
        \end{pmatrix}
    \end{align*}
    so that $Q_{h, \ell}(X) = X W_{Q, h, \ell}$ and analogously for $K_{h, \ell}, V_{h, \ell}$.
     
    Firstly, we observe that the first $2d+1$ columns will never change throughout the computation. 
    We remind the readers that $X^{(\ell)}$ is the output of $\ell$-th transformer layer.
    
    \begin{claim}
        \label{clm:3-ov-2-col}
        For all $\ell$, the first $2d + 1$ columns of $X^{(\ell)}$ are the first $2d + 1$ columns of $X$. 
    \end{claim}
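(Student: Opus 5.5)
We argue by induction on $\ell$, with base case $\ell = 0$ trivial since $X^{(0)} = X$. The transformer has no MLPs, so each $\Psi_{\ell}$ is the identity map on $\R^{m}$. The update rule of \Cref{def:tran} therefore reduces to
\begin{equation*}
    X^{(\ell)} = X^{(\ell-1)} + \sum_{h=1}^{H} \sm\left(Q_{h,\ell}(X^{(\ell-1)}) K_{h,\ell}(X^{(\ell-1)})^{\T}\right) V_{h,\ell}(X^{(\ell-1)}) \text{.}
\end{equation*}
It suffices to show that for every head the product $\sm(\cdot)\,V_{h,\ell}(X^{(\ell-1)})$ is zero in its first $2d+1$ columns. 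Adding such matrices to $X^{(\ell-1)}$ then leaves those columns unchanged, and the inductive hypothesis identifies them with the first $2d+1$ columns of $X$.

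The key observation is that $V_{h,\ell}(X^{(\ell-1)}) = X^{(\ell-1)} W_{V,h,\ell}$. The first $2d+1$ columns of $W_{V,h,\ell}$ are identically zero, so the first $2d+1$ columns of $V_{h,\ell}(X^{(\ell-1)})$ vanish regardless of what $X^{(\ell-1)}$ is. Left-multiplying by any $(N+1)\times(N+1)$ matrix, here the softmax matrix, preserves zero columns. So every head's output vanishes in columns $1,\dots,2d+1$, which gives the inductive step.

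We should also record the consequence used later. Since $Q_{h,\ell}$, $K_{h,\ell}$ and $V_{h,\ell}$ read only coordinates $1,\dots,2d+1$ (the last row of each weight matrix is zero), the embeddings computed at layer $\ell$ coincide with those computed on $X$, namely the displayed matrices. The main subtlety is only bookkeeping. At layer $1$ the input has the same dimension $2d+2$, since we padded $X$ with a zero last column, so the maps are well defined throughout. The last column evolves, but it is never read by any embedding.
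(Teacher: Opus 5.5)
Your proof is correct and matches the paper's argument. The first $2d+1$ columns of every $V_{h,\ell}(X^{(\ell-1)})$ vanish, so each head's output is zero there, and induction over the residual update $X^{(\ell)} = X^{(\ell-1)} + \sum_{h} f_{h,\ell}(X^{(\ell-1)})$ preserves those columns; your added remark that the embeddings read only coordinates $1,\dots,2d+1$ is also correct and is what the paper implicitly uses afterward to evaluate every head on $X$.
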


    \begin{proof}
        The claim follows as the first $2d + 1$ columns of $V_{h, \ell}(X)$ are $0$ for all $h, \ell$. 
        As a result, the output of each head consists of $0$ in the first $2d + 1$ columns.
        Then, since $X^{(\ell)} = X^{(\ell - 1)} + \sum_{h} f_{h, \ell}(X^{(\ell - 1)})$ where $f_{h, \ell}$ denotes the $h$-th attention head in the $\ell$-th layer, we have that the first $2d + 1$ columns of $X^{(\ell)}$ are the first $2d + 1$ columns of $X^{(\ell - 1)}$.
        By induction, the first $2d + 1$ columns of $X^{(\ell)}$ are the first $2d + 1$ columns of $X^{(0)} = X$.
    \end{proof}
    
Secondly, we claim that the given hardmax transformer computes the $\kOVf{3}$ instance. Notice that this suffices because of \Cref{lem:softmax-sim-hardmax}, as an algorithm computing any softmax transformer can approximate any hardmax transformer with the same dimension.
    Consider the $(h, \ell)$-th attention head.
    Note that
    \begin{equation*}
        (Q_{h, \ell}(X) K_{h, \ell}(X)^{\T})_{i, j} = \begin{cases}
            0 & \textup{if } \max(i, j) = N + 1 \\
            - a_{i} \cdot b_{j} \cdot c_{h, \ell} & \textup{otherwise}.
        \end{cases}
    \end{equation*}
    
    For $1 \leq i \leq N$, the maximum entry in the $i$-th row is $0$, since $(Q_{h, \ell}(X)K_{h, \ell}(X)^{\T})_{i, N+1} = 0$.
    Recalling that $\imax(A(X)_{i})$ denotes the set of indices attaining maximum value in the $i$-th row, we have that $|\imax(A(X)_{i})| \geq 2$ if $(a_{i}, c_{h, \ell})$ participate in an orthogonal triple and $|\imax(A(X)_{i})| = 1$ otherwise. 
    In particular, let $t_{i} := t_{i, (h, \ell)}$ denote the number of $j$ such that $a_{i} \cdot b_{j} \cdot c_{h, \ell} = 0$.
    Then we have
    \begin{equation*}
        \hm(Q_{h, \ell}(X) K_{h, \ell}(X)^{\T})V_{h, \ell}(X) = \begin{pmatrix}
            0 & \cdots & 0 & \frac{t_{1} + 2}{t_{1} + 1} \\
            0 & \cdots & 0 & \frac{t_{2} + 2}{t_{2} + 1} \\
            \vdots & \ddots & \vdots & \vdots \\
            0 & \cdots & 0 & \frac{t_{N} + 2}{t_{N} + 1} \\
            0 & \cdots & 0 & \frac{N + 2}{N + 1} 
        \end{pmatrix}
    \end{equation*}
    In particular, for every $i \le N$, the $i$-th entry in the last column is $2$ if $t_{i} = 0$ and at most $\frac{3}{2}$ otherwise.
    \Cref{lem:softmax-sim-hardmax} then implies that there is a transformer such that each attention head computes an output $Y_{h, \ell}$ that approximates the hardmax attention head with embeddings $Q_{h, \ell}, K_{h, \ell}, V_{h, \ell}$ well.
    In particular, the output of the $h$-th head in the $\ell$-th layer, denoted $Y_{h, \ell}$, satisfies
    \begin{equation}
        \label{eq:precision-3-ov}
        \norm{Y_{h, \ell} - \hm(Q_{h, \ell}(X) K_{h, \ell}(X)^{\T})V_{h, \ell}(X)}_{\infty} \leq \frac{1}{10 NHL}
    \end{equation}
    for some inverse polynomial in $L, H, N$.
    
    Let $Z_{h, \ell}$ denote the last column of $\hm(Q_{h, \ell}(X) K_{h, \ell}(X)^{\T})V_{h, \ell}(X)$.
    If the $\kOVf{3}$ instance is a no-instance (i.e. there is no orthogonal triple), then every $Z_{h, \ell}[i] = 2$, i.e.
    \begin{equation*}
         \sum_{h, \ell} \sum_{i = 1}^{N} Z_{h, \ell}[i] = 2NHL \text{.}
    \end{equation*}
    On the other hand, if the $\kOVf{3}$ is a yes-instance (i.e. there is an orthogonal triple), then there exists some $i, (h, \ell)$ where $Z_{h, \ell}[i] \leq \frac{3}{2}$ i.e. 
    \begin{equation*}
         \sum_{h, \ell} \sum_{i = 1}^{N} Z_{h, \ell}[i] \leq 2NHL - \frac{1}{2}\text{.}
    \end{equation*}

    Let us then consider the output $X^{(L)}$, in particular focusing on the last column, since the first $2d + 1$ columns are exactly those of $X = X^{(0)}$ by \Cref{clm:3-ov-2-col}.
    Observe that our transformer outputs for all $1 \leq i \leq N$
    \begin{equation*}
        X^{(L)}[i, m] = X^{(0)}[i, m] + \sum_{h, \ell} Y_{h, \ell}[i, m] = \sum_{h, \ell} Y_{h, \ell}[i, m] \text{.}
    \end{equation*}
    Since our algorithm computes the given transformer up to entry-wise additive error $\frac{1}{10N}$, we in fact obtain in the last column a vector $\hat{X}$ such that $|\hat{X}_{j} - X^{(L)}[i, m]| \leq \frac{1}{10N}$ for all $i \in [N]$.
    
    Thus, from the computed output of the transformer, we can compute in $O(N)$ time $\sum_{j = 1}^{N} \hat{X}[j]$ such that
    \begin{align*}
        \left| \sum_{i = 1}^{N} \hat{X}[i] - \sum_{h, \ell} \sum_{i = 1}^{N} Z_{h, \ell}[i] \right| &\leq \left| \sum_{i = 1}^{N} \hat{X}[i] - \sum_{i = 1}^{N} X^{(L)}[i, m] \right| \\
        &\quad\quad + \left| \sum_{i = 1}^{N} X^{(L)}[i, m] - \sum_{h, \ell} Y_{h, \ell}[i, m] \right| \\
        &\quad\quad + \left| \sum_{h, \ell} Y_{h, \ell}[i, m] - \sum_{h, \ell} Z_{h, \ell}[i] \right| \\
        &\leq 0.2 \text{.}
    \end{align*}
    Here, we apply the triangle inequality, apply the approximation error of our algorithm for the first term, our equation above for the second term and \eqref{eq:precision-3-ov} for the third term.
    Note that from the output of the transformer is a $(N + 1) \times m$ matrix, but we sum only the first $N$ entries of the last column.
    
    Thus, the sum $\sum_{j = 1}^{N} \hat{X}[j]$ is at least $2NHL - 0.2$ if there is no orthogonal triple and at most $2NHL - 0.3$ otherwise.
    Therefore, we can compute the output of the $\kOVf{3}$ instance.
    
\end{proof}

We conclude with a simple conditional lower bound which suggests that efficient transformers require large size to compute $\kOVf{3}$, suggesting that small size transformers cannot handle some instances of $3$-wise interaction.

\begin{theorem}
    Under the $\kOVf{3}$ Hypothesis, any transformer with linear time embedding and MLP maps computing $d$-dimensional $\kOVf{3}$ requires $LHm = N^{1 - o(1)}$.
\end{theorem}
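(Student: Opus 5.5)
The plan is a contrapositive argument: a transformer that solves $\kOVf{3}$ while $LHm$ is small can be simulated quickly, and that would refute the hypothesis. First fix the input convention. The transformer receives the $\kOVf{3}$ instance as its token sequence of length $N$; for concreteness, each token is one vector of $A \cup B \cup C$, with $|A|=|B|=|C|=n$ and $N = \Theta(n)$, padded to dimension $\din = \Theta(d)$. It outputs a value from which the answer is read off in $O(N)$ time, say by thresholding one output entry.

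Next, bound the cost of evaluating such a transformer directly. Each embedding map is linear-time, so computing $Q_{h,\ell}(X), K_{h,\ell}(X), V_{h,\ell}(X)$ for all $N$ tokens costs $O(Nm)$ per head. Forming $Q K^{\T}$ costs $O(N^2 m)$, the row-wise softmax costs $O(N^2)$, and multiplying by $V$ costs $O(N^2 m)$. Adding the residual and applying the linear-time MLP to each token costs $O(Nm)$ per layer. In total, computing $T(X)$ naively takes $O(LHN^2 m + LNm) = O(LHmN^2)$ time.

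Now assume for contradiction that $LHm \le N^{1-\eps}$ for some $\eps > 0$. The simulation then solves $\kOVf{3}$ on sets of size $\Theta(N)$ in time $O(N^{3-\eps})$, plus $O(Nd)$ time to build the input and read the output. This contradicts the $\kOVf{3}$ Hypothesis (\Cref{hyp:k-ov}) with $k = 3$, since $n^{3-o(1)}$ time is required. Hence $LHm \ge N^{1-o(1)}$.

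The main obstacle is precision. The transformer computes with reals, so the simulation must evaluate softmax accurately enough to recover the decision bit. I would handle this by assuming, as the statement implicitly does, that the transformer's output separates yes- and no-instances by an inverse-polynomial margin. Then computing each head up to $1/\poly(N)$ additive error with $O(\log N)$-bit arithmetic adds only polylogarithmic overhead, absorbed into $N^{o(1)}$. A further subtlety is that the unit-cost accounting must hold for whatever weights the transformer uses. Bounded-precision weights keep every operation polylogarithmic, and a finer-grained bound $O(HN^2 m)$ per layer is unnecessary because the crude bound already suffices.
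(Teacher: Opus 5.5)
Your proposal is correct and follows the paper's proof. Both evaluate the transformer with the standard algorithm (linear-time embeddings and MLPs give $O(N^2 m)$ per head, so $\tilde{O}(LHmN^2)$ in total) and observe that $LHm \le N^{1-\eps}$ would then solve $\kOVf{3}$ in $N^{3-\eps}$ time, contradicting the hypothesis. Your precision paragraph goes beyond the paper, which simply counts unit-cost real operations; if you keep it, the claim that per-head $1/\poly(N)$ error stays controlled across all $L$ layers would need its own justification, but the core argument does not depend on it.
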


\begin{proof}
    Suppose there is a transformer $T$ with embedding dimension $m = \Theta(\log N)$, $L$ layers, $H$ heads in each layer, input dimension $m$, output dimension $m$ that computes $\kOVf{3}$ with dimension $m$.
    Recall that in an efficient transformer, all embedding and MLP maps require $\tO{m}$ time, so that each attention head is computed in $O(Nm + N^2 m) = O(N^2 m)$ time.
    Following the standard algorithm, this transformer can be computed in time $\tO{N^2 m H L}$.
    In particular, under the $\kOVf{3}$ Hypothesis, this requires $L H m = N^{1 - o(1)}$.
\end{proof}

\section{Tools for Algebraic Circuits}

In this section, we establish some tools for (extended) arithmetic circuits which will be used in our lower bound proofs for large embedding dimensions. 

While \eAC is a more powerful class in general (e.g. \AC cannot compute $\exp(x)$), we show that the additional gates are not necessary for computing matrix multiplication.
More generally, we argue that we can in fact use standard arithmetic circuits to simulate an \eAC that computes low degree functions.
For our application, we give the result for quadratic functions. 
The proof generalizes that of~\cite{strassen1973vermeidung} which states that division gates are unnecessary for quadratic functions.

\begin{theorem}
    \label{thm:ac-simulates-eac}
    Let $F = \{F_{1}, \dotsc, F_{m}\}$ where $F_{i} = \sum_{k,\ell=1}^{n} t_{ik\ell}x_{k}x_{\ell}$.
    Suppose there is an \eAC $\cC$ of size $s$ computing $F$.
    Then there is an \AC of size $O(s)$ computing $F$.
\end{theorem}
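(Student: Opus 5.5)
We follow Strassen's division-elimination argument, working with truncated power series. First, pick a point $\xi \in \R^{n}$ at which every gate of $\cC$ is well defined: all divisors are nonzero and all $\ln$ arguments are positive. Since $\cC$ computes $F$ on all of $\R^n$, every input is such a point, so $\xi = 0$ works; if needed, any generic point does. Substitute $x = \xi + z\,y$, with $y$ a vector of fresh indeterminates and $z$ a scalar indeterminate. Each gate $g_j$ then becomes an analytic function of $z$ near $0$, hence a formal power series $g_j = \sum_{r\ge 0} g_j^{(r)} z^r$. Its coefficients $g_j^{(r)}$ are polynomials in $y$ that are homogeneous of degree $r$.

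Next, every $F_i$ is quadratic, so after a fixed linear change back it is determined by the coefficients of $z^0, z^1, z^2$ of $F_i(\xi + zy)$. Concretely, $F_i(\xi+zy) = F_i(\xi) + z\,L_i(y) + z^2 F_i(y)$, so the $z^2$-coefficient is exactly $F_i(y)$, and setting $y = x$ recovers $F_i$. It therefore suffices to compute, gate by gate, the truncation of each $g_j$ modulo $z^3$: the triple $(g_j^{(0)}, g_j^{(1)}, g_j^{(2)})$. Here $g_j^{(0)}$ is a real constant (the gate's value at $\xi$), $g_j^{(1)}$ is linear in $y$, and $g_j^{(2)}$ is quadratic in $y$.

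The main step is showing that each gate type updates this triple with $O(1)$ operations from $\set{+,-,\times}$ plus multiplication by precomputed real constants, with no division by non-constants and no $\exp$ or $\ln$.
\begin{itemize}[leftmargin=*]
\item For $+$ and $-$, act coordinatewise.
\item For $\times$, use $(uv)^{(1)} = u^{(0)}v^{(1)} + u^{(1)}v^{(0)}$ and $(uv)^{(2)} = u^{(0)}v^{(2)} + u^{(1)}v^{(1)} + u^{(2)}v^{(0)}$.
\item For division, the denominator has $v^{(0)} \neq 0$, so $1/v = \frac{1}{v^{(0)}}\bigl(1 - w + w^2 - \cdots\bigr)$ with $w = (v - v^{(0)})/v^{(0)}$. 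Truncating mod $z^3$ uses only the constant $1/v^{(0)}$ and a few products, and the quotient follows by multiplication.
\item For $\exp$, $\exp(u) = e^{u^{(0)}}\bigl(1 + (u-u^{(0)}) + \tfrac12 (u-u^{(0)})^2\bigr) \bmod z^3$, where $e^{u^{(0)}}$ is a real constant hard-wired into the circuit.
\item For $\ln$, the argument has $u^{(0)} > 0$, and $\ln u = \ln u^{(0)} + w - \tfrac12 w^2 \bmod z^3$ with $w = (u-u^{(0)})/u^{(0)}$. Again only the constants $\ln u^{(0)}$ and $1/u^{(0)}$ are needed.
\end{itemize}
Since the definition of \AC allows arbitrary real constants, all of these are legitimate \AC operations.

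The step I expect to need the most care is the bookkeeping that keeps each gate's cost at $O(1)$ rather than $O(n)$. We never expand $g_j^{(1)}$ or $g_j^{(2)}$ into monomials. Instead, each is represented by a single gate of a new \AC on inputs $y$, computing that homogeneous polynomial as a function of $y$. Each update rule above is then a constant number of gates acting on previous such gates. The only subtlety is the product $u^{(1)}v^{(1)}$, which is one multiplication of two gates and does produce the correct degree-$2$ part. Dropping terms of degree $\ge 3$ is valid because we only ever need coefficients up to $z^2$, and products never feed higher coefficients back into lower ones.

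The resulting \AC has size $O(s)$ and outputs $g^{(2)}$ at each output gate, which equals $F_i(y)$. Renaming $y$ to $x$ completes the proof.
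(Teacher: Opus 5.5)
Your proposal is correct and follows the paper's proof essentially step for step. Both substitute $x \mapsto zx$ (your $\xi = 0$) and track the $z^0, z^1, z^2$ coefficients of each gate's power series with $O(1)$ gates per case, hard-wiring the constants $e^{u^{(0)}}$, $\ln u^{(0)}$ and $1/v^{(0)}$, and read $F_i$ off the $z^2$ coefficient. Your remark that each gate is analytic in $z$, so the formal series equals its Taylor expansion with homogeneous polynomial coefficients, justifies a step the paper only asserts.
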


\begin{proof}

    Let $\cC$ be an optimal \eAC for computing $F$, and let $W = (w_{1}, \dotsc, w_{s})$ be an ordering of all gates in $\cC$ where earlier gates are computed first.
    Without loss of generality, viewing each $w_{i}(x_{1}, \dots, x_{n})$ as a function, we can assume $w_{i} \neq 0$ for any $i$.
    Furthermore, since $\cC$ computes $F$ for all inputs $x \in \R^{n}$, $\cC$ is well defined for all $x$.
    In particular, if $w_{i}$ is ever provided as input to a $\ln$ gate, we have that $w_{i}(x_{1}, \dots, x_{n}) > 0$ for all $x$.
    Similarly, if $w_{i}$ is ever provided as a denominator to a $/$ gate, we have that $w_{i}(x_{1}, \dots, x_{n}) \neq 0$ for all $x$.
    By substituting $\tilde{x}_{i} \gets x_{i} z$, we can view each gate $w_{j}$ as a formal power series $\tilde{w}_{j} \in \R(x_1, \dots, x_{n})[[z]]$ with coefficients that are functions on $x_{1}, \dots, x_{n}$.
    We can do this as all the formal power series are  well-defined when $\cC$ is well-defined under all the inputs $x \in \R^n$. 
    First, the sum, difference, and product are well defined for any two formal power series, and exponentiation of a formal power series is well-defined (we will show the exact formulation later). 
    For a function $f$ corresponding to some gate, and its formal power series $\tilde{f}$, 
    observe that $\ln(\tilde{f})$ is well defined as long as $\tilde{f}$ has positive constant coefficient (see exact formulation later). 
    The constant coefficient of $\tilde{f}$ is $f(0, \dots, 0)$, and our assumption that inputs to $\ln$ gates are positive on all inputs ensures that $f(0, \dots, 0) > 0$, which implies the constant coefficient of $\tilde{f}$  as needed. 
    For a division gate $w_{j} = g/h$, our assumption that any gate input to the denominator of a division gates is non-zero for all inputs ensures that the constant coefficient of $\tilde{h}$ is $h(0, \dots, 0) \neq 0$.
    The following fact then guarantees that the formal power series of $w_{j}$ is well defined.

    \begin{fact}
        \label{fact:invertible-power-series}
        Let $R$ be a ring. 
        A power series $\tilde{h}(z) = \sum_{i} a_{i} z^{i} \in R[[z]]$ is invertible if and only if $a_{0} \neq 0$. 
        Furthermore, the inverse is given by
        \begin{equation*}
            \frac{1}{\tilde{h}(z)} = \frac{1}{a_{0}} \left( 1 + q + q^2 + q^3 + \dotsc \right)
        \end{equation*}
        where $q = - \sum_{i = 1}^{\infty} \frac{a_{i}}{a_{0}} z^{i}$.
    \end{fact}

    Viewing the gates as formal power series, we argue that the desired output can easily be computed from the low order coefficients of the power series of the relevant gates. Specifically, since all $F_i$ have degree at most $2$, we can add a new variable $z$ to all the input $x_i$ and only look at the coefficients for $z^0 ,z^1, z^2$. Suppose $w_{j}$ is an output gate $F_{i}$.
    Then, we have 
    \begin{equation*}
        F_i = w_{j}(x_{1}, \dots, x_{n}) = \sum_{k, \ell = 1}^{n} t_{i k \ell} x_{k} x_{\ell} \text{.} 
    \end{equation*}
    Notice that 
    \begin{equation*}
        w_{j}(x_{1} z, \dots, x_{n} z) = \sum_{k, \ell = 1}^{n} t_{i k \ell} x_{k} x_{\ell} z^{2} \text{,} 
    \end{equation*}
    whose formal power series is simply itself. 
    Hence, we can compute $F_i$ from the coefficient of $z^2$ in the formal power series $\tilde{w}_j$.

    We have shown that from the second coefficient of the power series corresponding to the output gates we can compute $F$ efficiently.
    While we will not be able to compute the full power series of each gate as there could be infinitely many terms, we describe how to compute the low-order coefficients, i.e. the coefficients of $1, z, z^2$, for all $\tilde{w}_{i}$.
    We proceed by induction.
    We denote $\tilde{w}_{i} = \sum_{j\ge 0} c_{j} z^{j}$ and compute $c_{0}, c_{1}, c_{2}$.
    For the input gates, $w_{i}(x_{1}, \dots, x_{n}) = x_{j}$ for some $x_{j}$ so that $\tilde{w}_{i} = x_{j} z$.
    In particular, $c_{0} = c_{2} = 0$ and $c_{1} = x_{j}$.
    This requires $O(1)$ gates.
 
    {\bf Case 1: Addition and Subtraction. }
    Suppose $\tilde{w}_{i} = \tilde{g}_{i} + \tilde{h}_{i}$ where $\tilde{g}_{i}, \tilde{h}_{i}$ are power series with
    \begin{equation*}
        \tilde{g}_{i} = \sum_{j = 0}^{\infty} u_{j} z^{j} \quad, \quad \tilde{h}_{i} = \sum_{j = 0}^{\infty} v_{j} z_{j} \text{.}
    \end{equation*}
    The first three coefficients can easily be combined via addition and subtraction i.e. $c_{j} = u_{j} \pm v_{j}$ for all $j$.
    Again, this requires $O(1)$ gates.

    {\bf Case 2: Multiplication. }
    Suppose $\tilde{w}_{i} = \tilde{g}_{i} \cdot \tilde{h}_{i}$ where $\tilde{g}_{i} = \sum_{j = 0}^{\infty} u_{j} z^{j}$ and $\tilde{h}_{i} = \sum_{j = 0}^{\infty} v_{j} z^{j}$.
    Then, if $\tilde{w}_{i} = \sum_{j} c_{j} z^{j}$, we have $c_{0} = u_{0} v_{0}$, $c_{1} = u_{0} v_{1} + u_{1} v_{0}$, and $c_{2} = u_{2} v_{0} + u_{1}v_{1} + u_{0}v_{2}$.
    This requires $O(1)$ gates.

    {\bf Case 3: Division. }
    Suppose $\tilde{w}_{i} = \tilde{g}_{i}/\tilde{h}_{i}$ where $\tilde{g}_{i} = \sum_{j = 0}^{\infty} u_{j} z^{j}$ and $\tilde{h}_{i} = \sum_{j = 0}^{\infty} v_{j} z^{j}$.
    Since $v_{0} \neq 0$, we obtain the inverse via \Cref{fact:invertible-power-series} as
    \begin{equation*}
        \frac{1}{\tilde{h}_{i}} = \frac{1}{v_{0}} \left( 1 +q +  q^2 + q^3 + \dotsc \right)
    \end{equation*}
    with $q = - \sum_{j = 1}^{\infty} \frac{v_{j}}{v_{0}} z^{j}$.
    In particular, we have
    \begin{align*}
        \frac{\tilde{g}_{i}}{\tilde{h}_{i}} &= \frac{1}{v_{0}}(u_{0} + u_{1} z + u_{2} z^{2} + \dots) \cdot \left(1 + \left( - \frac{v_{1}}{v_{0}} z - \frac{v_{2}}{v_{0}} z^2 - \dots \right) + \left( - \frac{v_{1}}{v_{0}} z -  \dots\right)^2 + \dots \right) \\
        &= (u_{0}' + u_{1}' z + u_{2}' z^{2} + u_{3}' z^{3} + \dots) \cdot \left(1 + \left( - v_{1}' z - v_{2}' z^2 - \dots \right) + \left( - v_{1}' z -  \dots\right)^2 + \dots \right)
    \end{align*}
    where we set $u_{j}' = \frac{u_{j}}{v_{0}}$ and $v_{j}' = \frac{v_{j}}{v_{0}}$.
    Finally, we have
    \begin{align*}
        c_{0} &= u_{0}' \\
        c_{1} &= u_{1}' - u_{0}' v_{1}' \\
        c_{2} &= u_{2}' - u_{1}' v_{1}' + u_{0}' (v_{1}'^2 - v_{2}') \text{.}
    \end{align*}
    This requires $O(1)$ gates.

    {\bf Case 4: Exponent. }
    Suppose $\tilde{w}_{i} = \exp(\tilde{g}_{i})$ where $\tilde{g}_{i} = \sum_{j=0}^\infty u_{j}z^j$. 
    Then, using the formal power series of $\exp(x) = \sum_{k = 0}^{\infty} \frac{x^{k}}{k!}$, we obtain
    \begin{align*}
        \exp(\tilde{g}_{i}) &= \exp \left( \sum_{j \ge 0} u_{j}z^j \right) \\
        &= \exp(u_{0}) \exp \left(\sum_{j \ge 1} u_{j}z^j \right) \\
        &= \exp(u_{0}) \left(
        \sum_{k \ge 0} \frac{1}{k!} \left( \sum_{j \ge 1} u_{j}z^j\right)^k\right).
    \end{align*}
    Note that for every fixed $j \ge 0$, there are finitely many terms in the above sum that contribute to the coefficient of $z^j$, so the formal power series is well-defined. In particular, 
    \begin{align*}
        c_{0} &= e^{u_{0}} \\
        c_{1} &= e^{u_{0}} u_{1} \\
        c_{2} &= e^{u_{0}} \left( u_{2} + \frac{u_{1}^2}{2} \right) \text{.}
    \end{align*}
    This requires $O(1)$ gates. 
    Even though the formulas use $\exp(u_0)$, it is a constant and can be implemented in an \AC. 

    {\bf Case 5: Logarithm. }
    Suppose $\tilde{w}_{i} = \ln(\tilde{g}_{i})$ where $\tilde{g}_{i} = \sum_{j \ge 0} u_j z^j$.  
    Then applying the power series identity $\ln(1 + x) = \sum_{k=1}^{\infty} (-1)^{k + 1} \frac{x^{k}}{k}$, we have
    \begin{align*}
        \ln(\tilde{g}_{i}) &= \ln u_{0} + \ln \left( 1 + \sum_{j \ge 1} \frac{u_{j}}{u_{0}} z^j \right) \\
        &= \ln u_{0} + \sum_{k \ge 1} \frac{(-1)^{k+1}}{k} \cdot \left(\sum_{j \ge 1} \frac{u_{j}}{u_{0}} z^j\right)^k
    \end{align*}
    Again, for every fixed $j \ge 0$, there are finitely many terms in the above sum that contribute to the coefficient of $z^j$, so the formal power series is well-defined. 
    In particular, 
    \begin{align*}
        c_{0} &= \ln u_{0}\\
        c_{1} &= \frac{u_{1}}{u_0}\\
        c_{2} &= \frac{u_{2}}{u_0} - \frac{1}{2} \left(\frac{u_{1}}{u_0}\right)^2 \text{.}
    \end{align*}
    This requires $O(1)$ gates. Similarly as the $\exp$ case, we only use $\ln$ on a constant, so it can be implemented in the \AC model. 
    
    Overall, our circuit requires $O(s)$ gates to compute all the relevant coefficients.
\end{proof}

The next tool we require is the Baur-Strassen Theorem, which states that given an \AC that computes a function, we can compute all the partial derivatives with a similar sized circuit.

\begin{theorem}[\cite{DBLP:journals/tcs/BaurS83}]
    \label{thm:baur-strassen}
    Let $F: \R^{n} \rightarrow \R$ be a function and $F' = (\pderiv{F}{x_{1}}, \dots, \pderiv{F}{x_{n}})$.
    If there is an \AC of size $s$ computing $F$, there is an \AC of size $O(s)$ computing $F'$.
\end{theorem}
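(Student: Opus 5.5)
The statement is the classical Baur--Strassen theorem, and I would prove it by reverse-mode differentiation (backpropagation) on the circuit.

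\textbf{Setup.} Write the circuit as a straight-line program $g_1,\dots,g_s$ with output $g_s = F$. Each gate $g_j$ has at most two parents, each of which is an input $x_i$, an earlier gate, or a constant. Define the adjoint $\bar g_j := \pderiv{F}{g_j}$: the derivative of the output with respect to $g_j$, treating $g_j$ as a free variable and recomputing all later gates from it. The adjoint $\bar x_i$ of an input is defined the same way, and by construction $\bar x_i = \pderiv{F}{x_i}$, which is exactly what we must output.

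\textbf{Chain rule.} The adjoints satisfy
\[
\bar g_j = \sum_{k : g_j \text{ is a parent of } g_k} \bar g_k \cdot \pderiv{g_k}{g_j},
\]
with base case $\bar g_s = 1$. The same identity holds for each input adjoint $\bar x_i$.

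\textbf{Local derivatives.} Each $\pderiv{g_k}{g_j}$ is a local partial derivative of a single binary operation, and it is expressible by $O(1)$ arithmetic gates from values already in the forward circuit:
\begin{itemize}
\item for $g_k = y \pm z$ the derivatives are $1$ and $\pm 1$;
\item for $g_k = yz$ they are $z$ and $y$;
\item for $g_k = y/z$ they are $1/z$ and $-g_k/z$.
\end{itemize}
The denominators $z$ are nonzero wherever the original circuit is defined.

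\textbf{New circuit.} First run the forward circuit. Then process gates in reverse order $k = s, s-1, \dots, 1$. For each $g_k$ and each parent $p$ of $g_k$, add $\bar g_k \cdot \pderiv{g_k}{p}$ into an accumulator for $\bar p$. Every adjoint $\bar g_k$ is complete before it is used, because all children of $g_k$ have larger indices and were processed earlier. Each gate has at most two parents, so each contributes $O(1)$ multiplications, additions, and local-derivative computations. The added work is therefore $O(s)$, plus $O(n)$ to output the input adjoints (zero for inputs not used).

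\textbf{Why the result is correct.} Correctness is the multivariate chain rule applied by induction on the reverse topological order. At each input point where the circuit is defined, $F$ is a composition of functions differentiable there, namely rational functions away from their poles.

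\textbf{Main obstacle.} The delicate point is the size bound. It relies on each gate having fan-in at most $2$ while fan-out can be arbitrary: the cost of summing contributions into $\bar p$ is charged to the children $g_k$, not to $p$. The $O(n)$ term must also be absorbed into $O(s)$. This is fine whenever $F$ depends on all inputs, since then $s \ge n/2$. Otherwise the partials of the unused inputs are the constant $0$ and need no gates, provided we allow outputs to be constants, as is standard.
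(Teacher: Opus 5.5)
Your proof is correct and takes essentially the same approach as the paper. The paper only cites Baur--Strassen for the standard theorem, but its proof of the extended version follows Morgenstern's constructive argument: define $F^{(i)}$ as the function computed by $g_i,\dots,g_s$ with $x_1,\dots,x_n,g_1,\dots,g_{i-1}$ treated as free inputs, then update $\pderiv{F^{(i)}}{z_j} = \pderiv{F^{(i+1)}}{z_j} + \pderiv{F^{(i+1)}}{g_i}\pderiv{g_i}{z_j}$ in reverse order. That is exactly your adjoint accumulation with the cost charged to the fan-in-two child gates, and your shared constant zero for unused inputs matches the paper's remark that two constant gates suffice.
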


After Baur and Strassen \cite{DBLP:journals/tcs/BaurS83} obtained this result, Morgenstern \cite{DBLP:journals/sigact/Morgenstern85} showed a simplified constructive proof for the Baur-Strassen theorem. 
Following the Morgenstern's proof strategy, we generalize the Baur-Strassen theorem to \eAC. 

\begin{theorem}[Extended Baur-Strassen Theorem]
    \label{thm:baur-strassen-extend}
    Let $F: \R^{n} \rightarrow \R$ be a function and let $F' = \set{\pderiv{F}{x_{1}}, \dotsc, \pderiv{F}{x_{n}}}$.
    If there is an \eAC  of size $s$ computing $F$, there is an \eAC  of size $O(s)$ computing $F'$.
\end{theorem}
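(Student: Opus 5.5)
We follow Morgenstern's reverse-mode (backpropagation) argument. The plan is to define adjoints for every gate, show each gate contributes only $O(1)$ extra work, and then check that the new $\exp$ and $\ln$ gates fit the same template. Let $\cC = (g_1, \dots, g_s)$ be the \eAC computing $F$, with $g_s = F$. For each $j$, let $F_j$ denote the function of the variables $x_1,\dots,x_n,g_1,\dots,g_j$ obtained by treating $g_1,\dots,g_j$ as free inputs and evaluating the remaining gates $g_{j+1},\dots,g_s$ as in $\cC$. Thus $F_s = g_s$ and $F_0 = F$. Define the adjoint $\bar g_j := \pderiv{F_j}{g_j}$, evaluated at the actual gate values. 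The chain rule then gives the backward recursion
\begin{equation*}
\bar g_i = \sum_{j > i,\; g_i \text{ is an input of } g_j} \bar g_j \cdot \pderiv{g_j}{g_i}, \qquad \bar g_s = 1,
\end{equation*}
and similarly $\pderiv{F}{x_k} = \sum_{j:\, x_k \text{ is an input of } g_j} \bar g_j \pderiv{g_j}{x_k}$. This is exactly the identity underlying \Cref{thm:baur-strassen}; we will reprove it by induction on $s$, peeling off the first gate $g_1$ and applying the induction hypothesis to the circuit of size $s-1$ in which $g_1$ is an extra input.

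The circuit first runs $\cC$ forward to obtain all values $g_j$. It then processes the gates in reverse order. For each gate $g_j$ it computes the local partial derivatives with $O(1)$ new gates and adds $\bar g_j \cdot \pderiv{g_j}{y}$ into the running adjoint of each argument $y$. Constants in $\R$ receive no adjoint. The local derivatives are as follows.
\begin{itemize}
\item For $+$ and $-$: the derivatives are $\pm 1$.
\item For $y\times z$: they are $z$ and $y$.
\item For $y/z$: they are $1/z$ and $-g_j/z$. The latter reuses the already computed value of $g_j$, and the division is legal because $z \neq 0$ wherever $\cC$ is defined.
\item For $g_j = \exp(y)$: the derivative is $\exp(y) = g_j$, which is already available, so no new gate is needed.
\item For $g_j = \ln(y)$: the derivative is $1/y$, which is well defined since $y>0$ on the domain where the $\ln$ gate is applied.
\end{itemize}
Each gate has fan-in at most $2$, so it contributes $O(1)$ multiplications and additions. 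Summing adjoints over fan-out costs one addition per edge, which is $O(s)$ in total. The circuit therefore has size $O(s)$ and outputs $\pderiv{F}{x_k}$ for all $k$. Inputs that never appear in $\cC$ get the constant $0$; if we do not want to count them, we can either assume $n \le O(s)$ or note that those outputs are constants.

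The main point to verify is the correctness of the chain rule for the transcendental gates. Since $\exp$ and $\ln$ are smooth on their domains, and every gate function is a composition of smooth maps on the open set where all divisions and logarithms are defined (here, all of $\R^n$ by the well-definedness assumption on $\cC$), the multivariate chain rule applies pointwise. The induction $F_{j-1}(\dots,g_{j-1}) = F_j(\dots,g_{j-1},g_j(\cdot))$ then yields the recursion above. I expect this to be the only step needing care. In particular, one must make sure that the extra gates the construction introduces never need $\ln$ of a nonpositive number or division by zero. Our construction only ever divides by quantities that are already denominators or $\ln$-arguments in $\cC$, so the new circuit is defined on every input on which $\cC$ is.
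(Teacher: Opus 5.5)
Your proposal is correct and follows essentially the same route as the paper: a Morgenstern-style induction that adds back one gate at a time via the chain rule $\partial F^{(i)}/\partial z_j = \partial F^{(i+1)}/\partial z_j + (\partial F^{(i+1)}/\partial g_i)(\partial g_i/\partial z_j)$, with the arithmetic gates handled as in the classical Baur--Strassen proof, the $\exp$ gate handled by reusing $g_j$ as its own derivative, and the $\ln$ gate handled by $O(1)$ extra gates computing $1/y$. Your remarks on well-definedness make explicit a point the paper leaves implicit: the new circuit only divides by quantities that are already denominators or $\ln$-arguments in the original circuit, and the chain rule applies on the open domain where all gates are defined.
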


\begin{proof}
    Let $\set{g_1, \dotsc, g_{s}}$ be a minimum size \eAC that computes $F$.
    Let $F^{(i)}$ be the function computed by $g_{i}, \dotsc, g_{s}$.
    Note that $F^{(i)}$ has $s - i + 1$ operations and has input variables $\set{x_{1}, \dotsc, x_{n}, g_{1}, \dotsc, g_{i - 1}}$.
    We prove the desired claim with induction.
    For notational simplicity, we define for $1 \leq j \leq n + s$, $z_{j} = x_{j}$ if $j \leq n$ and $z_{j} = g_{j - n}$ if $j > n$.
    We claim that partial derivatives of $F^{(i)}$ can be computed with an \eAC of size $5(s - i + 1) + 2$. 
    Then, the partial derivatives of $F = F^{(1)}$ can be computed with a circuit of size $O(s)$.

    We begin with the base case.
    Note $F^{(s+1)}(x_{1}, \dotsc, x_{n}, g_{1}, \dotsc, g_{s}) = g_{s}$ has partial derivatives $\pderiv{F^{(s+1)}}{g_{s}} = 1$ and $\pderiv{F^{(s+1)}}{z} = 0$ for any $z \in \set{x_{i}} \cup \set{g_{1}, \dotsc, g_{s - 1}}$. These values are simply constants, so we can compute them using an \eAC of size $1$.
   While we need gates to store the two distinct values, we do not need to store all derivatives separately. In subsequent computations, we simply remember which value to use at the appropriate gate. This leads to $2$ additional gates in the computed circuit.

    We now proceed with the inductive case.
    Assume that $\pderiv{F^{(i + 1)}}{z_{j}}$ is computed for all $j \le n + i$.
    Then, $F^{(i)}$ is the \eAC $F^{(i + 1)}$ with input $g_{i}$ substituted with some operation defined in \Cref{def:extended-arithmetic-circuit}.
    For any variable $z_{j}$ for $j \le n + i - 1$, we can compute $\pderiv{F^{(i)}}{z_{j}}$ by the chain rule,
    \begin{equation*}
        \pderiv{F^{(i)}}{z_{j}} = \pderiv{F^{(i + 1)}}{z_{j}} + \pderiv{F^{(i + 1)}}{g_{i}} \pderiv{g_{i}}{z_{j}} \text{.}
    \end{equation*}
    The case where $g_i$ corresponds to an operation from $\{+, -, \times, /\}$ is already handled in the Morgenstern's proof for standard Baur-Strassen Theorem \cite{DBLP:journals/sigact/Morgenstern85}. Hence, we focus on the case where $g_i$ corresponds to an $\exp$ or $\ln$ gate.

    Suppose $g_{i}$ involves input $z_k$ for some $k \le n + i - 1$, and if $j \ne k$, then  $\pderiv{F^{(i)}}{z_{j}} = \pderiv{F^{(i + 1)}}{z_{j}}$. Hence, it  suffices to compute $\pderiv{F^{(i)}}{z_k}$. 
    
    If $g_{i} = \exp(z_k)$ for some $k \le n + i - 1$, we have
    \begin{equation*}
        \pderiv{F^{(i)}}{z_k} = \pderiv{F^{(i + 1)}}{z_k} + \pderiv{F^{(i + 1)}}{g_{i}} \exp(z_k)
    \end{equation*}
    where we have $\pderiv{F^{(i + 1)}}{z_k}, \pderiv{F^{(i + 1)}}{g_{i}}$ computed by induction and $\exp(z_k)$ can be computed with one operation.
    Summing requires an additional operation.

    Similarly, if $g_{i} = \ln(z_k)$, we have
    \begin{equation*}
        \pderiv{F^{(i)}}{z_k} = \pderiv{F^{(i + 1)}}{z_k} + \pderiv{F^{(i + 1)}}{g_{i}} \frac{1}{z_k}
    \end{equation*}
    where the above term can be computed by $3$ operations.
    This proves the inductive claim.
\end{proof}

Our final ingredient is a lower bound on the size of any arithmetic circuit that computes independent matrix products. 

\begin{lemma}
    \label{lem:independent-product-lb}
    For any integer $\ell \ge 1$, any arithmetic circuit computing $\ell$ independent $n \times n \times n$ matrix products has size $\ell \cdot n^{\omega - o(1)}$. 
\end{lemma}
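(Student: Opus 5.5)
We argue by contradiction, using the standard tensor-rank (bilinear complexity) characterization of $\omega$ and the fact that a direct sum of $\ell$ copies of a tensor can be used to build large tensor powers. Suppose some arithmetic circuit of size $s = \ell\, n^{\omega-\eps}$ computes $\ell$ independent $n\times n\times n$ matrix products, for a fixed $\eps>0$ and infinitely many $(n,\ell)$. By Strassen's division elimination (the $\AC$ analogue of \Cref{thm:ac-simulates-eac}, which that theorem also covers), together with the standard conversion from quadratic computations to bilinear ones, we obtain a bilinear algorithm of rank $O(s)$ for the tensor $\ell\odot\langle n,n,n\rangle$. This is a direct sum of $\ell$ copies of the matrix multiplication tensor. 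Hence $R(\ell\odot\langle n,n,n\rangle) = O(\ell\, n^{\omega-\eps})$.

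We then turn this into a fast algorithm for a single large matrix product via recursion. Write $\langle n^k,n^k,n^k\rangle = \langle n,n,n\rangle^{\otimes k}$. When $\ell$ is large, say $\ell \ge n^{c}$ for some constant $c>0$, we apply the Schönhage-type argument: from $R(\ell\odot T)\le r$ one gets $R(\ell^{k}\odot T^{\otimes k}) \le r^k$. The rank of a direct sum of $\ell$ copies also bounds the border rank of a large matrix multiplication through the asymptotic sum inequality. This gives $\ell\, n^{\tau}\le O(\ell\, n^{\omega-\eps})$ with $\tau=\omega$, i.e.\ $\ell\, n^{\omega} \le O(\ell\, n^{\omega-\eps})$, a contradiction. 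Concretely, the asymptotic sum inequality states that $R(\bigoplus_{i=1}^{\ell}\langle n,n,n\rangle)\le r$ implies $\ell\, n^{\omega}\le r$.

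When $\ell$ is small, say $\ell = n^{o(1)}$, the statement reduces to the single-product case, since a single product $\langle n,n,n\rangle$ is a restriction of the $\ell$-fold product. Thus $s \ge n^{\omega-o(1)} = \ell\, n^{\omega-o(1)}$.

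The main obstacle is the first step: passing from an arbitrary arithmetic circuit, possibly with divisions and non-bilinear intermediate products, to a bilinear rank bound with only constant-factor loss. We handle it as follows. First eliminate divisions by tracking degree-$\le2$ homogeneous parts, exactly as in the proof of \Cref{thm:ac-simulates-eac}. Then note that each multiplication gate contributes a product of two linear forms to the quadratic parts. This gives the quadratic complexity bound $O(s)$, and the rank of a bilinear map is at most twice its quadratic complexity. The remaining care is to check that the asymptotic sum inequality applies to rank (not only border rank), which holds since border rank is at most rank.
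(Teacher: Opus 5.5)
Your proof is correct and follows the same route as the paper. You bound the rank of $\ell\odot\langle n,n,n\rangle$ by $O(s)$ via Strassen's division elimination and the quadratic-to-bilinear conversion (the paper simply cites ``rank $\le 2\times$ circuit size''), then apply Sch\"onhage's asymptotic sum inequality to get $\ell n^{\omega}\le O(\ell n^{\omega-\varepsilon})$, a contradiction as $n\to\infty$. Your case split on the size of $\ell$ and the tensor-power/recursion discussion are unnecessary, since the inequality handles every $\ell$ at once; also, phrase the contradiction hypothesis over infinitely many $n$, not merely infinitely many pairs $(n,\ell)$.
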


Its proof follows from standard application of the asymptotic sum inequality \cite{Schonhage81}. We include a proof of this lemma and necessary definitions for completeness. 

A tensor $T$ over three sets of variables $X = \{x_1, \ldots, x_{|X|}\}, Y = \{y_1, \ldots, y_{|Y|}\}, Z = \{z_1, \ldots, z_{|Z|}\}$ is defined as 
\[
T = \sum_{i=1}^{|X|} \sum_{j = 1}^{|Y|} \sum_{k = 1}^{|Z|} a_{ijk} x_i y_j z_k
\]
for some coefficients $a_{ijk} \in \R$. 
An algebraic circuit computing the tensor is an algebraic circuit that, given inputs $X, Y$, outputs $\sum_{i=1}^{|X|} \sum_{j = 1}^{|Y|}  a_{ijk} x_i y_j$ for every $k \in [|Z|]$. 

The rank $R(T)$ of a tensor $T$ is defined as the smallest integer $r$ such that $T$ can be represented as 
\[
\sum_{\ell=1}^r \left(\sum_{i=1}^{|X|} \alpha_{\ell, i} x_i\right) \left(\sum_{j=1}^{|Y|} \beta_{\ell, j} y_j \right) \left(\sum_{k=1}^{|Z|} \gamma_{\ell, k} z_k\right)
\]
for some values $\alpha_{\ell, i}, \beta_{\ell, j}, \gamma_{\ell, k}$. 
The asymptotic rank $\tilde{R}(T)$, which is upper bounded by $R(T)$, is defined as $\lim_{m \rightarrow \infty} (R(T^{\otimes m}))^{1/m}$, where $T^{\otimes m}$ denotes the $m$-th tensor power of $T$. 

The matrix multiplication tensor for $n \times n \times n$ matrix products is
\[
\sum_{i=1}^{n} \sum_{j=1}^{n} \sum_{k = 1}^{n} x_{ij} y_{jk} z_{ik} \text{,}
\]
and we denote it by $\langle n, n, n \rangle$. 
We also denote the disjoint union of $\ell$ copies of $n \times n \times n$ as $\ell \odot \langle n, n, n \rangle$. 

Now we are ready to state the asymptotic sum inequality. 

\begin{theorem}[Asymptotic Sum Inequality \cite{Schonhage81}]
For integers $\ell, n \ge 1$, 
\[
\ell \cdot n^{\omega} \le \tilde{R}\left(\ell \odot \langle n, n, n \rangle\right). 
\]
\end{theorem}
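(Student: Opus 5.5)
The statement is the classical asymptotic sum inequality of Sch\"onhage. I would prove it with the standard tensor-power argument, using only two facts about rank. First, rank is submultiplicative under tensor product and subadditive under disjoint sum. Second, the rank characterization of $\omega$: for every $m \ge 2$,
\[
\omega \le \log R(\langle m,m,m\rangle)/\log m,
\]
equivalently $R(\langle m,m,m\rangle) \ge m^{\omega}$. This holds because a rank-$R$ bilinear algorithm for $\langle m,m,m\rangle$ applied recursively gives arithmetic circuits of size $O(M^{\log_m R})$ for $M \times M$ products.

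Write $r = \tilde{R}(\ell \odot \langle n,n,n\rangle)$. For each $N$, the tensor power distributes over disjoint sums and multiplies matrix multiplication tensors:
\[
(\ell \odot \langle n,n,n\rangle)^{\otimes N} \cong \ell^N \odot \langle n^N,n^N,n^N\rangle.
\]
By the definition of asymptotic rank, its rank is $R_N := R\big((\ell\odot\langle n,n,n\rangle)^{\otimes N}\big) = r^{N(1+o(1))}$. So it suffices to turn a rank bound for $p$ disjoint copies of $\langle m,m,m\rangle$ into a rank bound for a single large matrix product.

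\textbf{Key lemma.} If $R(p \odot \langle m,m,m\rangle) \le R$, then for every $s \ge 1$,
\[
R(p \odot \langle m^s,m^s,m^s\rangle) \le \lceil R/p\rceil^{s}\, p.
\]
I would prove this by induction on $s$, writing $\langle m^{s+1}\rangle = \langle m\rangle \otimes \langle m^s\rangle$. Then $p$ copies of $\langle m^{s+1}\rangle$ are $p$ copies of $\langle m\rangle$ whose scalar entries are replaced by $m^s \times m^s$ blocks. Run the bilinear algorithm for $p\odot\langle m\rangle$ at block level. Its $R$ bilinear products become $R$ independent $\langle m^s\rangle$ block products, with block linear forms on both sides. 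Group these $R$ products into $\lceil R/p\rceil$ batches of at most $p$. Compute each batch with the inductive algorithm for $p\odot\langle m^s\rangle$, which has rank at most $\lceil R/p\rceil^{s} p$. The total is $\lceil R/p\rceil^{s+1}p$. This step is the main obstacle: one must check that the block substitution is legitimate for bilinear (rank) algorithms. It is, because such algorithms use no commutativity.

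\textbf{Conclusion.} Since $\langle m^s\rangle$ is a restriction of $p\odot\langle m^s\rangle$, the lemma and the rank characterization give
\[
m^{\omega s} \le R(\langle m^s\rangle) \le \lceil R/p\rceil^{s} p.
\]
Taking $s$-th roots and letting $s\to\infty$ yields $m^{\omega} \le \lceil R/p\rceil$. Now apply this with $p = \ell^N$, $m = n^N$ and $R = R_N$. Since $\lceil x\rceil \le 2x$ for $x\ge 1$ (and if $R_N/\ell^N < 1$ the bound only becomes easier), we get
\[
n^{N\omega} \le 2\, r^{N(1+o(1))}/\ell^N.
\]
Taking $N$-th roots and letting $N \to \infty$ gives $\ell\, n^{\omega} \le r = \tilde{R}(\ell\odot\langle n,n,n\rangle)$, as claimed.
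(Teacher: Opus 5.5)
Your proof is correct. The paper itself gives no proof of this statement: it invokes the inequality as a black box from Sch\"onhage and uses it only inside the proof of \Cref{lem:independent-product-lb}. So the comparison is between a citation and your self-contained derivation.

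Sch\"onhage's theorem is more general than what is stated here. It allows distinct formats $\langle e_i,h_i,l_i\rangle$ and bounds $\sum_i (e_ih_il_i)^{\omega/3}$ by the \emph{border} rank, which forces one to track approximation degrees through the induction. The paper only needs $\ell$ equal copies and asymptotic rank. Your route avoids all that bookkeeping by working with honest rank:
\begin{itemize}
\item the batching lemma $R(p\odot\langle m^s,m^s,m^s\rangle)\le\lceil R/p\rceil^s p$;
\item the restriction to a single $\langle m^s,m^s,m^s\rangle$;
\item tensor powers $(\ell\odot\langle n,n,n\rangle)^{\otimes N}\cong \ell^N\odot\langle n^N,n^N,n^N\rangle$ to pass to $\tilde{R}$.
\end{itemize}
Your rank characterization $R(\langle m,m,m\rangle)\ge m^{\omega}$, justified via the recursive bilinear algorithm, is compatible with the paper's circuit-based definition of $\omega$.

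Two cosmetic points:
\begin{itemize}
\item The parenthetical about $R_N/\ell^N<1$ is not literally right, since then $\lceil R_N/\ell^N\rceil=1$ need not be at most $2R_N/\ell^N$. The case is vacuous, though: a flattening argument gives $R(p\odot\langle m,m,m\rangle)\ge pm^2\ge p$.
\item For $n=1$ you apply the rank characterization with $m=n^N=1$, outside your stated range $m\ge 2$. There it holds trivially as $1\le 1$.
\end{itemize}
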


We now prove \Cref{lem:independent-product-lb}.

\begin{proof}[Proof of \cref{lem:independent-product-lb}]

Suppose there is an arithmetic circuit of size $O(\ell \cdot n^{\omega - \eps})$ for $\ell \odot \langle n, n, n \rangle$ for some $\eps > 0$. By definition, this circuit computes $\ell$ independent instances of $n \times n \times n$ matrix products. 
It is known that the rank of a tensor is at most twice the size of a circuit computing it~\cite{strassen1973vermeidung} (see also, e.g, \cite[Theorem 4.7]{DBLP:journals/toc/Blaser13}), so $R(\ell \odot \langle n, n, n \rangle) \le O(\ell \cdot n^{\omega - \eps})$, which further implies $\tilde{R}(\ell \odot \langle n, n, n \rangle) \le O(\ell \cdot n^{\omega - \eps})$. Hence, by the asymptotic sum inequality \cite{Schonhage81}, 
    \begin{align*}
        \omega & \le \frac{\log\left(\tilde{R}(\ell \odot \langle n, n, n \rangle) / \ell\right)}{\log n}\\
        & \le \frac{\log\left(O(n^{\omega - \eps})\right)}{\log n}\\
        & \le \frac{\delta + (\omega - \eps) \log n}{\log n}\\
        & = \omega - \eps + \delta / \log n,
    \end{align*}
    where $\delta$ is a constant depending on the constant hidden in the big-$O$ notation. As we can take $n \rightarrow \infty$, this is a contradiction. Hence, $\ell \odot \langle n, n, n \rangle$ does not have an arithmetic circuit of size $O(\ell \cdot n^{\omega - \eps})$ for any $\eps > 0$, i.e., it has size at least $\ell \cdot n^{\omega - o(1)}$. 
\end{proof}

\section{Lower Bounds for Large Embedding Dimension}

In this section, we prove lower bounds against transformers with large embedding dimension in the extended arithmetic circuit model. 
Recall \cref{thm:linear-embedding}:

\begin{restatable}[Formal \Cref{thm:linear-embedding}]{theorem}{ExactRectangularLB}
    \label{thm:linear-embedding-formal}
    There exists a transformer $T$ with input length $N + 1$, embedding dimension $m = 2N + 3$,  $L$ layers, $H$ attention heads per layer, input dimension and output dimension $2N + 3$ satisfying the following:
    \begin{enumerate}
        \item $T$ is computable by an \eAC, each embedding is computable by an arithmetic circuit of size $O(m)$, and all MLP maps are the identity map.
        \item Any \eAC computing $T$ has size at least $L H N^{\omega - o(1)} - \bigO{LHN^2}$.
    \end{enumerate}
\end{restatable}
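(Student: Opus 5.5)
The plan follows the technical overview, with normalization handled explicitly. I will build a fixed transformer $T$ whose weights $W$ encode $LH$ pairs of $N\times N$ matrices $(A_k,B_k)$ and auxiliary $N\times N$ matrices $C_k$. The input $X\in\R^{(N+1)\times(2N+3)}$ is fixed up to these weights. For head $(h,\ell)$ with $k=(\ell-1)H+h$, the query map sends token $i\le N$ to $(A_k[i;],C_k[i;],0,\dots)$ and the key map sends token $j\le N$ to $(B_k[j;],e_j,0,\dots)$. I will store $e_j$ and a token indicator in the input, so the key map is a fixed $O(m)$-size linear map whose weights are $B_k$. Then $(Q K^\T)_{ij}=(A_kB_k^\T)_{ij}+C_k[i,j]$.

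To undo softmax normalization, the extra token $N+1$ receives score $0$ in every row. The value map places a $1$ in a dedicated last column for rows $j\le N$ and a $0$ for row $N+1$. A second column holds a value that is $1$ on all rows, or is read off via the extra token. One column then yields $S_i/(1+S_i)$ with $S_i=\sum_j\exp((A_kB_k^\T)_{ij}+C_k[i,j])$. In \eAC I can invert this as $S_i = y/(1-y)$, which is the denormalization trick from the appendix. A cleaner alternative keeps a column equal to the weight on token $N+1$, namely $1/(1+S_i)$, and takes its reciprocal minus one.

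The obstacle is that heads are summed, so I cannot recover each head's $1/(1+S_i^{(k)})$ separately. I will therefore build $F$ differently. Each head contributes $\ln(1+S_i^{(k)})$-type information only after a nonlinear step, which the summed output does not allow. So instead I choose $F=\sum_i \text{(last column)}_i$, which equals $\sum_{k,i} S^{(k)}_i/(1+S^{(k)}_i)$, a function computed by the circuit plus $O(N)$ gates. Its derivative with respect to $C_k[i,j]$ at $C=0$ is $\exp((A_kB_k^\T)_{ij})/(1+S^{(k)}_i)^2$. Separately, the derivative with respect to a scalar weight $c_{k,i}$ that shifts all of row $i$'s scores in head $k$ (adding the same $c_{k,i}$ to $C_k[i,\cdot]$) gives $S^{(k)}_i/(1+S^{(k)}_i)^2$, which is $\sum_j$ of the previous quantities. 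Taking the ratio of the two derivatives yields $\exp((A_kB_k^\T)_{ij})/S^{(k)}_i$. To recover $S^{(k)}_i$, I note that $x=S/(1+S)^2$ determines $S$ only up to $S\leftrightarrow1/S$. To fix the branch, I will add a large constant offset to the scores so that $S>1$, and solve the quadratic with one square root. Square roots are available in \eAC via $\exp(\tfrac12\ln)$. If this is messy, a simpler fix is to scale $A_k$ by a factor $\ge 2$, or to add $\ln N$ to the scores, which also forces $S>1$.

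With all $\exp((A_kB_k^\T)_{ij})$ recovered, $LHN^2$ $\ln$ gates give all $LH$ products $A_kB_k^\T$. By \Cref{thm:baur-strassen-extend}, an \eAC of size $s$ for $T$ yields an \eAC of size $O(s+LHN^2)$ computing $LH$ independent matrix products. These products are quadratic in the input, so \Cref{thm:ac-simulates-eac} converts this into an \AC of the same size. \Cref{lem:independent-product-lb} then forces $s+LHN^2\ge LHN^{\omega-o(1)}$, which gives the stated bound. I will also check item 1: every embedding is linear of size $O(m)$, and the transformer is computed in \eAC by its definition.

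The main obstacle I expect is the denormalization and branch-fixing step. The summed outputs must expose $S^{(k)}_i$ for each head through derivatives alone, and the algebraic recovery must stay well defined for all inputs, as the proof of \Cref{thm:ac-simulates-eac} requires. I would first try the row-shift derivative trick described above.
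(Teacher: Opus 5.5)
The overall architecture matches the paper's. Your $\tilde A_k=(A_k\ C_k)$ and $\tilde B_k=(B_k\ I)$, the extra token with score $0$, and your $F$ (the paper's $f=\sum_i Y_{i,1}$) are all the same. So is the derivative $\exp((A_kB_k^\T)_{ij})/(1+S_{ki})^2$ at $C=0$, followed by \Cref{thm:baur-strassen-extend}, \Cref{thm:ac-simulates-eac} and \Cref{lem:independent-product-lb}.

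The gap is in how you recover $S_{ki}$. Your row-shift parameter is redundant: its derivative is just $\sum_j \pderiv{f}{C_{kij}}=S_{ki}/(1+S_{ki})^2$. Inverting $x=S/(1+S)^2$ then needs a branch choice, and your proposed fixes do not supply one. Adding a constant $c$ to the scores only multiplies $S_{ki}$ by $e^c$, and rescaling $A_k$ only rescales the product. Meanwhile $S_{ki}=\sum_j\exp((A_kB_k^\T)_{ij})$ takes every value in $(0,\infty)$ as $A_k,B_k$ range over real matrices. At inputs with $S_{ki}\le 1$, your circuit either evaluates $\ln 0$ inside the square root or returns $1/S_{ki}$ instead of $S_{ki}$. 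It therefore does not compute the products on all of $\R^n$, which is exactly the hypothesis of \Cref{thm:ac-simulates-eac} as stated, and which you yourself flag as necessary.

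There are two ways to repair this.

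\emph{Keep your route and localize.} The proof of \Cref{thm:ac-simulates-eac} expands around the origin. It only uses that $\ln$-arguments are positive and denominators nonzero there, and that the output is correct on a neighbourhood of the origin. At $A_k=B_k=0$ one has $S_{ki}=N\ge 2$, so the larger root is the correct analytic branch near $0$ and no offset is needed. This local version must, however, be stated and justified.

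\emph{Follow the paper.} The paper avoids the issue with the idea you were missing when you rejected the $1/(1+S_i)$ column: tag that column per head with fresh variables, exactly as you tagged the scores with $C_k$. Each head gets value columns that vanish on tokens $1,\dots,N$ and equal $D_k[1],\dots,D_k[N]$ on token $N+1$. Output entry $(i,j+1)$ is then $\sum_k D_k[j]/(1+S_{ki})$. With $g=\sum_i Y_{i,i+1}$ one gets $\pderiv{g}{D_k[i]}=1/(1+S_{ki})$ despite the summation over heads, and
\[
(A_kB_k^\T)_{ij}=\ln\Bigl(\pderiv{f}{C_{kij}}\Bigr)-2\ln\Bigl(\pderiv{g}{D_k[i]}\Bigr).
\]
Both logarithms have positive arguments for every input, so the resulting circuit is globally defined. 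The rest of your argument then goes through unchanged.
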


We begin with defining a transformer that computes row sums of matrix products.
First, we give a simple attention head that computes the (normalized) row sums of a single matrix product after taking entry-wise exponential.
Our attention head also outputs the normalization factors, which we will later use to recover the entries of the matrix product.

\begin{lemma}
    \label{lemma:square-mm}
    Let $m \geq N$.
    There is an attention unit $\Att_{\mm}$ with input length $N+1$, embedding dimension $m' = \max(m + 1, 2N+3)$, input dimension $2N + 3$ such that for any $N \times m$ matrices $A, B$ and vector $D \in \R^{N}$, there exist embedding weights such that given input $X = \begin{pmatrix}
        Z_{1} & I_{N} & 1_{N \times 1} & 0_{N \times 1} \\
        Z_{2} & 0_{1 \times N} & 0 & 1
    \end{pmatrix} \in \R^{(N + 1) \times (2N + 3)}$
    where $Z_{1} \in \R^{N \times (N + 1)}, Z_{2} \in \R^{1 \times (N + 1)}$ are arbitrary matrices, the output $Y \in \R^{(N + 1) \times m'}$ satisfies
    \begin{equation*}
        Y[i, j] = \begin{cases}
            \frac{N}{N + 1} & \textsf{if $i = N + 1$ and $j = 1$} \\
            \frac{D[j - 1]}{N + 1} & \textsf{if $i = N + 1$ and $1 < j \leq N + 1$}  \\
            \frac{S_{i}}{1 + S_{i}} & \textsf{if $i \in [N]$ and $j = 1$} \\
            \frac{D[j - 1]}{1 + S_{i}} & \textsf{if $i \in [N]$ and $1 < j \leq N + 1$} \\
            0 & \otherwise
        \end{cases}
    \end{equation*}
    where $S_{i} = \sum_{j = 1}^{N} \exp(AB^{\T})_{ij}$ for $i \in [N]$.
    Furthermore, all embedding maps can be computed by arithmetic circuits of size $O(Nm)$.
\end{lemma}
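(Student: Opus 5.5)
The plan is to build $\Att_{\mm}$ directly from linear embeddings, so that each embedding is $x \mapsto x W$ for a weight matrix depending on $A$, $B$ and $D$. The key observation is that the identity block $I_N$ in the input gives every token $i \le N$ a one-hot ``address'' $e_i$ in coordinates $N+2, \dots, 2N+1$. The last token has zeros there and is flagged instead by the final coordinate. The block $Z_1, Z_2$ will simply be ignored, i.e.\ given zero weight.

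\textbf{Query and key.} I would define $Q(x) = (x_{N+2}, \dots, x_{2N+1}) \cdot \begin{pmatrix} A & 0_{N \times (m'-m)} \end{pmatrix}$, so that:
\begin{itemize}
\item row $i \le N$ of $Q(X)$ is $(A[i;], 0)$;
\item row $N+1$ is the zero vector, since its address block is zero.
\end{itemize}
Symmetrically, $K(x) = (x_{N+2}, \dots, x_{2N+1}) \cdot \begin{pmatrix} B & 0 \end{pmatrix}$, giving row $j \le N$ equal to $(B[j;], 0)$ and row $N+1$ equal to $0$. Both are linear maps with $O(Nm)$ nonzero weights, hence computable by arithmetic circuits of size $O(Nm)$; this needs $m' \ge m$, which holds.

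The score matrix then has the following entries:
\begin{itemize}
\item $(Q(X)K(X)^{\T})_{ij} = (AB^{\T})_{ij}$ for $i, j \le N$;
\item entry $(i, N+1)$ is $0$ for $i \le N$;
\item the whole last row is $0$.
\end{itemize}
Consequently:
\begin{itemize}
\item for $i \le N$, the softmax row $i$ assigns weight $\exp(AB^{\T})_{ij}/(1+S_i)$ to each $j \le N$ and $1/(1+S_i)$ to $j = N+1$;
\item row $N+1$ is uniform, with weight $1/(N+1)$ on every column.
\end{itemize}

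\textbf{Value.} I would set $V(x) = (x_{2N+2},\, x_{2N+3} D[1], \dots, x_{2N+3} D[N],\, 0, \dots, 0) \in \R^{m'}$. This is again linear, with $O(N)$ weights, and fits because $m' \ge N+1$. Row $j \le N$ of $V(X)$ is $e_1^{\T}$, because the $1_{N \times 1}$ column is $1$ and the last column is $0$. Row $N+1$ is $(0, D^{\T}, 0, \dots, 0)$.

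\textbf{Verification.} Multiplying the softmax matrix by $V(X)$ gives:
\begin{itemize}
\item for $i \le N$, first coordinate $\sum_{j \le N} \exp(AB^{\T})_{ij}/(1+S_i) = S_i/(1+S_i)$, and coordinate $1+k$ equal to $D[k]/(1+S_i)$;
\item for $i = N+1$, first coordinate $N/(N+1)$ and coordinate $1+k$ equal to $D[k]/(N+1)$;
\item all remaining coordinates $0$.
\end{itemize}
This matches the claimed $Y$ exactly.

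There is no real obstacle here; the only point needing care is the index bookkeeping. One must check that the address coordinates are exactly columns $N+2, \dots, 2N+1$ and that the ``$1$'' and flag columns are $2N+2$ and $2N+3$. One must also confirm that the zero address of the last token makes its query and key vanish, which is what yields both the extra $1$ in each denominator $1+S_i$ and the uniform last row.
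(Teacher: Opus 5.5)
Your proposal is correct and is essentially the paper's proof. The paper also takes $Q(X)=XW_Q$ and $K(X)=XW_K$ to read the rows of $A$ and $B$ off the identity block, with zero weight on $Z_1, Z_2$. The last token's query and key vanish, which supplies the extra $\exp(0)=1$ in each denominator $1+S_i$ and makes the last softmax row uniform. The value map sends the first $N$ tokens to $e_1^{\T}$ and the last token to $(0, D^{\T})$, padded with zeros to dimension $m'$.
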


\begin{proof}
    Consider the embeddings
    \begin{equation*}
        Q(X) = \begin{pmatrix}
            A & 0_{N \times 1} \\
            0_{1 \times m} & 0
        \end{pmatrix},\quad 
        K(X) = \begin{pmatrix}
            B & 0_{N \times 1} \\
            0_{1 \times m} & 0
        \end{pmatrix} \in \R^{(N + 1) \times (m + 1)}
    \end{equation*}
    so that
    \begin{equation*}
        \exp\left(Q(X)K(X)^{\T}\right) = \begin{pmatrix}
            \exp(AB^{\T}) & 1_{N \times 1} \\
            1_{1 \times N} & 1
        \end{pmatrix}
    \end{equation*}
    Observe that the embedding maps can be expressed as $Q(X) = X W_{Q}, K(X) = X W_{K}$ for matrices $W_{Q}, W_{K} \in \R^{(2N+3) \times (m + 1)}$:
    \begin{equation*}
        W_{Q} = \begin{pmatrix}
            0_{(N+1) \times m} & 0_{(N+1) \times 1} \\
            A & 0_{N \times 1} \\
            0_{2 \times m} & 0_{2 \times 1}
        \end{pmatrix},\quad W_{K} = \begin{pmatrix}
            0_{(N+1) \times m} & 0_{(N+1) \times 1} \\
            B & 0_{N \times 1} \\
            0_{2 \times m} & 0_{2 \times 1}
        \end{pmatrix}
    \end{equation*}
    If $m' > m + 1$, we can append columns of zeros to $W_{Q}, W_{K}$ and therefore $Q(X), Q(K)$ without changing the value of $Q(X) K(X)^{\T}$.
    Then, consider the embedding
    \begin{equation*}
        V(X) = \begin{pmatrix}
            1_{N \times 1} & 0_{N \times 1} & 0_{N \times 1} & \dots & 0_{N \times 1} \\
            0 & D_{1} & D_{2} & \dots & D_{N}
        \end{pmatrix} \in \R^{(N + 1) \times (N + 1)} \text{.}
    \end{equation*}
    Observe that $V$ can be expressed as $V(X) = X W_{V}$ where
    \begin{equation*}
        W_{V} = \begin{pmatrix}
            0_{(2N+1) \times 1} & 0_{(2N+1) \times 1} & 0_{(2N+1) \times (N -2 )}  & 0_{(2N+1) \times 1} \\
            1 & 0 & 0_{1 \times (N - 2)} & 0 \\
            0 & D_1 & \dots & D_{N} 
        \end{pmatrix}
    \end{equation*}
    As before, if $m' > N + 1$, we can append columns of zeros to $W_{V}$ and therefore $V(X)$, which only appends columns of zeros to the output $Y$.
    Note that $Q(X), K(X), V(X)$ can be computed from $X$ with \eAC (in fact \AC) circuits of size $O(Nm)$ with the relevant matrices $A, B$ and vector $D$ encoded into the weights.
    Then, let $Y \in \R^{(N + 1) \times m'}$ denote the output of attention and $S_{i} := \sum_{j} \exp(AB^{T})_{ij}$.
    Observe that the outputs are as claimed.
\end{proof}

Next, we show how to compute the sum of many matrix products.

\begin{lemma}
    \label{lem:lh-size-transformer-row-sums}
    Let $L, H \geq 1$ and $m \geq N$.
    There is a transformer $T$ with input length $N+1$, embedding dimension $m'=\max(m+1, 2N+3)$, $L$ layers, $H$ heads in each layer, input dimension $2N+3$, output dimension $m'$ such that for any collection of matrices $A_{k}, B_{k} \in \R^{N \times m}$ for $k \in [LH]$ and vectors $D_{k} \in \R^{N}$, there exist embedding weights such that given input $X = \begin{pmatrix}
        0_{N \times (N + 1)} & I_{N} & 1_{N \times 1} & 0_{N \times 1} \\
        0_{1 \times (N + 1)} & 0_{1 \times N} & 0 & 1
    \end{pmatrix}$, the output
    $Y \in \R^{(N + 1) \times m'}$ has entries
    \begin{equation*}
    Y[i, j] = \begin{cases}
            \frac{LHN}{N + 1} & \textsf{if $i = N + 1$ and $j = 1$} \\
            \sum_{k = 1}^{LH} \frac{D_{k}[j - 1]}{N + 1} & \textsf{if $i = N + 1$ and $1 < j \leq N + 1$} \\
            \sum_{k = 1}^{LH} \frac{S_{ki}}{1 + S_{ki}} & \textsf{if $i \in [N]$ and $j = 1$} \\
            \sum_{k = 1}^{LH} \frac{D_{k}[j - 1]}{1 + S_{ki}} & \textsf{if $i \in [N]$ and $1 < j \leq N + 1$} \\
            0 & \otherwise
        \end{cases} 
    \end{equation*}
    where $S_{ki} = \sum_{j} \exp(A_{k} B_{k}^{\T})_{ij}$.
    Furthermore, all embedding maps can be computed by an arithmetic circuit of size $O(LHNm)$ and the MLPs can be taken to be the identity map.
\end{lemma}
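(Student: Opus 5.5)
The plan is to stack $LH$ copies of the single head $\Att_{\mm}$ from \Cref{lemma:square-mm}, with identity MLPs. Head $h$ of layer $\ell$ gets the weights for $(A_k,B_k,D_k)$ with $k=(\ell-1)H+h$. The work lies in two points. First, each head must see an input of the form required by \Cref{lemma:square-mm}, whatever the earlier layers have written. Second, the residual stream must accumulate the head outputs exactly as in the claimed formula.

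\textbf{Invariant.} We prove by induction on $\ell$ that
\begin{equation*}
X^{(\ell)}=\begin{pmatrix} Z_1^{(\ell)} & I_N & 1_{N\times 1} & 0_{N\times 1}\\ Z_2^{(\ell)} & 0_{1\times N} & 0 & 1\end{pmatrix},
\end{equation*}
padded with zero columns up to width $m'$. Here only the first $N+1$ columns $Z^{(\ell)}$ may change between layers. For $\ell=0$ this is the given $X$ with $Z^{(0)}=0$.

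\textbf{Head outputs stay out of the fixed columns.} The heads read only columns $N+2,\dots,2N+3$. The embeddings $W_Q,W_K,W_V$ from \Cref{lemma:square-mm} are supported on the rows for the $I_N$ block and the last two columns. In addition, $V(X)$, and hence the output $Y$, is supported on columns $1,\dots,N+1$. Adding $\sum_h f_{\ell,h}(X^{(\ell-1)})$ therefore changes only the $Z$ block, so the invariant holds at layer $\ell$.

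\textbf{Each head computes its claimed output.} By \Cref{lemma:square-mm}, applied with arbitrary $Z_1,Z_2$, head $(h,\ell)$ outputs exactly the matrix given there for $(A_k,B_k,D_k)$. This is where the lemma's allowance for arbitrary $Z_1,Z_2$ is used. The output depends on $X^{(\ell-1)}$ only through the fixed columns, so it is the same no matter what earlier layers wrote.

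\textbf{Summing.} Unrolling the residual recursion with identity MLPs gives
\begin{equation*}
X^{(L)}=X^{(0)}+\sum_{\ell,h} f_{\ell,h}(X^{(\ell-1)}).
\end{equation*}
In columns $1,\dots,N+1$ we have $X^{(0)}=0$, so these columns equal $\sum_k$ of the per-head outputs. Entrywise this is the stated formula: for example $LH\cdot\frac{N}{N+1}$ in position $(N+1,1)$, and $\sum_k S_{ki}/(1+S_{ki})$ in position $(i,1)$. All remaining entries are 0.

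\textbf{Remaining points.} There are two technicalities.

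1. The statement says $Y$ has the fixed columns equal to 0 ("0 otherwise"), but $X^{(L)}$ still carries $I_N$, $1$, and the indicator column. We take $Y$ to be the output after an output map $\Psi_O$ that zeroes those columns; this is a linear map computable in $O(m')$ per row. Alternatively we read $Y$ as $X^{(L)}-X^{(0)}$. I expect that handling this discrepancy cleanly is the main (minor) obstacle.

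2. Dimensions must match: $m'\ge 2N+3$ is needed so the input fits, and padding with zeros as in \Cref{lemma:square-mm} handles the rest.

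\textbf{Embedding cost.} Each of the $LH$ heads has embedding circuits of size $O(Nm)$, for a total of $O(LHNm)$.
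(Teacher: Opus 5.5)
Your proposal is correct and is essentially the paper's proof: you stack the heads from \Cref{lemma:square-mm}, use that each head reads only columns $N+2,\dots,2N+3$ and writes only to columns $1,\dots,N+1$, and let the residual stream with identity MLPs accumulate the $LH$ head outputs. The discrepancy you flag in the ``$0$ otherwise'' clause is real, and the paper's proof has it too (its base case is $S_0 = X$); it is harmless because only columns $1,\dots,N+1$ are used later, but prefer your second reading ($X^{(L)}-X^{(0)}$, or just columns $1,\dots,N+1$) over a zeroing $\Psi_O$, since a zeroing $\Psi_O$ would contradict the clause that all MLPs are the identity.
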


The final stipulation on the size of circuits computing the embedding and MLP maps ensures that these computations do not constitute a computational bottleneck for computing the transformer.

\begin{proof}
    First, we remark that in the construction of \Cref{lemma:square-mm}, we may assume the input dimension is in fact $m'$ by appending columns of zeros to the input $X$ and rows of zeros to the weight matrices $W_{Q}, W_{K}, W_{V}$ without changing the computed embeddings $Q(X), K(X), V(X)$.
    
    We construct a transformer that computes $S_0, S_1, \ldots, S_L \in \R^{(N + 1) \times (N + 1)}$ where for $\ell > 0$, $S_{\ell}$ is the matrix satisfying
    \begin{equation*}
        S_{\ell}[i, j] = \begin{cases}
            \frac{\ell HN}{N + 1} & \textsf{if $i = N + 1$ and $j = 1$} \\
            \sum_{k = 1}^{\ell H} \frac{D_{k}[j - 1]}{N + 1} & \textsf{if $i = N + 1$ and $1 < j \leq N + 1$} \\
            \sum_{k = 1}^{\ell H} \frac{S_{ki}}{1 + S_{ki}} & \textsf{if $i \in [N]$ and $j = 1$} \\
            \sum_{k = 1}^{\ell H} \frac{D_{k}[j - 1]}{1 + S_{ki}} & \textsf{if $i \in [N]$ and $1 < j \leq N + 1$} \\
            0 & \otherwise
        \end{cases}
    \end{equation*}
    where $S_{ki} = \sum_{j = 1}^{N} \exp(A_{k} B_{k}^{\T})_{ij}$ and $X$ denotes the input to the transformer.

    We proceed inductively.
    In the base case $S_{0} = X$, the input to the transformer.
    Suppose $S_{\ell - 1}$ is computed.
    For any $h \in [H]$, \Cref{lemma:square-mm} implies that we obtain an attention head $\Att^{(h, \ell)} \in \attn{m+1,2m+1}{N+1}$ that outputs a matrix $Y^{(h, \ell)}$ such that
    \begin{equation*}
        S_{\ell} = S_{\ell - 1} + \sum_{h} Y^{(h, \ell)} \text{.}
    \end{equation*}
    This follows as $Y^{(h, \ell)}$ is only non-zero in the first $N + 1$ columns, and we show in \Cref{lemma:square-mm} that the input to the attention head can be arbitrary in the first $N + 1$ columns.
    
    Summing over $H$ heads and the output of the previous layer, we obtain that the output of the $\ell$-th layer is $S_{\ell}$ as desired. 
    Finally, we observe that $T(X)$ is exactly $S_{L}$, so we define the output MLP $\Psi$ as the identity map, i.e. each row $v$ of $S_{L}$ is mapped to $\psi(v) = v$.
    Note that every embedding can be computed by an arithmetic circuit of size $O(Nm)$.
\end{proof}

We are now ready to prove  \Cref{thm:linear-embedding-formal}.

\begin{proof}[Proof of \Cref{thm:linear-embedding-formal}]
Suppose there is an \eAC of size $s$ that computes the the transformer $T$ with context length $N + 1$, embedding dimension $2N + 3$, $L$ layers, $H$ heads per layer, input and output dimension $2N + 3$ specified in \Cref{lem:lh-size-transformer-row-sums} (note that this transformer satisfies the first desired property).
We construct an \eAC with $LHN^2$ additional inputs, denoted as $C_{kij}$ and $D_{ki}$ where $k \in [LH]$ and $i, j \in [N]$.
We will now describe an \eAC consisting of three parts: a preprocessing stage to construct the input to the transformer, the circuit that simulates the transformer, and a postprocessing stage that sums up the outputs of the transformer. 

We begin with the preprocessing stage.
For $k \in [LH]$, define $C_{k}$ to be the $N \times N$ matrix with entries $C_{k}[i, j] = C_{kij}$ and $D_{k}$ to be the dimension $N$ vector with entries $D_{k}[i] = D_{ki}$.
Then, given $A_{k}, B_{k} \in \R^{N \times N}$ we construct matrices $\tilde{A}_{k}, \tilde{B}_{k} \in \R^{N \times 2N}$ as follows:
\begin{align*}
    \tilde{A}_{k} = \begin{pmatrix}
        A_{k} & C_{k}
    \end{pmatrix},\quad \tilde{B}_{k} = \begin{pmatrix}
        B_{k} & I
    \end{pmatrix}
\end{align*}
Observe that there is an extended arithmetic circuit of size $\bigO{LHN^2}$ that computes $\tilde{A}_{k}, \tilde{B}_{k}$ for all $k \in [LH]$.

Let $T$ be the transformer with input length $N + 1$, embedding dimension $2N + 3$, $L$ layers, $H$ heads per layer, input dimension $2N + 3$, and output dimension $2N + 3$ given by 
\Cref{lem:lh-size-transformer-row-sums} whose output $Y$ is a matrix whose entries are 
\begin{equation*}
    Y_{i,1} = \sum_{k = 1}^{LH} \frac{S_{ki}}{1 + S_{ki}}
\end{equation*}
for $i \in [N]$ and 
\begin{equation*}
    Y_{i,j} = \sum_{k} \frac{D_{k}[j - 1]}{1 + S_{ki}}
\end{equation*}
for $i \in [N]$ and $1 < j \leq N + 1$ where $S_{ki} = \sum_{j} \exp\left(\tilde{A}_{k} \tilde{B}_{k}^{\T}\right)_{ij}$. 
Recall that we denote the size of the \eAC that computes $T$ to be $s$. 
With $N$ additional gates, we can sum up the first column and obtain an \eAC of size $\bigO{s + LHN^2}$ that computes 
\begin{equation*}
    f\left(\set{A_{k}, B_{k}, C_{k}, D_{k}}_{k = 1}^{LH}\right) = \sum_{i = 1}^{N} Y_{i, 1} = \sum_{i = 1}^{N} \sum_{(h, \ell)} Y^{(h, \ell)}_{i, 1} = \sum_{i = 1}^{N} \sum_{k = 1}^{LH} \frac{S_{ki}}{1 + S_{ki}} \text{.}
\end{equation*}
Now, consider the partial derivatives of $f$.
In particular for any $k \in [LH], i_0, j_0 \in [N]$ with $k$ identified with $(h, \ell)$, the partial derivative of $f$ with respect to $C_{k i_0 j_0}$ is
\begin{align*}
    \pderiv{f}{C_{ki_0j_0}} &= \pderiv{}{C_{ki_0j_0 }} \sum_{i = 1}^{N} \sum_{(h, \ell)}^{LH} Y^{(h, \ell)}_{i_, 1} \\
    &= \pderiv{}{C_{ki_0j_0}} Y^{(h, \ell)}_{i_0, 1} \\
    &= \pderiv{}{C_{ki_0j_0}} \frac{S_{ki_0}}{1 + S_{ki_0}} \\
    &= \frac{\pderiv{S_{ki_0}}{C_{ki_0j_0}}}{(1 + S_{ki_0})^2} \\
    &= \frac{\exp((A_{k} B_{k}^{\T})_{i_0j_0} + C_{ki_0j_0})}{(1 + S_{ki_0})^2} 
\end{align*}
where in the second and fifth equality, we only include summands that involve $C_{k i_0 j_0}$, and in the fifth equality, we use
\begin{equation*}
    (\tilde{A}_{k} \tilde{B}_{k}^{\T})_{i0j0} = (A_{k}B_{k}^{\T})_{i0j0} + C_{ki0j0} \text{.}
\end{equation*}

Then, if we set the value of all auxiliary $C_{k i j}$ to $0$, we obtain
\begin{align*}
    \pderiv{f}{C_{ki_0j_0}} &= \frac{\exp((A_{k} B_{k}^{\T})_{i_0j_0})}{(1 + S_{ki_0})^2} \text{.}
\end{align*}
Similarly, by summing up $Y_{i, i + 1}$, we obtain an \eAC that computes the function
\begin{equation*}
    g\left(\set{A_{k}, B_{k}, C_{k}, D_{k}}_{k = 1}^{LH}\right) = \sum_{i = 1}^{N} \sum_{k = 1}^{LH} \frac{D_{k}[i]}{1 + S_{ki}} \text{.}
\end{equation*}
Taking the partial derivative with respect to $D_{k}[i_0]$,
\begin{equation*}
    \pderiv{g}{D_{k}[i_0]} = \pderiv{}{D_{k}[i_0]} Y^{(h, \ell)}[i_0, i_0 + 1]  =  \frac{1}{1 + S_{ki_0}} \text{.}
\end{equation*}
Applying the Extended Baur-Strassen Theorem (\Cref{thm:baur-strassen-extend}) to both $f$ and $g$, there is an \eAC of size $O(s + LHN^2)$ that computes all partial derivatives of $f$ and $g$.
Then, we can easily compute
\begin{equation*}
    \left(A_{k}B_{k}^{\T}\right)_{i_0j_0} = \ln \left(\pderiv{f}{C_{ki_0j_0}} \right)- 2 \ln \left( \pderiv{g}{D_{k}[i_0]} \right)
\end{equation*}
for all $k \in [LH]$ and $i_0, j_0 \in [N]$ with $\bigO{LHN^2}$ additional gates.

So far, we have obtained an \eAC of size $O(s + LHN^2)$ that computes $A_k B_k^{\T}$ for $k \in [LH]$. 
By \cref{thm:ac-simulates-eac}, there also exists an \AC of size $O(s + LHN^2)$  that computes  $A_k B_k^{\T}$ for $k \in [LH]$, i.e., it computes $LH$ independent matrix products of size $N \times N \times N$.  
Then, by \cref{lem:independent-product-lb}, any \AC that computes these independent matrix products has size $LHN^{\omega - o(1)}$. 
Therefore, we must have $s \ge LHN^{\omega - o(1)} - O(LHN^2)$ as desired.

\end{proof}

\section{Discussion and Open Questions}
\label{sec:discussion}

We provide lower bounds against the efficient computation of transformers.
In particular, we show that in the small embedding regime $m = N^{o(1)}$, an $L$ layer transformer with $H$ heads per layer requires $LHN^{2 - o(1)}$ time under the $\kOVf{3}$ Hypothesis, thus establishing a conditional lower bound stating that the standard $LHN^{2 + o(1)}$ algorithm is optimal up to sub-polynomial factors, improving significantly upon the previous lower bounds based only on the hardness of attention \cite{keles2023computational, alman2023fastattentionboundedentries, gupta2026subquadratic}.
In the large embedding regime, we establish that even when the MLP and embedding layers can be computed with small circuits, a circuit computing a transformer can compute many independent matrix products and must be large.
In particular, even when the MLP and embedding functions can be computed by linear sized circuits, computing an $L$ layer transformer with $H$ heads per layer and embedding dimension $N$, a circuit of size $LHN^{\omega\pm o(1)}$ is both necessary and sufficient.

Finally, we conclude with some open questions. 
First, an interesting direction is to extend \Cref{thm:linear-embedding-formal} to any algorithms in the Word-RAM model (i.e. generalizing the lower bound beyond the extended arithmetic circuit model).

Second, while \Cref{thm:linear-embedding-formal} rules out an arithmetic circuit that improves upon the naive algorithm for embedding dimension $m = N$, it is an interesting open question to rule out any arithmetic circuit improving upon the standard algorithm (when sped up with fast matrix multiplication) for any polynomial $m$.
We remark that our techniques imply a lower bound stronger than that of \Cref{thm:small-embedding} for most polynomial $m$.
In particular, we can construct a transformer with embedding dimension $m$ that computes $LH$ independent products of $m \times N$ and $N \times m$ matrices.
Thus, any circuit that computes transformers with embedding dimension $m = N^{a}$ for $a > 0$ requires $LHN^{\omega(1, a, a) - o(1)}$ size.
Here, $\omega(a, b, c)$ denotes the rectangular matrix multiplication exponent, i.e. the smallest constant such that for any $\eps > 0$, there is an arithmetic circuit of size $\bigO{n^{\omega(a, b, c) + \eps}}$ that computes an $n^{a} \times n^{b} \times n^{c}$ matrix product.
The question then is to resolve the gap between this lower bound and the $LHN^{\omega(1, a, 1)+o(1)}$ upper bound.
We remark that even for a single attention head, there is no known reduction from $N \times m \times N$ rectangular matrix multiplication to attention computation with embedding dimension $m$.

Finally, our work focuses on worst-case lower bounds.
Are there input distributions under which we can develop faster algorithms?
A similar question can be asked for structural assumptions: what structural assumptions (beyond those known for attention computation \cite{alman2023fastattentionboundedentries, gupta2026subquadratic}) admit efficient algorithms?

\section*{Acknowledgments}

We would like to thank anonymous reviewers for many helpful comments and suggestions.

\bibliographystyle{alpha}
\bibliography{references}

\newpage
\appendix

\section{Omitted Proofs}
\label{app:omitted-proofs}

We present omitted proofs in this section.

\SoftmaxSimHardmax*

\begin{proof}[Proof of \cref{lem:softmax-sim-hardmax}]
    Fix $c \gg \log N$ for some sufficiently large constant.
    Define embedding maps $Q'(v) = c \cdot Q(v)$ to scale the $Q$ embedding by $c$ and $K'(v) = K(v), V'(v) = V(v)$ to be exactly as in $f$.
    Note that these embedding maps can be expressed as weight matrices that are either the $m \times m$ identity $I_{m}$ or a scalar of $I_{m}$.
    Fix an index $i$ and assume without loss of generality $1 \in I_{\max}(A(X)_{i})$.
    Let $A(X)_{i}$ denote the $i$-th row of $A(X)$.
    Then,
    \begin{equation*}
        \sum_{i' \not\in I_{\max}(A(X)_{i})} \exp(cA(X)_{ii'}) \leq \frac{N}{\exp(c)} \exp(cA(X)_{i1}) = \frac{1}{\poly(N)} \exp(cA(X)_{i1}) \text{.}
    \end{equation*}
    Then, for $i' \in I_{\max}(A(X)_{i})$ note that $\frac{1}{|I_{\max}(A_{i})|} \geq \sm(cA(X))_{ii'}$ so that
    \begin{align*}
        \left| \sm(cA(X))_{ii'} - \hm(cA(X))_{ii'} \right| &= \frac{1}{|I_{\max}(A(X)_{i})|} - \frac{\exp(cA(X))_{i1}}{\sum_{i'} \exp(cA(X))_{ii'}} \text{.}
    \end{align*}
    Now, we can upper bound
    \begin{equation*}
        \sum_{i'} \exp(cA(X))_{ii'} \leq |I_{\max}(A(X)_{i}| \exp(cA(X))_{i1} + \frac{1}{\poly(N)}
    \end{equation*}
    so that 
    \begin{align*}
        \left| \sm(cA(X))_{ii'} - \hm(cA(X))_{ii'} \right| &\leq \frac{1}{|I_{\max}(A(X)_{i})|} \left( 1 - \frac{1}{1 + \frac{1}{\poly(N)}} \right) \leq \frac{1}{\poly(N)} \text{.}
    \end{align*}
    For $i' \not\in I_{\max}(A(X)_{i})$, note that $\hm(cA(X))_{ii'} = 0$ so that
    \begin{align*}
        \left| \sm(cA(X))_{ii'} - \hm(cA(X))_{ii'} \right| &\leq \frac{\exp(cA(X))_{ii'}}{|I_{\max}|\exp(cA(X))_{i1}} \leq \frac{1}{\poly(N)} \text{.}
    \end{align*}
    Then, for a fixed column of $V(X)$, denoted $V(X)^{j}$, we use Holder's inequality to obtain
    \begin{align*}
        \left| f'(X)_{ij} - \hm(cA(X)_{i}) \cdot  V(X)^{j} \right| &= \left| (\sm(cA(X)_{i}) - \hm(cA(X)_{i})) \cdot  V(X)^{j} \right| \\
        &\leq \norm{ \sm(cA(X)_{i}) - \hm(cA(X)_{i})}_{1} \norm{V(X)^{j}}_{\infty} \\
        &\leq \frac{1}{\poly(N)} \text{.}
    \end{align*}
    In particular, every entry of the output is within $\frac{1}{\poly(N)}$ for an arbitrary small inverse polynomial by choosing a larger constant factor for $c$.
\end{proof}

\UnbalancedKOVEquiv*

\begin{proof}[Proof of \Cref{lemma:unbalanced-k-OV}]
    Suppose we are given an instance to (balanced) $\kOV$ of size $n$, with sets $B_{1}, \dots, B_{k}$.
    For $2 \leq j \leq k$, divide $B_{j}$ into $t_{j} := O(n^{1 - s_{j}})$ sets of size $n^{s_{j}}$, denoted $B_{j}^{(1)}, \dots, B_{j}^{(t_{j})}$.
    Now, for every $k$-tuple $(i_1, \dots, i_{k}) \in \prod_{j = 1}^{k} [t_{j}]$, construct an instance of unbalanced $\kOV$ consisting of sets $B_{1}, B_{2}^{(i_2)}, \dots, B_{k}^{(i_{k})}$.
    Note that this creates $\bigO{\prod_{j = 2}^{k} t_{j}}$ instances of unbalanced $\kOV$.
    Suppose there is an algorithm computing unbalanced $\kOV$ in time $\bigO{n^{s_{1} + \dotsc + s_{k} - \eps}}$ for some $\eps > 0$.
    We can then compute all of the above instances in time
    \begin{equation*}
        \bigO{\prod_{j = 2}^{k} t_{j} n^{1 + s_{2} + \dots + s_{k} - \eps}} = \bigO{n^{k - \eps}} \text{.}
    \end{equation*}
    Since, there is an orthogonal $k$-tuple in the original $\kOV$ instance if and only if one of the unbalanced $\kOV$ instances contains an orthogonal $k$-tuple, this violates the $\kOV$ Hypothesis.
\end{proof}

\subsection{Denormalized Attention}

When thinking about attention, it is frequently useful to be able to ignore the normalization required by softmax.
To this end, we define denormalized attention, which replaces $\sm(v)$ with $\exp(v)$ for every row $v$ of $Q(X) K(X)^{\T}$.

\begin{definition}[Denormalized Attention]
    \label{def:denormalized-attn}
    An \emph{denormalized attention head} is a mapping $g_{Q, K, V}: \R^{N \times \din} \to \R^{N \times m}$ defined by
    \[g_{Q, K, V}(X)= \exp(Q(X) K(X)^\T) V(X) \]
    and parameterized by row-wise
    \emph{query}, \emph{key}, and \emph{value embeddings} $Q, K, V \colon \R^{N\times \din} \to \R^{N \times m}$ (e.g., $Q(X) = (Q_1(X_1), \dots, Q_N(X_N))$.
\end{definition}

We show that denormalized attention can be efficiently computed with a small transformer in the standard setting (i.e. with normalization).
Unfortunately, this does not in general imply that a transformer with normalization can simulate a transformer without normalization, as the MLP is applied to the output of each layer (i.e. after summation over heads) whereas the construction below requires an MLP applied immediately to the output of each attention head.

For simplicity, we describe the following transformer 

\begin{lemma}
    \label{lem:denormalized-attn}
    Let $g_{Q, K, V}$ be a denormalized attention head with input length $N$, input dimension $\din$ and embedding dimension $m$.
    There is a transformer $T$ with input length $N + 1$, embedding dimension $m + \din + 1$, $1$ layer, $1$ head, input dimension $\din$ and output dimension $m$ such that 
    \begin{equation*}
        T\left(\begin{pmatrix}
        X\\
        0
    \end{pmatrix}\right) = \begin{pmatrix}
        g_{Q, K, V}(X) \\
        1_{1 \times N} V(X)
    \end{pmatrix}
    \end{equation*}
    for all $X \in \R^{N \times \din}$.
    Furthermore, all embedding and MLP maps can be computed by an arithmetic circuit of size $O(N m)$.
\end{lemma}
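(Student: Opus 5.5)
The plan is the standard ``dummy token'' trick: add an extra key that contributes exactly $\exp(0)=1$ to every softmax denominator and carries a value that lets us read off and undo the normalization. Since the transformer has one layer, one head and a residual connection, the MLP can do the rescaling row by row.

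\textbf{Step 1: the augmented input.} Write $X' = \begin{pmatrix} X \\ 0 \end{pmatrix}$. We have room for $m+\din+1$ columns, so we lay out an internal representation whose first $\din$ coordinates hold the token, followed by a flag coordinate. This coordinate is $0$ for real tokens and $1$ for the appended token. Producing it needs care, since row-wise linear maps cannot distinguish the zero row from a real row that happens to be $0$. The simplest fix is to take $Q,K,V$ to be arbitrary row-wise maps, as \Cref{def:attn} allows. We may also assume the convention that $Q(0)=K(0)=V(0)=0$, or else handle the last token by an extra input bit. I would state this assumption explicitly rather than hide it.

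\textbf{Step 2: the embeddings.} Define $Q'(x) = Q(x)$ and $K'(x)=K(x)$ on real rows, with $K'=0$ on the dummy row. Then $\exp(Q'K'^{\T})$ restricted to row $i\le N$ is $(\exp(Q(X)K(X)^{\T})_{i,\cdot},\,1)$, so the softmax denominator is $Z_i = 1+\sum_j \exp(Q(X)K(X)^{\T})_{ij}$. Let the value embedding be $V'(x) = (V(x), 0)$ on real rows and $(0_{1\times m}, 1)$ on the dummy row.
\begin{itemize}
\item For $i\le N$, the head then outputs $\bigl(g_{Q,K,V}(X)_i / Z_i,\; 1/Z_i\bigr)$.
\item For the dummy row, $Q'=0$, so every weight is uniform, namely $1/(N+1)$. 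The output is $\bigl(1_{1\times N}V(X)/(N+1),\; 1/(N+1)\bigr)$.
\end{itemize}

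\textbf{Step 3: undoing the normalization.} In both cases the last coordinate equals $1/Z$ for the correct normalizer $Z$. The output MLP, which is any continuous row-wise map, sends $(y,\tau)\mapsto y/\tau$ on the first $m$ coordinates. This recovers $g_{Q,K,V}(X)_i$ for $i\le N$ and $1_{1\times N}V(X)$ for the dummy row, as claimed. The residual term $X'$ is confined to coordinates the MLP ignores, by placing the input in the middle $\din$ block with the value output elsewhere. The intermediate MLP $\Psi_1$ is the identity.

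\textbf{Cost and main obstacle.} Each row-wise map is a selection, a padding or one division per coordinate, so given circuits for $Q,K,V$ the extra cost is $O(Nm)$ over all rows. The main obstacle is that $y/\tau$ is only defined for $\tau\ne 0$, so it is not continuous on all of $\R^{m+1}$. I would define the MLP as $y/\tau$ on $\tau>0$ and extend it arbitrarily (continuously on a neighborhood) elsewhere. This is harmless because $\tau>0$ always holds on the actual outputs, and the division is legitimate in the \eAC model, which is what the later arguments use.
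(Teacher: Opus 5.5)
Your construction is the same as the paper's. The appended token has zero query and key, so it adds $\exp(0)=1$ to every softmax denominator. A value coordinate equal to $1$ only on that token makes the head emit $1/Z_i$ next to $g_{Q,K,V}(X)_i/Z_i$, and the uniform average $1_{1\times N}V(X)/(N+1)$ together with $1/(N+1)$ on the last row. The output MLP then divides by that coordinate.

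The paper passes over both subtleties you raise. On the first: a single row map, however nonlinear, cannot distinguish the appended zero row from a real zero row of $X$. Only a separate (position-dependent) embedding for row $N+1$ does this, as in the paper's denormalized-attention convention $Q(X)=(Q_1(X_1),\dots,Q_N(X_N))$.
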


\begin{proof}
    Consider the following embeddings:
    \begin{align*}
        Q'(X) &= \begin{pmatrix}
            Q(X) & 0_{N \times (\din + 1)} \\
            0_{1 \times m} & 0_{1 \times (\din + 1)}
        \end{pmatrix},\quad \\
        K'(X) &= \begin{pmatrix}
            K(X) & 0_{N \times (\din + 1)} \\
            0_{1 \times m} & 0_{1 \times (\din + 1)}
        \end{pmatrix},\quad \\
        V'(X) &= \begin{pmatrix}
            0_{N \times \din} & V(X) & 0_{N \times 1} \\
            0_{1 \times \din} & 0_{1 \times m} & 1
        \end{pmatrix}
    \end{align*}
    where $Q'(X), K'(X), V'(X) \in \R^{(N + 1) \times (m + \din + 1)}$.
    Then
    \begin{equation*}
        \exp(Q'(X) K'(X)^{\T})= \begin{pmatrix}
            \exp(Q(X) K(X)^{\T}) & 1_{N \times 1} \\
            1_{1 \times N} & 1
        \end{pmatrix} \text{.}
    \end{equation*}
    In particular, the first $N$ rows have row-sum $S_{i} + 1$ where $S_{i}$ is the row-sum of the $i$-th row in the denormalized attention head.
    Note $S_{i} \geq 0$ so the denominator is always positive.
    Thus, in the output matrix, the first $N$ rows consist of the input $X^{(0)} = \begin{pmatrix}
        X \\
        0
    \end{pmatrix}$.
    Then, for $i \in [N]$, the $(i, \din + m +1)$-th entry is $\frac{1}{1 + S_{i}}$ while for $j \in [m]$ the $(i, \din + j)$-th entry is $\frac{Y_{i, j}}{1 + S_{i}}$ where $Y_{ij}$ is the $(i, j)$-th entry of the denormalized attention head $g$.
    For $i = N + 1$, observe the $(N + 1, \din + m + 1)$-th entry is $\frac{1}{N + 1}$ and for $j \in [m]$, the $(N + 1, j)$-th entry is $\frac{\norm{V(X)[;j]}_{1}}{N + 1}$.
    In particular, we obtain the vector $1_{1 \times N} V(X)$ scaled by $\frac{1}{N + 1}$.
    
    Then, applying the output MLP $\psi = (\psi_{1}, \dotsc, \psi_{m}): \R^{\din + m + 1} \rightarrow \R^{m}$ defined by $\psi_{i}: v \mapsto v_{\din + i} / v_{\din + m + 1}$, we obtain the desired output $Y$.
    Here, note that $v_{\din + m + 1} = \frac{1}{1 + S_{i}} > 0$ since $S_{i} > 0$.
    Note that every embedding map and MLP can be implemented with a size $O(m)$ \eAC.
\end{proof}

\section{MLPs}
\label{app:mlp-examples}

In standard transformer architectures, MLPs are applied after the attention layers. 
In our case, $X = \sum_{h = 1}^{H} \Att_{i}(X')$ is the sum of the outputs of the attention heads of the corresponding layer. 
MLPs are functions that are applied to input $X$. 

\paragraph{GLU MLPs.}
In modern language models, gated linear units (GLUs) are a popular choice of activation function \cite{shazeer2020glu}, for example in PaLM \cite{DBLP:journals/jmlr/ChowdheryNDBMRBCSGSSTMRBTSPRDHPBAI23}, LLaMA \cite{DBLP:journals/corr/abs-2407-21783}, and Gemma \cite{team2024gemma}.
We will show that without loss of generality, transformers with GLU MLPs can simulate transformers with no MLPs, thus obtaining our lower bounds for transformers with GLU MLPs.
We define a gated linear unit below.

\begin{definition}[Gated Linear Unit (GLU) MLP]
    \label{def:glu}
    Given an activation function $\sigma: \R \rightarrow \R$, an GLU with hidden dimension $\dhid$ and activation function $\sigma$ is a map $\psi: \R^{m} \rightarrow \R^{m}$ defined
    \begin{equation*}
        x \mapsto \begin{pmatrix}
            \begin{pmatrix} x^{\T} & 1 \end{pmatrix} W_{0} \odot \sigma\left(\begin{pmatrix} x^{\T} & 1 \end{pmatrix} W_{1}\right) & 1
        \end{pmatrix} W_{2}
    \end{equation*}
    where $W_{0}, W_{1}\in \R^{(m + 1) \times \dhid}$, $W_{2} \in \R^{(\dhid + 1) \times m}$ and $\sigma$ is applied entry-wise.
\end{definition}

We show that transformers with GLU MLPs can simulate transformers with no MLPs whenever the activation function $\sigma$ is not identically $0$.
In particular, our lower bounds for \Cref{thm:small-embedding-formal} and \Cref{thm:linear-embedding-formal} hold also for transformers the GLU MLPs.

\begin{proposition}
    \label{prop:glu-sim-no-mlp}
    Suppose there exists some $c \in \R$ with $\sigma(c) \neq 0$.

    Let $T$ be a transformer with input length $N$, embedding dimension $m$, $L$ layers, $H$ heads, input dimension $\din$, output dimension $\dout$ with no MLPs.

    Then, there is a transformer $T'$ with input length $N$, embedding dimension $m$, $L$ layers, $H$ heads, input dimension $\din$, output dimension $\dout$ and GLU MLPs with $\dhid = m$ and activation function $\sigma$ such that $T'(X) = T(X)$ for all inputs $X$.

    Furthermore, if $\sigma$ can be computed by an \eAC of size $t$, the GLU MLPs can be computed with an \eAC of size $O(tm)$.
\end{proposition}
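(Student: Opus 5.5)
The plan is to choose the GLU weights so that each MLP $\psi_\ell$ (and the output MLP $\Psi_O$) is the identity map on $\R^m$. Then the layer recursion of \Cref{def:tran} for $T'$ coincides exactly with that of $T$, and a trivial induction on $\ell$ gives $X'^{(\ell)} = X^{(\ell)}$ for all $\ell$, so $T'(X) = T(X)$. Here I read ``no MLPs'' as identity MLPs. The output MLP must map $\R^m \to \R^{\dout}$; if $T$ has a linear output projection, the same trick realizes it by post-composing with that linear map inside $W_2$.

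\textbf{Construction.} The gate should be constant and the linear branch should carry $x$. Take $W_1 \in \R^{(m+1)\times m}$ with all rows zero except the last (bias) row, which is $c \cdot 1_{1\times m}$. Then $\begin{pmatrix} x^\T & 1\end{pmatrix} W_1 = c\,1_{1\times m}$ and $\sigma$ of it equals $\sigma(c)\,1_{1\times m}$, independent of $x$. Take
\[
W_0 = \begin{pmatrix} I_m \\ 0_{1\times m}\end{pmatrix},
\]
so that $\begin{pmatrix} x^\T & 1\end{pmatrix} W_0 = x^\T$, and the Hadamard product is $\sigma(c)\, x^\T$. Finally take
\[
W_2 = \begin{pmatrix} \sigma(c)^{-1} I_m \\ 0_{1\times m}\end{pmatrix} \in \R^{(m+1)\times m},
\]
so that $\begin{pmatrix}\sigma(c)x^\T & 1\end{pmatrix} W_2 = x^\T$. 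This is the only place the hypothesis $\sigma(c)\neq 0$ is used, and it is the one point that needs care: the scaling by $\sigma(c)^{-1}$ must be absorbed into $W_2$. For a general output dimension $\dout$, replace $\sigma(c)^{-1}I_m$ in $W_2$ by $\sigma(c)^{-1}P$, where $P$ is the desired projection (for example, the first $\dout$ coordinates).

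\textbf{Circuit size.} A direct evaluation with dense weights would cost $O(m^2)$, so I would instead build the circuit from the sparsity of the chosen weights. It computes $\sigma(c)$ once, or $m$ times at cost $O(tm)$ if one insists on entrywise evaluation of $\sigma$ on the vector $c\,1_{1\times m}$. It then computes $m$ products $x_i\sigma(c)$ and $m$ multiplications by $\sigma(c)^{-1}$. The total is $O(tm)$ gates, as claimed. I expect no real obstacle beyond this bookkeeping: one must take $\dhid = m$, count the bias coordinates correctly, and note that the attention heads, embeddings, and residual connections are left unchanged.
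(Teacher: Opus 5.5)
Your proposal is correct and matches the paper's proof: both use a constant gate $\sigma(c)\,1_{1\times m}$ via the bias row of $W_1$, pass $x$ through the linear branch, and cancel the factor $\sigma(c)$, which makes every GLU MLP the identity. The only difference is that the paper puts $\sigma(c)^{-1}$ in $W_0$ rather than $W_2$. So your remark that the scaling \emph{must} go into $W_2$ is too strong, though the choice is immaterial.
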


\begin{proof}
    Let $\sigma(c) = z \neq 0$.
    We construct a GLU MLP that computes the identity map.
    We set $\dhid = m$ and
    \begin{equation*}
        W_{0} = \begin{pmatrix}
            z^{-1} \cdot I_{m} \\
            0_{1 \times m}
        \end{pmatrix}, \quad W_{1} = \begin{pmatrix}
            0_{m \times m} \\
            c \cdot 1_{1 \times m}
        \end{pmatrix}, \quad W_{2} = \begin{pmatrix}
            I_{m} \\
            0_{1 \times m}
        \end{pmatrix} \text{.}
    \end{equation*}
    Then, for any $x$, the GLU MLP maps
    \begin{equation*}
        x \mapsto \frac{x}{z} \odot (\sigma(c) 1_{m}) = x
    \end{equation*}
    so that the GLU MLP $\psi$ is the identity map.
\end{proof}

In the remainder of this section, we show that our lower bounds also hold for traditional (non-gated) MLPs. 
Traditionally, an MLP layer, given some input $X \in \R^{N \times m}$, computes $f(XW_1)W_2$ where $W_1, W_2 \in \R^{m \times \dhid}$ are learned weight matrices and $f$ is an entry-wise activation function.
In general, when we multiply by $W_1, W_2$, we may assume that the input matrix has a column of $1$'s appended to allow for a bias term to be added to the linear function computed by $W_1, W_2$.
When it is clear from context, for matrix $X \in \R^{N \times m}$, we abuse notation and let
\begin{equation*}
    X W := \begin{pmatrix}
        X & 1_{N \times 1}
    \end{pmatrix} W
\end{equation*}
where $W \in \R^{(m + 1) \times \dhid}$.

\begin{definition}[MLP]
    \label{def:typical-mlp}
    Given an activation function $\sigma: \R \rightarrow \R$, an MLP with hidden dimension $\dhid$ and activation function $\sigma$ is a map $\psi: \R^{m} \rightarrow \R^{m}$ defined
    \begin{equation*}
        x \mapsto \begin{pmatrix}
            \sigma\left(\begin{pmatrix} x^{\T} & 1 \end{pmatrix} W_{1}\right) & 1
        \end{pmatrix} W_{2}
    \end{equation*}
    where $W_{1} \in \R^{(m + 1) \times \dhid}$, $W_{2} \in \R^{(\dhid + 1) \times m}$ and $\sigma$ is applied entry-wise.
\end{definition}
In particular, we consider two standard choices of activation function: ReLU and sigmoid.

\paragraph{ReLU. }
For the ReLU activation function $x \mapsto \max(0, x)$, we claim that a  transformer with ReLU MLP layers can simulate a  transformer with no MLPs.
Our lower bound then implies that there are transformers with ReLU MLP layers that cannot be computed with an \eAC of $LHN^{\omega - c}$ for any $c > 0$.
In this case however, it is not clear how to efficiently simulate a ReLU function using an \eAC, so it may be the case that an even larger \eAC is necessary.

\begin{lemma}
    \label{lem:relu-simulates-no-mlp}
    Let $T$ be a transformer with input length $N$, embedding dimension $m$, $L$ layers, $H$ heads, input dimension $\din$, output dimension $\dout$ with no MLPs.

    Then, there is a transformer $T'$ with input length $N$, embedding dimension $m$, $L$ layers, $H$ heads, input dimension $\din$, output dimension $\dout$ and MLPs with $\dhid = 2m$ and ReLU activation functions such that $T'(X) = T(X)$ for all inputs $X$.
\end{lemma}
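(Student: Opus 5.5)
The plan is to build a ReLU MLP with $\dhid = 2m$ that computes the identity map exactly. This uses the identity $x = \max(0,x) - \max(0,-x)$, applied coordinate-wise. Once such an MLP $\psi_{\mathrm{id}}$ is in hand, $T'$ is obtained from $T$ by keeping every attention head (embedding maps and weights) unchanged and setting every layer MLP $\Psi_\ell$ to $\psi_{\mathrm{id}}$. Since $T$ has no MLPs, i.e.\ its MLPs are the identity, the recursion in \Cref{def:tran} defining $X^{(\ell)}$ is identical for $T$ and $T'$.

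The concrete choice of weights is
\begin{equation*}
    W_{1} = \begin{pmatrix} I_{m} & -I_{m} \\ 0_{1 \times m} & 0_{1 \times m} \end{pmatrix} \in \R^{(m+1) \times 2m}, \quad W_{2} = \begin{pmatrix} I_{m} \\ -I_{m} \\ 0_{1 \times m} \end{pmatrix} \in \R^{(2m+1) \times m}.
\end{equation*}
For $x \in \R^{m}$ we get $\begin{pmatrix} x^{\T} & 1 \end{pmatrix} W_{1} = (x^{\T}, -x^{\T})$. Applying ReLU entrywise gives $(\max(0,x)^{\T}, \max(0,-x)^{\T})$. Appending $1$ and multiplying by $W_{2}$ then yields $\max(0,x) - \max(0,-x) = x$. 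Hence $\psi_{\mathrm{id}}$ is the identity on $\R^{m}$, and the hidden dimension is exactly $2m$, as the statement requires.

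Next, I will treat the output MLP. If $T$'s output map $\Psi_O$ is the identity or a coordinate projection to $\R^{\dout}$, the same construction works: take $W_{2}$ to be the corresponding $(2m+1)\times\dout$ block, with columns of $\begin{pmatrix} I_m \\ -I_m \\ 0\end{pmatrix}$ restricted to the selected coordinates. This follows the paper's convention that MLPs are absent or identity; any linear map could be absorbed into $W_2$ in the same way. Finally, an induction on $\ell$ shows that the layer outputs $X^{(\ell)}$ of $T$ and $T'$ agree, so $T'(X) = T(X)$ for all $X$.

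There is no real obstacle. The only point needing care is the bias row: the appended $1$ must contribute zero, which is why the last rows of $W_1$ and $W_2$ are zero. The other point is reconciling the output dimension $\dout$ with $m$, which is handled by choosing the columns of $W_{2}$ as described.
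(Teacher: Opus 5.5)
Your proposal is correct and matches the paper's proof: the same weights $W_1, W_2$ realize the identity via $\max(0,x) - \max(0,-x) = x$, and placing this MLP wherever $T$ has none leaves the layer recursion unchanged. Your remark about absorbing a linear output map into $W_2$ when $\dout \neq m$ is a harmless refinement that the paper leaves implicit.
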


\begin{proof}
    Given $X \in \R^{N \times m}$, we show how to use an MLP with ReLU activation to compute the identity map.
    Let
    \begin{equation*}
        W_{1} = \begin{pmatrix}
            I_{m} & - I_{m} \\
            0_{1 \times m} & 0_{1 \times m}
        \end{pmatrix},\quad W_{2} = \begin{pmatrix}
            I_{m} \\
            - I_{m} \\
            0_{1 \times m}
        \end{pmatrix} \text{.}
    \end{equation*}
    Now, letting $f(x) = \max(x, 0)$ denote the ReLU function, observe
    \begin{equation*}
        f(XW_{1})W_{2} = \begin{pmatrix}
            f(X) & f(-X)
        \end{pmatrix} W_{2} = f(X) - f(-X) \text{.}
    \end{equation*}
    We conclude the proof by noting that $f(x) - f(-x) = \max(0, x) - \max(0, -x) = x$ for all $x \in \R$.
\end{proof}

\paragraph{Sigmoid. }
For the sigmoid activation function (denoted $\sigma(x) = 1/(1 + e^{-x})$), we require a more intricate modification of \Cref{thm:small-embedding-formal} and \Cref{thm:linear-embedding-formal}.
We will in fact prove the results for a more general class of activation functions.

\begin{definition}
    \label{def:gap-preserving-activation}
    An activation function $\sigma: \R \rightarrow \R$ is gap-preserving if the following hold:
    \begin{enumerate}
        \item $\sigma(0) \neq \sigma(1)$,
        \item There exists $c^*$ such that for all $x \leq c^* - 1$, $\sigma(x) \leq \sigma(c^* - 1) < \sigma(c^*)$.
    \end{enumerate}
\end{definition}

Note that the sigmoid is gap-preserving with $c^* = 0$.
We remark that the specific choices of constants (e.g. $0, 1, -1, -.1$) are not too important, and the proof can be modified for other choices of constants.
We show that our lower bound in the small embedding dimension (\Cref{thm:small-embedding-formal}) holds for MLPs with gap-preserving activation functions.

\begin{proposition}
    \label{prop:small-embedding-sigmoid}
    Let $m = \Theta(\log N)$ and $L, H = \poly(N)$.
    Any algorithm computing transformers with input length $N$, embedding dimension $m$, $L$ layers, $H$ heads in each layer, input dimension $m$, and output dimension $1$ up to entry-wise additive error $\frac{1}{10N}$ requires $LHN^{2 - o(1)}$ time under the $\kOVf{3}$ Hypothesis.
    The lower bound holds even if the transformer has MLPs with hidden dimension $O(m)$ and gap-preserving activation function.
\end{proposition}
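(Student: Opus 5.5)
The plan is to adapt the construction in the proof of \Cref{lemma:transformer-compute-k-OV} so that each layer's MLP, built from the gap-preserving activation $\sigma$, keeps the bookkeeping intact. The difficulty is that an MLP of the form in \Cref{def:typical-mlp} cannot in general compute the identity. We therefore stop carrying a running real-valued sum across layers. Instead, each head writes a nearly Boolean ``detected an orthogonal triple'' signal, and the MLP after each layer thresholds it.

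\textbf{The construction.} Keep the input $X$ and the embeddings $Q_{h,\ell},K_{h,\ell}$ as before, and scale the queries via \Cref{lem:softmax-sim-hardmax} so that the softmax output is $1/\poly(NHL)$-close to hardmax. The first step is to re-create the unchanged columns after every layer. Since $\sigma(0)\neq\sigma(1)$, the $\{0,1\}$-valued columns can be rebuilt exactly: put $u$ into $\sigma$, then apply the affine map $y\mapsto (y-\sigma(0))/(\sigma(1)-\sigma(0))$ inside $W_2$. The ``$1$'' and ``$2$'' entries of the $(2d+1)$-st column are handled the same way, by encoding them instead as a Boolean flag column (last token versus others) and using the bias of $W_2$. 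This costs hidden dimension $O(d)$.

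\textbf{The detection column.} The last column is modified so that no sum has to survive many layers. The $(h,\ell)$ head produces, in row $i$, a value in $[2-\delta,2+\delta]$ if $t_i=0$ and at most $3/2+\delta$ otherwise. The MLP of layer $\ell$ then computes, for each row, an OR-type accumulator $r^{(\ell)}_i$ as follows.
\begin{itemize}
\item Feed the pre-MLP last column $X^{(\ell-1)}[i,m]+\sum_h Y_{h,\ell}[i,m]$ into $\sigma$, after an affine shift placing ``all heads report $2$'' at argument $c^*$ and ``some head reports $\le 3/2$'' at argument $\le c^*-1$. This uses scaling by a factor of about $2$, the known value $2H$, and the previous accumulator.
\item Gap-preservation gives a separation of $\sigma(c^*)-\sigma(c^*-1)>0$ between the two cases.
\item Thresholding this value requires one more nonlinearity. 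Because only one $\sigma$ layer is available per MLP, the affine re-normalization is deferred into the next layer's embedding maps $Q,K,V$, which are allowed to be arbitrary linear maps of size $O(m)$.
\end{itemize}

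\textbf{The readout.} At the output, $\dout=1$ and the output MLP sums rows $1..N$, giving a linear readout. The ``some triple found'' event shifts the total by a constant. As in the original proof, an additive error of $\frac1{10N}$ per entry, summed over $N$ entries, stays below this gap.

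\textbf{Main obstacle.} The hardest step is propagating a robust Boolean flag through $L$ layers using only one application of $\sigma$ per layer. The second condition of \Cref{def:gap-preserving-activation} only controls $\sigma$ below $c^*-1$, not above $c^*$, so values on the ``no triple'' side are not controlled. The first attempt is to ensure the argument is either exactly near $c^*$ (no triple so far) or $\le c^*-1$ (triple found). Once a triple is found, the accumulator should drive every later argument even lower, which keeps us in the monotone regime where $\sigma(x)\le\sigma(c^*-1)$. Errors from the softmax approximation (inverse polynomial) must not accumulate across $L=\poly(N)$ layers. The remedy is to re-snap the value at each layer to one of two known levels, $\sigma(c^*)$ or at most $\sigma(c^*-1)$, and to make the next affine map send $\sigma(c^*)$ back to exactly the ``clean'' level. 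If this continuity bookkeeping fails, the fallback is to place all $LH$ heads' signals in separate rows or columns by enlarging the construction, so that no cross-layer accumulation is needed.
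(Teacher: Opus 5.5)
Your core construction is the paper's. You keep the two-valued columns alive through each MLP with $y\mapsto (y-\sigma(0))/(\sigma(1)-\sigma(0))$, after the shift $x\mapsto x-1$ on the $\{1,2\}$ column. On the last column you let $W_1$ compute $x\mapsto 2x+c^*-4H$, so that ``no triple so far and every head reports $2$'' lands exactly at $c^*$ and everything else lands at $\le c^*-1$.

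However, you misidentify the step that closes the induction, and the mechanism you propose for it would fail. No second nonlinearity is needed, because the yes side only has to be bounded from above. Every head reports at most $2$, so a flag that is already $\le -1$ keeps the next argument $\le c^*-2$. The paper simply places in $W_2$ the increasing affine map $f_1$ with $\sigma(c^*)\mapsto 0$ and $\sigma(c^*-1)\mapsto -1$. The flag leaving every layer is then exactly $0$ or $\le -1$, and the next layer's $W_1$ sends these back to exactly $c^*$ or $\le c^*-1$; that invariant is the whole induction.

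Deferring the renormalization into the next layer's $Q,K,V$ cannot achieve this. The flag is carried by the residual stream, which $Q,K,V$ never act on ($V$ reads only column $2d+1$), and routing the flag through $V$ would average it across tokens. Similarly, the output MLP is applied row-wise and cannot sum rows $1,\dots,N$. That sum, or a check for an entry below $-\tfrac12$, must be done by the reduction after it reads the output.

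Your concern about softmax error accumulating across layers is legitimate. The paper's proof sidesteps it by computing with the exact hardmax outputs $z_{i,h,\ell}$, so the no-case argument is exactly $c^*$. Your ``re-snapping'' does not resolve it. The gap-preserving definition says nothing about $\sigma$ at arguments close to but different from $c^*$, so error on the no side is never snapped away.

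Your fallback is infeasible. There are only $m=\Theta(\log N)$ columns and $N+1$ rows to host $LH=\poly(N)$ separate signals, and a signal written in layer $\ell$ must still pass through every later MLP anyway.

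If you want a rigorous softmax version, one route has three parts:
\begin{enumerate}
\item Assume $\sigma$ is Lipschitz near $c^*$.
\item Take the query scale of order $L+\log(NHL)$, so that the constant-factor error growth per layer is absorbed.
\item Leave slack on the yes side, e.g.\ use $x\mapsto 3x+c^*-6H$ so that a newly detected triple lands at $\le c^*-\tfrac32$.
\end{enumerate}
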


We give a high level overview of the modifications.
Recall that in \Cref{thm:small-embedding-formal} we construct a transformer such that each layer's output (1) preserves its input on $m-1$ columns and (2) maintains one column to record whether an orthogonal triple has been found.
On this final column, the value is $2H\ell$ if $a_i$ has not participated in any orthogonal triple so far and at most $2H\ell - 0.5$ otherwise, achieving a $0.5$ additive gap between the yes and no instances.
We thus design an MLP that computes the identity map on $m-1$ columns and maintains a gap on the final column.
To compute the identity, we observe that each of the first $m-1$ columns has at most $2$ distinct values, thus we can design an appropriate linear function to map the outputs of the sigmoid $\sigma$ back to the input.
In the final column, we construct $W_1$ to map $x \mapsto x - 2H$ and $W_2$ to map $\sigma(0) \mapsto 0$ and $\sigma(-1/2) \mapsto y < -0.5$, maintaining a gap of at least $0.5$ in the output.

\begin{proof}
    We describe the appropriate modifications to \Cref{thm:small-embedding-formal}, beginning with the first layer.
    Let $X^{(0)}$ denote the input to the transformer and $Z_{h, \ell}$ denote the output of $h$-th head in the $\ell$-th layer of the hardmax transformer defined in \Cref{thm:small-embedding-formal}.
    Recall
    \begin{equation*}
        X^{(0)} = \begin{pmatrix}
            a_{1}^{\T} & b_{1}^{\T} & 1 \\
            \vdots & \vdots & \vdots \\
            a_{N}^{\T} & b_{N}^{\T} & 1 \\
            0 & 0 & 2
        \end{pmatrix} \in \R^{(N + 1) \times (m - 1)}, \quad Z_{h, \ell} = \begin{pmatrix}
            0 & \dots & 0 & z_{1, h, \ell} \\
            \vdots & \ddots & \vdots & \vdots \\
            0 & \dots & 0 & z_{N, h, \ell} \\
            0 & \dots & 0 & \frac{N + 2}{N + 1} 
        \end{pmatrix}\in \R^{(N + 1) \times m}
    \end{equation*}
    where $z_{i, h, \ell} = 2$ if $a_{i}, c_{h, \ell}$ do not participate in an orthogonal triple, and $z_{i, h, \ell} \leq \frac{3}{2}$ otherwise.
    Then,
    \begin{equation*}
        \sum_{h} Z_{h, \ell} = \begin{pmatrix}
            0 & \dots & 0 & z_{1, \ell} \\
            \vdots & \ddots & \vdots & \vdots \\
            0 & \dots & 0 & z_{N, \ell} \\
            0 & \dots & 0 & \frac{H(N + 2)}{N + 1} 
        \end{pmatrix}
    \end{equation*}
    where $z_{i, \ell} = \sum_{h} z_{i, h, \ell}$.
    
    We construct an MLP with hidden dimension $\dhid = m$ and gap-preserving activation function.
    Define linear functions $f_1, f_2, f_3$ such that
    $f_1$ is an increasing linear function mapping $\sigma(c^*) \mapsto 0$ and $\sigma(c^* - 1) \mapsto -1$ and
    \begin{align*}
        f_1(x) &= b_{1} x + b_{2} \\
        f_2(x) &= \frac{x - \sigma(0)}{\sigma(1) - \sigma(0)}  = c_1 x + c_2 \\
        f_3(x) &= 1 + f_2(x) = c_1 x + (1 + c_2)
    \end{align*}
    for some constants $b_1, b_2, c_1, c_2 \in \R$ and observe
    \begin{align*}
        f_1&: \sigma(c^*) \mapsto 0, \quad \sigma(c^*-1) \mapsto - 1 \\
        f_2&: \sigma(0) \mapsto 0, \quad \sigma(1) \mapsto 1 \\
        f_3&: \sigma(0) \mapsto 1, \quad \sigma(1) \mapsto 2 \text{.}
    \end{align*}
    Note that we can construct $f_1, f_2, f_3$ by the first two properties of \Cref{def:gap-preserving-activation}.
    Set the weight matrices as
    \begin{equation*}
        W_{1} = \begin{pmatrix}
            I_{2d} & 0_{2d \times 1} & 0_{2d \times 1} \\
            0_{1 \times 2d} & 1 & 0 \\
            0_{1 \times 2d} & 0 & 2 \\
            0_{1 \times 2d} & -1 & c^* - 4H
        \end{pmatrix},\quad W_{2} = \begin{pmatrix}
            c_1 I_{2d} & 0_{2d \times 1} & 0_{2d \times 1} \\
            0_{1 \times 2d} & c_1 & 0 \\
            0_{1 \times 2d} & 0 & b_1 \\
            (c_2)_{1 \times 2d} & 1 + c_2 & b_2
        \end{pmatrix} \text{.}
    \end{equation*}
    Note that $W_{1}, W_{2}$ have $m + 1$ rows (instead of $m$) since the last row encodes the bias term of the linear function $W_1, W_2$.
    
    Since the first $2d$ columns of $X^{(0)}$ is $0$ or $1$, we ensure that the first $2d$ columns is preserved by the MLP, since $W_1$ is the identity map and $f_2$ maps $\sigma(0) = \frac{1}{2}$ to $0$ and $\sigma(1)$ to $1$.
    
    Similarly, since the $(2d + 1)$-th column is $1$ or $2$, we ensure that this is preserved by the MLP as well. 
    Here, observe that $W_1$ maps $1$ and $2$ to $0$ and $1$ respectively, while $f_3$ maps $\sigma(0)$ to $1$ and $\sigma(1)$ to $2$.

    Finally, consider an entry in the $(2d+2)$-th column.
    As before, if $a_i$ does not participate in an orthogonal triple with any $\set{c_{h, 1}}$ then $z_{i, 1} = 2H$ and otherwise, $z_{i, 1} \leq 2H - \frac{1}{2}$.
    Then, $\sigma(Y^{(1)}W_1)[i,m] = \sigma(c^*)$ in the former case and $\sigma(x) \leq \sigma(c^* - 1)$ in the latter case.
    In particular, the MLP layer outputs $X^{(1)}[i, m] = (\sigma(Y^{(1)}W_1)W_2)[i, m] = 0$ in the former case, and is at most $-1$ (since $f_1$ is increasing) in the latter case. 

    We can proceed inductively for subsequent layers.
    Note that with the exception of the $m$-th column, the identical proof as above shows that the contents of the input matrix $X^{(0)}$ are preserved.
    It suffices to show that the $m$-th column allows us to distinguish a yes-instance of $\kOVf{3}$ from a no-instance.
    We use the following inductive claim:
    for all $i$, if $a_i$ participates in an orthogonal triple with $c_{h, t}$ for some $h \in [H], t \leq \ell$, then $X^{(\ell)}[i, m] \leq -1$. Otherwise, $X^{(\ell)}[i, m] = 0$.
    Above we have established the base case with $\ell = 1$.
    
    Let $Y^{(\ell)}$ denote the output of the attention layer, we have
    \begin{equation*}
        Y^{(\ell)}[i, m] = X^{(\ell - 1)}[i, m] + \sum_{h} z_{i, h, \ell} \text{.}
    \end{equation*}
    In a no-instance, we have $X^{(\ell - 1)}[i, m] = 0$ (by induction) and $\sum_{h} z_{i, h, \ell} = 2H$.
    Following the proof of the base case, we conclude $X^{(\ell)}[i, m] = 0$.
    
    We now argue that in a yes-instance, whenever $a_{i}$ participate in an orthogonal triple with some $c_{h, t}$ for $t \leq \ell$, $X^{(\ell)}[i, m] \leq -1$.
    By induction, we either have $X^{(\ell - 1)}[i, m] \leq -1$ if $t < \ell$ or $\sum_{h} z_{i, h, \ell} \leq 2H - 0.5$. 
    Since $X^{(\ell - 1)}[i, m] \leq 0$ and $\sum_{h} z_{i, h, \ell} \leq 2H$ in either case, we have $Y^{(\ell)}[i, m] \leq 2H - 0.5$.
    Since $W_1$ applies an increasing function to the $m$-th column, we again have that the input to $\sigma$ is $c^*$ in the no-instance and at most $c^*-1$ in the yes instance.
    We then conclude the proof identically as in the base case.

    Thus, encoding MLPs as above, it suffices to examine the $m$-th column of the output of the transformer to compute $\kOVf{3}$.
    In a no instance, all the output entries should be $0$.
    In a yes instance, at least one entry will be less than $-1$.
\end{proof}

We now move on to \Cref{thm:linear-embedding-formal}, which we prove a separate class of activation functions (which sigmoid also satisfies).

\begin{definition}
    \label{def:differentiable-activation}
    An activation function $\sigma: \R \rightarrow \R$ is \eAC gradient efficient if the following hold:
    \begin{enumerate}
        \item $\sigma(0) \neq \sigma(1)$,
        \item $\sigma^{-1}$ can be implemented with an \eAC of size $O(1)$.
        \item $\sigma'(x) = h(\sigma(x))$ for some $h$ such that both $\sigma, h$ can be implemented with an \eAC of size $O(1)$.
    \end{enumerate}
\end{definition}

Note that the sigmoid function is \eAC gradient efficient as $\sigma'(x) = \sigma(x)(1 - \sigma(x))$ and $\sigma^{-1}$ is computable on the range of the sigmoid function $(0, 1)$.
We remark that in the event that the last layer has no MLP, the second condition is unnecessary.

\begin{proposition}
    \label{prop:large-embedding-sigmoid}
    There exists a transformer $T$ with input length $N + 1$, embedding dimension $m = 2N + 1$,  $L$ layers, $H$ attention heads per layer, input dimension $1$ and output dimension $N + 1$ satisfying the following:
    \begin{enumerate}
        \item $T$ is computable by an \eAC, and each embedding and MLP is computable by an \eAC of size $O(m)$.
        \item Any \eAC computing $T$ has size at least $L H N^{\omega - o(1)} - \bigO{LHN^2}$.
    \end{enumerate}
    The lower bound holds even when $T$ has MLPs with $\dhid = O(m)$ and \eAC gradient efficient activation functions.
\end{proposition}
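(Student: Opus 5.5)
We follow the proof of \Cref{thm:linear-embedding-formal}: build a transformer whose layers accumulate normalized row sums of $\exp(A_kB_k^{\T})$, then use the Extended Baur-Strassen Theorem (\Cref{thm:baur-strassen-extend}) to read off $LH$ independent matrix products. The new ingredient is the MLP $\Psi_\ell(x)=\sigma(xW_1)W_2$ after every layer. The idea is to let each MLP act, on the columns that matter, as $\sigma$ followed by an affine map; we cannot invert $\sigma$ exactly in general, so we accept a known smooth distortion. The point is that Baur-Strassen only needs a scalar function $f$ whose partial derivatives encode the desired entries. The distortion then enters only through chain-rule factors that we can compute ourselves with \eAC gates, using $\sigma'=h(\sigma)$ and $\sigma^{-1}$ from \Cref{def:differentiable-activation}.

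\textbf{Construction.} The input is a single column (input dimension $1$), and the first layer's embedding maps produce the constant skeleton needed by \Cref{lemma:square-mm} from weights: the identity block and the indicator columns $1_{N\times 1}$ and $e_{N+1}$. The matrices $A_k,B_k$ and the auxiliary variables $C_k,D_k$ are all encoded in the embedding weights, which \Cref{def:circuit-compute-tran} allows to vary freely.

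\textbf{Keeping the scaffold alive.} Each head is the one from \Cref{lemma:square-mm}, writing only into columns $1,\dots,N+1$ of its output. The scaffold columns hold only the values $0$ and $1$, so an affine map $x\mapsto c_1\sigma(x)+c_2$ with $\sigma(0)\neq\sigma(1)$ recovers them exactly, as in \Cref{prop:small-embedding-sigmoid}. For the accumulating columns we do not try to restore them. Instead we make the next layer independent of them: every head reads only the scaffold columns, as \Cref{lemma:square-mm} permits, and on the accumulator columns we take $W_1$ to be the identity. After layer $\ell$ an accumulator entry is then $\phi_\ell(\text{previous})=a\,\sigma(\text{previous}+\text{new heads})+b$ for fixed constants $a,b$. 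The final output therefore has the form $\Phi(u)$ in each accumulator coordinate, where $u$ is the vector of head outputs across layers and $\Phi$ is a nested composition of $\sigma$ and affine maps.

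\textbf{Extraction.} Given an \eAC of size $s$ for $T$, we take $f$ to be the sum of the first accumulator column and $g$ the analogous $D$-weighted sum, and apply \Cref{thm:baur-strassen-extend}. Consider $\partial f/\partial C_{ki_0j_0}$ with $k=(h,\ell)$. It equals the derivative of the head term from \Cref{thm:linear-embedding-formal}, multiplied by $\prod_{t\ge \ell}a\,\sigma'(y^{(t)}_{i_0})$, where $y^{(t)}_{i_0}$ is the pre-activation at layer $t$. This factor is common to all $j_0$ and to the matching $\partial g/\partial D_k[i_0]$. The plan is to cancel it by taking the ratio $\partial f/\partial C_{ki_0j_0}$ over $(\partial g/\partial D_k[i_0])^2$, which yields $\exp((A_kB_k^{\T})_{i_0j_0})$ times a leftover of one inverse factor. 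If the factor does not cancel fully, we compute it directly. We recover each $y^{(t)}$ from the layer outputs via $\sigma^{-1}$, and each $\sigma'(y^{(t)})=h(\sigma(y^{(t)}))$; we can tap the layer outputs since they are gates of the circuit. Computing all these factors for every row and layer takes $O(LN)$ gates, and the $\ln$ step takes $O(LHN^2)$ gates, so the total circuit size is $O(s+LHN^2)$. We then finish with \Cref{thm:ac-simulates-eac} and \Cref{lem:independent-product-lb}.

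\textbf{Main obstacle.} The hard step is ensuring that the derivative factors are exactly common across the variables we divide, so that the ratios cancel, and that the required intermediate quantities are truly available as circuit gates. This is not automatic: \Cref{def:circuit-compute-tran} only promises the circuit's final output, not its internal layer outputs. If that fails, the fallback is to compute the chain factors independently. We would run a second copy of the circuit whose weights zero out the later layers, which gives access to each $X^{(t)}$. This costs $O(L)$ extra copies, so it must be avoided, possibly by designating one extra column per layer that records the pre-activation through the invertible $\sigma$.
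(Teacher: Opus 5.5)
\textbf{The gap: the chain-rule factors.} The step that fails is obtaining the factors $\prod_{t\ge\ell}a\,\sigma'(y^{(t)})$.

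Your ratio does not cancel them. The function $g$ sums the diagonal entries $Y_{i,i+1}$, so $\pderiv{g}{D_k[i_0]}$ carries $\prod_{t\ge\ell}a\,\sigma'(y^{(t)}_{i_0,i_0+1})$. That factor is evaluated at pre-activations of column $i_0+1$. It is not the column-$1$ factor $\prod_{t\ge\ell}a\,\sigma'(y^{(t)}_{i_0,1})$ appearing in $\pderiv{f}{C_{ki_0j_0}}$. Even if the two agreed, one factor would survive, as you observe.

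So everything rests on computing the $\sigma'(y^{(t)})$ directly. Your method is to read the intermediate layer outputs off gates of the circuit, and that is not available. An arbitrary \eAC of size $s$ computing $T$ is only promised to output $T_W(X)$, and the $X^{(t)}$ need not appear in it.

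Your fallbacks do not rescue the bound either:
\begin{itemize}
\item $L$ truncated copies give size $O(Ls+LHN^2)$, hence only $s\ge HN^{\omega-o(1)}-O(HN^2)$.
\item A recording column per layer does not fit into embedding dimension $2N+1$ once $L\gg N$, and each recorded value is still distorted by every later MLP.
\end{itemize}

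\textbf{The missing idea: peel layers from the top.} This is the content of the paper's \Cref{lem:sigmoid-recover-derivatives}: the factors can be rebuilt from the final output alone. Take $a=1$, $b=0$ on the accumulator columns, as the paper does, and let $h$ be as in \Cref{def:differentiable-activation}.
\begin{itemize}
\item For $t=L$ the factor is $\sigma'(y^{(L)})=h(\sigma(y^{(L)}))=h(Y)$, a function of the output. Divide $\pderiv{f}{C_{ki_0j_0}}$ by $h(Y_{i_0,1})$ and $\pderiv{g}{D_k[i_0]}$ by $h(Y_{i_0,i_0+1})$. This gives exactly the head derivatives of \Cref{thm:linear-embedding-formal}, hence $A_kB_k^{\T}$ for all $k=(h',L)$.
\item These products, together with the fixed value of $D$, determine the layer-$L$ head outputs in column $1$ and on the diagonal. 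Therefore $X^{(L-1)}=\sigma^{-1}(Y)-\sum_{h'}Y^{(h',L)}$ and $\sigma'(y^{(L-1)})=h(X^{(L-1)})$.
\item In general $X^{(t-1)}=\sigma^{-1}(X^{(t)})-\sum_{h'}Y^{(h',t)}$ with $X^{(L)}=Y$. Updating the running product and repeating costs $O(HN^2)$ gates per layer, $O(LHN^2)$ in total.
\end{itemize}
This is exactly where the invertibility of $\sigma$ is needed.

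\textbf{An alternative repair of your ratio idea.} Divide two derivatives of $f$ in the same row instead. The whole factor, including $(1+S_{ki_0})^{-2}$, is common to all $j_0$. Reserve one key $j^*$ with zero $B$-part; then $\pderiv{f}{C_{ki_0j_0}}\big/\pderiv{f}{C_{ki_0j^*}}=\exp\bigl((A_kB_k^{\T})_{i_0j_0}\bigr)$. The price is computing $N\times N\times(N-1)$ products.

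\textbf{The one-column input.} Generating the scaffold from a one-column input is unjustified as written, for three reasons:
\begin{itemize}
\item embeddings are identical row-wise maps;
\item only $\sm(QK^{\T})V$ enters the residual stream;
\item a single column distinguishing the rows needs $N$ distinct values, which $x\mapsto c_1\sigma(x)+c_2$ does not preserve.
\end{itemize}
It is also unnecessary: the paper keeps the $0/1$-valued input of \Cref{thm:linear-embedding-formal}.
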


\begin{proof}
    We maintain the construction of \Cref{thm:linear-embedding-formal}, modifying only the MLPs.
    As in \Cref{prop:small-embedding-sigmoid}, each MLP weight matrix has an additional row to allow for a bias term in the linear function.
    We set
    \begin{equation*}
        W_1 = \begin{pmatrix}
            I_{N + 1} & 0_{(N + 1) \times (m' - (N + 1))} \\
            0_{(m' - (N + 1)) \times (N + 1)} & I_{m' - (N + 1)} \\
            0_{1 \times (N + 1)} & 0_{1 \times (m' - (N + 1))}
        \end{pmatrix}, W_2 = \begin{pmatrix}
            I_{N + 1} & 0_{(N + 1) \times (m' - (N + 1))} \\
            0_{(m' - (N + 1)) \times (N + 1)} & c_1 I_{m' - (N + 1)} \\
            0_{1 \times (N + 1)} & - (c_2)_{1 \times (m' - (N + 1))}
        \end{pmatrix}
    \end{equation*}
    where $c_1, c_2$ are constants as defined in \Cref{prop:small-embedding-sigmoid}.
    Note that to define $c_1, c_2$ we only require the property $\sigma(0) \neq \sigma(1)$.
    Thus, the MLP layer applies $\sigma$ entry-wise to the first $N + 1$ columns and maps $0, 1$ to $0, 1$ respectively in the last $m' - (N + 1)$ columns.
    Since $\sigma$ can be computed by an \eAC of $O(1)$ size, each MLP map can be computed by an \eAC of size $O(m)$, satisfying the first property.
    
    We first argue that the outputs of the attention heads as described in \Cref{lemma:square-mm} remain unchanged.
    Since the guarantees of \Cref{lemma:square-mm} holds as long as the last $m' - (N + 1)$ columns are the same as the input $X^{(0)}$, it suffices to argue that this is maintained after each layer, which is formalized in the following claim.

    \begin{claim}
        \label{clm:input-maintain-sigmoid}
        In the transformer defined in \Cref{thm:linear-embedding-formal} with MLPs specified by $W_1, W_2$ above, the output of each layer $\ell \geq 0$ satisfies $X^{(\ell)}[i, j] = X^{(0)}[i, j]$ for all $i \in [N + 1], j > N + 1$ i.e., the input is preserved in the last $m' - N + 1$ columns.
    \end{claim}

    \begin{proof}
        We proceed inductively.
        The base case holds trivially, so we consider the inductive case.
        From \Cref{lemma:square-mm}, each attention head outputs $0$ beyond the first $N + 1$ columns, so that the output of each attention layer is its exactly its input on these columns.
        Since our input to the transformer, denoted $X^{(0)}$, contains only $0, 1$ in these columns, the MLP computes the identity map on these values, and maintains the input beyond the first $N + 1$ columns, proving the inductive hypothesis.
    \end{proof}

    We now argue that we may recover the matrix products efficiently from an \eAC that computes the described transformer.
    Let $h$ be the function such that $\sigma'(x) = h(\sigma(x))$.
    In the following, assume $j \leq N + 1$, since we only need the entries from the first $N + 1$ columns of the output of the transformer to recover the matrix products.
    As in \Cref{prop:small-embedding-sigmoid}, we define
    \begin{equation*}
        Y^{(\ell)} := X^{(\ell - 1)} + \sum_{h} Y^{(h, \ell)}
    \end{equation*}
    where $Y^{(h, \ell)}$ denotes the output of the $h$-th head in the $\ell$-th layer.
    Recall that above we have shown $Y{(h, \ell)}$ are exactly the outputs of the corresponding attention head in the transformer constructed in \Cref{thm:linear-embedding-formal}.
    
    In particular, if $Y$ denotes the output of the transformer, we have for $j \leq N + 1$,
    \begin{align*}
        Y_{i, j} &= \sigma\left(Y^{(L)}_{i, j}\right) \\
        &= \sigma\left(X^{(L - 1)}_{i, j} + \sum_{h} Y^{(h, L)}_{i, j} \right)
    \end{align*}
    and for all $\ell > 0$ and $j \leq N + 1$,
    \begin{align*}
        X^{(\ell)}_{i, j} &= \sigma\left(Y^{(\ell)}_{i, j} \right) \\
        &= \sigma\left(X^{(\ell - 1)}_{i, j} + \sum_{h} Y^{(h, \ell)}_{i, j} \right) \text{.}
    \end{align*}
    In contrast to when there is no MLP, where $Y_{i, j} = \sum_{h, \ell} Y^{(h, \ell)}_{i, j}$ we now have
    \begin{equation}
        \label{eq:sigmoid-mlp-output}
        Y_{i, j} = \sigma\left(\sum_{h} Y^{(h, L)}_{i, j} + \sigma \left( \sum_{h} Y^{(h, L - 1)}_{i, j} + \sigma \left( \dots \right) \right) \right) \text{.}
    \end{equation}
    Recall that in the proof of \Cref{thm:linear-embedding-formal} we use the outputs of the transformer to compute the following two functions:
    \begin{align*}
        f\left( \set{A_{k}, B_{k}, C_{k}, D_{k}}_{k = 1}^{LH} \right) &= \sum_{i = 1}^{N} Y_{i, 1} \\
        g\left( \set{A_{k}, B_{k}, C_{k}, D_{k}}_{k = 1}^{LH} \right) &= \sum_{i = 1}^{N} Y_{i, i + 1} \text{.}
    \end{align*}
    
    We now describe how to recover $LH$ independent matrix products $A_{k}B_{k}^{\T}$, thus completing the proof.
    This is formalized in the following inductive lemma.

    \begin{lemma}
        \label{lem:sigmoid-recover-derivatives}
        Recall that we index $k \in [LH]$ as $(h, \ell)$.
        For all $0 \leq \ell \leq L$, there is an \eAC of size $O(H(L - \ell)N^2)$ that computes 
        $\set{A_{k} B_{k}^{\T}}$ for $k = (h, p)$ with $p \geq \ell$, 
        $\set{\prod_{p = t}^{L} \sigma'\left(Y^{(p)}_{i_0, 1}\right)}_{t \in [\ell - 1, L], i_0}$, 
        and $\set{\prod_{p = t}^{L} \sigma'\left(Y^{(p)}_{i_0, i_0 + 1}\right)}_{t \in [\ell - 1, L], i_0}$.
    \end{lemma}

    \begin{proof}
        For the base case, fix $\ell = L$.
        Then, for fixed $k = (h_0, L)$ and $i_0, j_0$ we have
        \begin{align*}
            \pderiv{f}{C_{ki_0j_0}} &= \pderiv{}{C_{ki_0j_0}} \sum_{i = 1}^{N} Y_{i, 1} \\
            &= \pderiv{}{C_{ki_0j_0}} Y_{i_0, 1} \\
            &= \sigma' \left( Y^{(L)}_{i_0,1} \right) \pderiv{Y^{(h_0, L)}_{i_0, 1}}{C_{ki_0j_0}}  \text{.}
        \end{align*}
        In the second equality, we keep only summands depending on $C_{ki_0j_0}$ and in the third, we apply the chain rule and observe that only $Y^{(h_0, \ell)}_{i_0, 1}$ in $Y_{i_0, 1}^{(L)}$ depends on $C_{ki_0j_0}$.
        Since $\sigma$ is \eAC gradient efficient, we can compute $\sigma'\left(Y_{i_{0}, 1}^{(L)}\right)$ with $h\left(\sigma\left(Y_{i_{0}, 1}^{(L)}\right)\right) = h(Y_{i_0, 1})$ using the output of the transformer.
        Over all $h \in [H]$ and $i_0 \in [N]$, we use an \eAC of size $O(HN)$.
        In particular, we can compute
        \begin{equation*}
            \pderiv{Y^{(h_0, L)}_{i_0, 1}}{C_{ki_0j_0}} = \frac{\pderiv{f}{C_{ki_0j_0}}}{h(Y_{i_0, 1})} \text{.}
        \end{equation*}

        We make a similar argument for $g$ as follows.
        \begin{equation*}
            \pderiv{g}{D_{ki_0}} = \pderiv{}{D_{ki_0}} Y_{i_0, i_0 + 1} = \sigma'(Y_{i_0, i_0 + 1}^{(L)}) \pderiv{Y^{(h_0, L)}_{i_0, i_0 + 1}}{D_{ki_0}} \text{.}
        \end{equation*}
        An identical proof allows us to recover $\pderiv{Y^{(h_0, L)}_{i_0, i_0 + 1}}{D_{ki_0}}$ using an \eAC of size $O(HN)$.

        Since in \Cref{thm:linear-embedding-formal}, we showed how to compute the products from $\pderiv{Y^{(h, \ell)}_{i_01}}{C_{ki_0j_0}}$ and $\pderiv{Y^{(h, \ell)}_{i_0,i_0+1}}{D_{ki_0}}$, we can use the above derivatives to recover the matrix products $A_{k} B_{k}^{\T}$ with an \eAC of size $O(N^2)$.
        Repeating the above argument for all $H$ gives us a $O(HN^2)$ \eAC that computes all matrix products in the $L$-th layer.
        Given the resulting matrix products, we can compute all $Y^{(h, L)}_{i_0, 1}$ with an \eAC of size $O(HN^2)$ by summing the rows of each matrix product.
        Then, since
        \begin{equation*}
            Y_{i_0, 1} = \sigma\left( \sum_{h} Y_{i_0, 1}^{(h, L)} + \sigma \left( Y_{i_0, 1}^{(L - 1)} \right) \right)
        \end{equation*} 
        we can compute
        \begin{equation*}
            \sigma'\left(Y_{i_0, 1}^{(L - 1)}\right) = h\left(\sigma \left( Y_{i_0, 1}^{(L - 1)} \right) \right) = h \left( \sigma^{_1}(Y_{i_0, 1}) - \sum_{h} Y_{i_0, 1}^{(h, L)} \right)
        \end{equation*}
        for all $i_0$ with an \eAC of size $O(HN)$.
        Note that this is only step where we require $\sigma$ to be invertible.
        In particular, we have described how to compute the following quantities:
        \begin{equation*}
            \set{\sigma'\left(Y^{(L - 1)}_{i_0, 1}\right) \sigma'\left(Y_{i_{0, 1}}^{(L)}\right)}_{i_0}\text{, } \set{A_{k}B_{k}^{\T}}_{k = (h, L)}
        \end{equation*}
        using an \eAC of size $O(H N^2)$, completing the base case.
    
        Now, consider the inductive case $\ell < L$.
        We argue that all additional required values can be computed with an \eAC of size $O(HN^2)$.
        Applying the chain rule, for $k \gets (h_0, \ell), i_0, j_0$, we have
        \begin{align*}
            \pderiv{f}{C_{ki_0j_0}} &= \pderiv{}{C_{ki_0j_0}} Y_{i_0, 1} \\
            &= \left( \prod_{t = \ell}^{L} \sigma' \left( Y^{(t)}_{i_0,1} \right) \right) \pderiv{Y^{(h_0, \ell)}_{i_0, 1}}{C_{ki_0j_0}} 
        \end{align*}
        following similar logic to the base case.
        By the inductive hypothesis, we have $\prod_{t = \ell}^{L} \sigma' \left( Y^{(t)}_{i_0,1} \right)$ from the previous $(\ell + 1)$-th iteration. 
        Thus, with an \eAC of size $O(HN)$ we can compute for all $i_0$:
        \begin{equation*}
            \pderiv{Y^{(h_0, \ell)}_{i_0, 1}}{C_{ki_0j_0}}  = \frac{\pderiv{f}{C_{ki_0j_0}}}{\prod_{t = \ell}^{L} \sigma' \left( Y^{(t)}_{i_0,1} \right)} \text{.}
        \end{equation*}
        From here, we again follow the proof of \Cref{thm:linear-embedding-formal} to recover
        \begin{equation*}
            \set{A_{k} B_{k}}_{k = (h, \ell)}
        \end{equation*}
        and sum up matrix rows to obtain $\set{Y_{i_0, 1}^{(\ell)}}_{i_0}$ with an \eAC of size $O(HN^2)$.
        Then, since
        \begin{equation*}
            Y_{i_0, 1}^{(\ell)} = \sum_{h} Y_{i_0, 1}^{(h, \ell)} + \sigma \left( Y_{i_0, 1}^{(\ell - 1)} \right)
        \end{equation*}
        we can compute
        \begin{equation*}
            \sigma' \left( Y_{i_0, 1}^{(\ell - 1)} \right) = h\left( \sigma \left( Y_{i_0, 1}^{(\ell - 1)} \right) \right) = h \left( Y_{i_0, 1}^{(\ell)} - \sum_{h} Y_{i_0, 1}^{(h, \ell)} \right)
        \end{equation*}
        for all $i_0$ using an \eAC of size $O(HN)$.
        Using our inductive hypothesis, we compute for all $i_0$,
        \begin{equation*}
            \prod_{t = \ell - 1}^{L} \sigma' \left( Y^{(t)}_{i_0,1} \right) = \sigma' \left( Y_{i_0, 1}^{(\ell - 1)} \right) \prod_{t = \ell}^{L} \sigma' \left( Y^{(t)}_{i_0,1} \right)
        \end{equation*}
        with $O(N)$ size \eAC, completing the inductive case.
        A similar argument holds for the partial derivatives of $g$ with respect to $D_{ki_0}$.
    \end{proof}

    The above lemma allows us to conclude the proof, computing all $LH$ matrix products.
\end{proof}

\section{Head Aggregation}
\label{app:head-aggregation}

Consider a transformer with $L$ layers of $H$ heads with embedding dimension $m$.
There are two standard ways to aggregate the outputs of the heads in the $\ell$-th layer, denoted $X^{(\ell)}$, given the input $X^{(\ell - 1)}$: summation and concatenation.
Summation aggregation is described in \Cref{def:tran}.
In concatenation aggregation (see e.g. \cite{DBLP:conf/nips/VaswaniSPUJGKP17}), each attention head maps each row of dimension $m$ into dimension $m/H$. 
The output is then
\begin{equation*}
    X^{(\ell)} = \Psi_{\ell} \left( X^{(\ell - 1)} + \begin{pmatrix}
        \Att_{Q_{1}, K_{1}, V_{1}}\left(X^{(\ell - 1)}\right) & \Att_{Q_{2}, K_{2}, V_{2}}\left(X^{(\ell - 1)}\right) & \dots & \Att_{Q_{H}, K_{H}, V_{H}}\left(X^{(\ell - 1)}\right) 
    \end{pmatrix} \right)
\end{equation*}
where $\Psi$ denotes the MLP applied after concatenation.

We show such a transformer can simulate a transformer with summation aggregation and embedding dimension $m/H$, so that any lower bound against transformers with summation aggregation and embedding dimension $m/H$ also hold against transformations with concatenation aggregation and embedding dimension $m$.

\begin{proposition}
    \label{prop:concat-sims-sum}
    Assume $H \leq m$ divides $m$.
    Suppose $T$ is a transformer with input length $N$, embedding dimension $m/H$, $L$ layers, $H$ heads per layer, and summation aggregation.

    Then, there is a transformer $T'$ with input length $N + 1$, embedding dimension $m + H$, $L$ layers, $H$ heads per layer and concatenation aggregation such that there is a linear function $f: \R^{m} \rightarrow \R^{m}$ such that $f(T'(X)) = T(X)$ for all inputs $X$.
    Furthermore, the following hold:
    \begin{enumerate}
        \item If $T$ is computable by an \eAC, so is $T'$.
        \item If each embedding map of $T$ is computable by an \AC of size $O(m/H)$, then all embedding maps of $T$ in one layer are computable by an \AC of size $O(m)$.
        \item If $T$ has no MLPs, then neither does $T'$.
        \item If $T$ has MLPs or GLU MLPs with $\dhid = O(m/H)$ and activation function $\sigma$, then $T'$ has an MLP with $\dhid = O(m)$ and activation function $\sigma$ (see \Cref{def:typical-mlp} and \Cref{def:glu}).
    \end{enumerate}
\end{proposition}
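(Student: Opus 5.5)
The plan is a block-diagonal embedding: each of the $H$ heads of $T'$ acts on its own block of the wide residual stream, and all blocks carry identical copies of the state of $T$. Write the embedding space of $T'$ (dimension $m+H$) as $H$ blocks of width $m/H+1$. Block $h$ holds one copy of the $m/H$-dimensional state of $T$, plus one auxiliary coordinate. Maintaining $H$ identical copies lets each head of $T'$ read the full state of $T$ from its own block and write back into the same block. The extra token and the auxiliary coordinates are used to undo the softmax normalization, in the style of \Cref{lem:denormalized-attn}. The concatenation of the heads then places head $h$'s output in block $h$ only.

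The obstacle is that summation aggregation requires every copy to receive $\sum_h f_{\ell,h}$, whereas under concatenation block $h$ receives only $f_{\ell,h}$.

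\textbf{Case with MLPs (first choice).} The MLP $\Psi'_\ell$ is applied after concatenation, so a linear layer can first sum the $H$ blocks and then broadcast the sum back to all blocks.
\begin{itemize}
\item Let $\Psi'_\ell$ start with the linear map $x\mapsto (P x,\dots,P x)$, where $P$ sums the blocks with appropriate coefficients. Since the residual $X^{(\ell-1)}$ is also replicated, $P$ divides the residual part by $H$, which keeps it exactly once.
\item Follow this with a copy of $\Psi_\ell$ applied blockwise.
\item A linear map folds into the first weight matrix $W_1$ of a standard MLP or GLU, so $T'$ has an MLP of the same type with $\dhid=O(m)$ and activation $\sigma$. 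This gives item 4.
\end{itemize}

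\textbf{No-MLP case (item 3).} Here $T'$ cannot use an MLP, so the sum must be formed by the attention heads themselves, or the copies are allowed to drift apart. I would try the following invariant instead: block $h$ of $X'^{(\ell)}$ holds the accumulated output of head $h$ alone. The true state of $T$ is then $X^{(0)}+\sum_h(\text{block }h)$, a fixed linear function of $X'^{(\ell)}$.
\begin{itemize}
\item Each embedding map is row-wise linear, so a head of $T'$ can compute $Q,K,V$ of $T$ at that true state by precomposing with this summing linear map.
\item This costs $O(m)$ per head rather than $O(m/H)$. Item 2 only asks for $O(m)$ per layer, so I would have to check whether that budget is tight or needs a looser accounting.
\item The auxiliary coordinates carry the extra token, in the style of \Cref{lem:denormalized-attn}.
\item At the end, the required linear $f$ sums the blocks (and drops the auxiliary coordinates). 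This gives $f(T'(X))=T(X)$.
\end{itemize}
This invariant also covers the general case if the MLPs are placed appropriately.

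\textbf{Remaining items.} Item 1 holds because every added operation is linear. For item 2, all embeddings in one layer together are a linear map applied to rows of width $m+H$ and have $O(m)$ total size per row, provided the summation is shared across the $H$ heads.

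I expect the main obstacle to be bookkeeping in the no-MLP case: double counting of the residual term, the precise role of the extra token and coordinates, and meeting the size bound in item 2.
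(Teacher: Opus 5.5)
\textbf{Gap 1: the map $P$ in your MLP case cannot work as written.} After the residual add, block $h$ of a row is $y+f_{\ell,h}$, where $y$ is the replicated state. The residual copy and the head output occupy the same coordinates, so no linear $P$ can ``divide the residual part by $H$'' separately. Suppose $P=\sum_h P_h(\text{block } h)$ gave $\sum_h P_h(y+f_{\ell,h})=y+\sum_h f_{\ell,h}$ for a general $T$, where $y$ and the head outputs vary independently. Then you would need $P_h=I$ for every $h$ and also $\sum_h P_h=I$, which is impossible for $H\geq 2$. Uniform weights $\frac1H$ keep the residual once but shrink the head sum to $\frac1H\sum_h f_{\ell,h}$.

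You need one of two extra ingredients. Either scale every value embedding by $H$, so that $\frac1H\sum_h(\text{block } h)=y+\sum_h f_{\ell,h}$ exactly. Or drop replication altogether: each head's embedding map sees the whole row, so it need not read from its own block. The paper does the latter. It keeps $Y^{(\ell-1)}$ in block $1$ with zeros elsewhere and uses $Q_h\circ\pi$, with $\pi$ the projection onto block $1$. Its $W_1'$ sums the blocks before applying $W_1$, and its $W_2'$ writes the MLP output back into block $1$ only. Your closing remark that the no-MLP invariant covers the general case ``if the MLPs are placed appropriately'' is essentially this route, and it should have been your primary one.

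\textbf{Gap 2: the extra token and auxiliary coordinates are harmful, not merely unnecessary.} There is no normalization to undo. With embeddings $Q_h\circ g$, $K_h\circ g$, $V_h\circ g$, where $g$ sums the blocks, head $h$ of $T'$ computes exactly $\sm(Q_h(Y)K_h(Y)^{\T})V_h(Y)$ on the true state $Y$. This is the same softmax head as in $T$. An extra token that participates in attention adds $\exp(q_i\cdot k_{N+1})>0$ to every row's denominator. Each output row then becomes a strict convex combination of the correct row and $v_{N+1}$. Removing this requires a division, which is what \Cref{lem:denormalized-attn} needs its output MLP for. No such division is available in the no-MLP case when $f$ must be linear. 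The paper's proof never uses the extra token or the extra $H$ coordinates; it works with $N$ tokens and width $m$, i.e.\ $H$ blocks of width $m/H$.

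\textbf{Smaller fix: store $X^{(0)}$ in the residual stream.} If the blocks start at zero, $X^{(0)}+\sum_h(\text{block } h)$ is not a function of $X'^{(\ell)}$, and $f$ cannot recover $T(X)$. Initialize block $1$ with $X^{(0)}$, so that the state is simply $\sum_h(\text{block } h)$.

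With these fixes, your no-MLP argument is exactly the paper's Case~1: precompose each embedding with the shared map $g$ (this needs no linearity of $Q,K,V$), take $f=g$, and pay $O(m)$ for $g$ plus $H\cdot O(m/H)$ for the heads.
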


In particular, \Cref{thm:small-embedding-formal} implies that whenever $m/H=N^{o(1)}$, the standard $LHN^{2 + o(1)}$ time algorithm to compute transformers with concatenation aggregation is optimal, and \Cref{thm:linear-embedding-formal} implies that whenever $m/H=\Theta(N)$, the corresponding $LHN^{\omega + o(1)}$ time algorithm is optimal.
Furthermore, the reduction preserves whatever assumptions we make on the MLPs, either in the main body or \Cref{app:mlp-examples}.

\begin{proof}
    We show that each layer can be simulated. Let $Y^{(\ell)}$ be the output of the $\ell$-th layer of $T$. In addition, let $Y^{(\ell, h)}$ be the output of the $h$-th attention head in the $\ell$-th layer of $T$ before the summation. By definition, $Y^{(\ell)} = \Psi_{\ell} \left(Y^{(\ell - 1)} + \sum_{h} Y^{(\ell, h)}\right)$ for $\ell \ge 1$ (in the case where $T$ does not have MLP, $\Psi_{\ell}$ is simply the identity map). 
    Consider a single layer of $T'$ with input $X^{(\ell - 1)}$, where we view it as a concatenation of $H$ blocks:
    \begin{equation*}
        \begin{pmatrix}
        X^{(\ell - 1)}_{1} & X^{(\ell - 1)}_{2} & \dots & X^{(\ell - 1)}_{H} 
        \end{pmatrix} \text{.}
    \end{equation*}
    We define the input of the transformer $T'$ as $X_{1}^{(0)} = Y^{(0)}$ and $X_{h}^{(\ell - 1)} := 0_{N \times (m/H)}$ for all $h > 1$.
    We will prove two cases separately: when $T$ has no MLPs, and when $T$ has MLPs with $\dhid = O(m)$ and activation function $\sigma$.

    \paragraph{Case 1: $T$ has no MLPs.}
    We proceed with the inductive hypothesis 
    \begin{equation*}
        \sum_{h = 1}^{H} X_{h}^{(\ell - 1)} = Y^{(\ell - 1)} \text{.}
    \end{equation*}
    That is, the sum of all the blocks is the output of the $(\ell - 1)$-th layer of $T$.
    Note that the base case is satisfied by construction.
    
    Consider the $h$-th attention head.
    Fix an attention head of $T$ with embedding maps $Q_{h}, K_{h}, V_{h}: \R^{N \times (m/H)} \rightarrow \R^{N \times (m/H)}$ (recall that $T$ has embedding dimension $m/H$). 
    We design attention heads with similar embedding maps.
    Define the map $g: \R^{m} \rightarrow \R^{m/H}$ where
    \begin{equation*}
        g: v \mapsto v \begin{pmatrix}
            I_{m/H} \\
            \vdots \\
            I_{m/H}
        \end{pmatrix} := v W_{g}
    \end{equation*}
    so that $g(v)_{i} = \sum_{i \equiv i' \mod (m/H)} v_{i}$.
    
    If $T$ has no MLPs, we design embedding maps $Q'_{h}, K'_{h}, V'_{h}: \R^{N \times m} \rightarrow \R^{N \times (m/H)}$ as
    \begin{align*}
        Q'_{h} = Q_{h} \circ g, K'_{h} = K_{h} \circ g, V'_{h} = V_{h} \circ g \text{.}
    \end{align*}
    Note here that if $Q_{h}: X \rightarrow X W_{Q_{h}}$ can be described as a linear function, so can $Q'_{h}: X \rightarrow X W_{g} W_{Q_{h}}$ (and analogously for $K_{h}, V_{h}$).
    Under the inductive hypothesis, we have that
    \begin{equation*}
        Q'_{h}\left( \begin{pmatrix}
        X^{(\ell - 1)}_{1} & X^{(\ell - 1)}_{2} & \dots & X^{(\ell - 1)}_{H} 
        \end{pmatrix} \right) = Q_{h} \left( \sum_{h}  X^{(\ell - 1)}_{h} \right) = Q_{h} (Y^{(\ell - 1)})
    \end{equation*}
    is exactly the embedding computed by the $h$-th head of the $\ell$-th layer of $T$.
    A similar argument holds for $K'_{h}, V'_{h}$.
    As a result, the concatenated attention heads outputs
    \begin{equation*}
        \begin{pmatrix}
            Y^{(\ell, 1)} & \dots & Y^{(\ell, H)}
        \end{pmatrix} \text{.}
    \end{equation*}
    The output of the layer is then
    \begin{equation*}
        \begin{pmatrix}
            Y^{(\ell, 1)} + X^{(\ell - 1)}_{1} & \dots & Y^{(\ell, H)} + X^{(\ell - 1)}_{H}
        \end{pmatrix} 
    \end{equation*}
    so that
    \begin{equation*}
        \sum_{h} Y^{(\ell, h)} + X^{(\ell - 1)}_{h} = Y^{(\ell - 1)} + \sum_{h} Y^{(\ell, h)} = Y^{(\ell)}
    \end{equation*}
    satisfying the inductive hypothesis.
    
    At the final layer, the transformer $T'$ outputs
    \begin{equation*}
        \begin{pmatrix}
            Y^{(L, 1)} & Y^{(L, 2)} & \dots & Y^{(L, H)}
        \end{pmatrix}
    \end{equation*}
    so that $g(T'(X)) = T(X)$.
    To conclude the proof, note that $g$ is easily implemented by an \AC (and therefore \eAC) of size $O(m)$.
    Since every embedding map $Q_{h}, K_{h}, V_{h}$ takes the same input (i.e. the output of $g$), we can implement all $H$ embedding maps with an \AC of size $O(m)$.

    \paragraph{Case 2: $T$ has MLPs or GLU MLPs with $\dhid = O(m/H)$ and $\sigma$ activation.}
    If $T$ has MLPs with hidden dimension $\dhid = O(m)$ and activation function $\sigma$, we instead proceed with the inductive hypothesis 
    \begin{equation*}
        X_{1}^{(\ell - 1)} = Y^{(\ell - 1)} \text{ and } X_{h}^{(\ell - 1)} = 0_{N \times m} \text{ for $h > 1$.}
    \end{equation*}
    That is, $X_{1}^{(\ell - 1)}$ is the output of the $(\ell - 1)$-th layer of $T$ and the remaining blocks are $0$.
    The base case again holds by definition.

    Consider the $\ell$-th layer.
    Define the maps $\pi: \R^{m} \rightarrow \R^{m/H}$ as 
    \begin{equation*}
        \pi: v \mapsto v \begin{pmatrix}
            I_{m/H} \\
            0_{(m - m/H) \times (m/H)}
        \end{pmatrix} = (v_1, \dots, v_{m/H}) \text{.}
    \end{equation*}
    Furthermore, for $a \leq b$, define $i: \R^{a} \rightarrow \R^{b}$ as the map $i: v \mapsto (v, 0_{b - a})$ as the injective map with padding $0$.
    For every fixed attention head $h$, define $Q_{h}' = Q_{h} \circ \pi$ and $K_{h}', V_{h}'$ analogously.
    Here, observe that $Q_{h}', K_{h}', V_{h}'$ are embeddings that can in fact be implemented with \AC of size $O(m/H)$, so all embedding maps can be implemented with an \AC Of size $O(m)$.
    
    Now, for the MLP, if $T$ has MLPs given by $W_1: \R^{(m/H) + 1} \rightarrow \R^{\dhid}, W_2: \R^{\dhid + 1} \rightarrow \R^{m}, \sigma$, we let $T'$ have MLPs given by
    \begin{equation*}
        W_1'(X) := 
        \begin{pmatrix}
            X & 1
        \end{pmatrix} \begin{pmatrix}
            I_{m/H} & 0_{(m/H) \times 1 }\\
            \vdots & \vdots \\
            I_{m/H} & 0_{(m/H) \times 1 }\\
            0_{1 \times (m/H)} & 1
        \end{pmatrix} W_{1}, \quad W_{2}'(X) := \begin{pmatrix}
            \begin{pmatrix}
                X & 1
            \end{pmatrix} W_{2} 
            & 0_{1 \times (m - (m/H))}
        \end{pmatrix}
    \end{equation*}
    and $\sigma$ activation function.
    Note that both $W_1, W_2$ can be implemented as \eAC of size $O(m)$.

    We claim that the construction yields the desired result.
    By the inductive hypothesis, the output of the attention heads (after concatenation) in the $\ell$-th layer is
    \begin{equation*}
        Z := \begin{pmatrix}
            Y^{(\ell, 1)} & Y^{(\ell, 2)} & \dots & Y^{(\ell, H)}
        \end{pmatrix} \text{.}
    \end{equation*}
    In particular, the first linear map of the MLP $W_1'$ computes
    \begin{align*}
        W_{1}'\left( Y^{(\ell - 1)} + Z \right) &= W_{1}' \left( \begin{pmatrix}
            Y^{(\ell, 1)} + Y^{(\ell - 1)} & Y^{(\ell, 2)} & \dots & Y^{(\ell, H)}
        \end{pmatrix} \right) \\
        &= \begin{pmatrix}
            Y^{(\ell - 1)} + \sum_{h} Y^{(\ell, h)} & 1_{N \times 1}
        \end{pmatrix} W_{1} 
    \end{align*}
    is exactly the input to activation function $\sigma$ computed by $T$.
    This completes the proof of the inductive case, as the $W_{2}'$ outputs the output of the MLP into $X_{1}^{(\ell)}$ and $0$ everywhere else.
    A similar modification applies to GLU MLPs by modifying $W_{0}$ analogously to $W_{1}$.
\end{proof}

\end{document}